\begin{document}

\newcommand{\ignore}[1]{}
\newcommand{\grt}{>}
\newcommand{\rep}{\leftarrow}
\newcommand{\rew}{\rightarrow}
\newcommand{\eqv}{\leftrightarrow}
\newcommand{\gr}{Gr\"obner\,}
\newcommand{\Groebner}{Gr\"obner\,}
\newcommand{\Grobner}{Gr\"obner\,}
\newcommand{\groebner}{Gr\"obner\,}
\newcommand{\flatten}[1]{\overline #1}
\newcommand{\mset}[1]{\{\!\!\{ #1 \}\!\!\}}

\SetKwInput{kwRequires}{Requires}
\SetKwInput{kwEnsures}{Ensures}
\SetKwInput{kwInput}{Input}
\SetKwInput{kwOutput}{Output}
\SetKw{kwGoto}{Go to}

\DeclarePairedDelimiter{\set}{\{}{\}}


\title[Modularity, Combination, AC Congruence Closure]{
Modularity and Combination of Associative Commutative Congruence Closure Algorithms enriched with Semantic Properties}


\author{Deepak Kapur}
\begin{center}
\address{Department of Computer Science\\
University  of New Mexico, Albuquerque, NM, USA}

\email{kapur@unm.edu}

\thanks{Research partially supported by the NSF award: CCF-1908804.}
\end{center}








\sloppy

\keywords{Congruence Closure, Canonical Forms, Canonical Rewrite Systems,  Associative-Commutative, Idempotency, Nilpotency, Cancelation, Group, \Groebner basis.}

\ignore{Issues: In combination using lexicographic ordering, how many new constants need to be added.}

\maketitle

\begin{abstract}
Algorithms for computing congruence
closure of ground equations over uninterpreted symbols and
interpreted symbols satisfying associativity and commutativity (AC) properties 
are proposed. The algorithms are based on a framework for computing a
congruence closure as a canonical ground rewrite system in which nonflat
terms are abstracted by new 
constants as proposed first in Kapur's congruence closure 
algorithm (RTA97). The framework
is general, flexible, and has been extended also to develop congruence
closure algorithms for the cases when associative-commutative
function symbols can have additional properties including idempotency, nilpotency,
identities, cancelativity and group properties as well as their various combinations. Algorithms are modular; their correctness
and termination proofs are simple, exploiting modularity. Unlike earlier algorithms, the proposed algorithms neither rely on complex AC compatible well-founded orderings on nonvariable terms nor need to use the associative-commutative unification and extension rules in completion for generating canonical rewrite systems for congruence closures. They are particularly suited for integrating into the Satisfiability modulo Theories (SMT) solvers. It is shown that a \Groebner basis algorithm for polynomial ideals with integer coefficients can be formulated as a combination  algorithm for the congruence closure over an Abelian group with $+$ and the congruence closure over the AC symbol $*$ with the identity 1, linked using the distributivity property of $*$ over $+$.
\end{abstract}

\vspace*{-2mm}
\section{Introduction}
\label{sec:introduction}
\vspace*{-1mm}

Equality reasoning arises in many applications
including compiler optimization, functional languages, and
reasoning about data bases, most importantly, reasoning about
different aspects of software and hardware. The significance of the congruence closure algorithms
on ground equations in compiler optimization and verification
applications was recognized in the mid
70's and early 80's, leading to a number of algorithms for computing the
congruence closure of ground equations on uninterpreted function symbols
\cite{DST80,Shostak78,NO80}. 
Whereas congruence closure algorithms were
implemented in earlier verification systems \cite{Shostak78,NO80,RRL,context}, their role has become particularly critical in Satisfiability modulo Theories (SMT) solvers as 
a glue to combine different decision procedures for various theories. More recently, many new uses of the congruence closure are being explored in the {\bf egg} project at the University of Washington \cite{egg2021}.

We present algorithms for the congruence closure of ground equations which
in addition to uninterpreted function symbols, have symbols with
the associative (A) and commutative (C) properties. 
Using these algorithms, it can be decided whether another ground equation follows
from a finite set of ground equations with associative-commutative (AC) symbols as well as uninterpreted symbols. 
Canonical forms (unique normal forms) can be associated with congruence classes. 
Further, a unique reduced ground rewrite system serves as a presentation of the congruence closure of a finite set of ground equations, which allows checking whether two different
finite sets of ground equations define the same congruence closure or if one is contained
in the other. In the presence of disequations on ground terms with AC and uninterpreted
symbols, a finite set of ground equations and disequations can be checked for satisfiability.

The main contributions of the paper are (i) a modular combination
framework for the congruence closure of ground equations with multiple AC symbols, uninterpreted symbols, and constants, leading to (ii) modular and simple algorithms that can use flexible termination orderings on ground terms and do not need to use AC/E unification algorithms for generating canonical forms; (iii) the
termination and correctness proofs of these algorithms are modular and easier.
The key insights are based on extending the author's previous work presented in \cite{KapurRTA97,KapurJSSC19}:
introduction of new constants for nested subterms, resulting in flat and constant equations, extended to purification of mixed subterms with many AC symbols; AC ground terms are flattened and new constants for pure AC terms in each AC symbol are introduced, resulting in disjoint subsets of ground equations on single AC symbols with shared constants.
The result of this transformation is
a finite union of disjoint subsets of
ground equations with shared constants: (i) a finite set of constant equations, (ii) a finite set of flat equations with uninterpreted symbols, and (iii) for each AC symbol, a finite set of equations on pure flattened terms in a single AC symbol.

With the above decomposition, reduced canonical rewrite systems are generated for each of the subsystems using their respective termination orderings that extend a common total ordering on constants. A combination of the reduced canonical rewrite systems is achieved by propagating constant equalities among various rewrite systems; whenever new implied constant equalities are generated, each of the reduced canonical rewrite systems must be updated with additional computations to ensure their canonicity. Due to this modularity and factoring/decomposition, the termination and correctness proofs can be done independently of each subsystem, providing considerable flexibility in choosing termination orderings. 

The combination algorithm terminates when no additional implied constant equalities are generated. Since there are only finitely many constants in the input and only finitely many constants are needed for purification, the termination of the combination algorithm is guaranteed. The result is a reduced canonical rewrite system corresponding to the AC congruence closure of the input ground equations, which is unique for a fixed family of total orderings on constants and different
pure AC terms in each AC symbol. The reduced canonical rewrite system can be used to generate canonical signatures of ground terms with respect to the congruence closure.

\ignore{

This is achieved by using an algorithm to generate a reduced canonical rewrite system for terms constructed using a single AC symbol and constants. Such a reduced canonical rewrite system is made disjoint by ensuring that if the left side of a rewrite rule is a constant, then the right side is also a constant.

In this way, different reduced canonical rewrite systems for flattened terms with different AC symbols are disjoint except for sharing constants. These disjoint reduced canonical AC rewrite systems are combined with a reduced canonical rewrite system for congruence closure in uninterpreted symbols; once again, this rewrite system also only shares constants with other rewrite systems. The result is a combined reduced canonical rewrite system in which all rules other than the constant rules have function terms on their left sides such that no two different left sides are identical, after the constants in all rules have been reduced to their canonical forms using constant rewrite rules. 

This generalizes the approach
of \cite{KapurRTA97}.}

The framework provides flexibility
in choosing orderings on constants and terms with different AC symbols, enabling canonical forms suitable for applications instead of restrictions imposed due to the congruence closure algorithms. Interpreted AC symbols can be further enriched with properties
 including commutativity, idempotency, nilpotency, existence of
identities, cancelation and group properties, without 
restrictions on orderings on mixed terms.
Of particular interest is a new algorithm for generating congruence closure of ground equations with a cancelative AC symbol; this is discussed in detail as a separate topic because of new kinds of critical pairs, called {\bf cancelative disjoint superposition}, needed to generate a canonical rewrite system. 
Termination and correctness proofs of these congruence closure algorithms are modular and simple
in contrast to complex arguments and proofs
in \cite{BTVJAR03,NR91}.
\ignore{Although not elaborated in this paper,
canonical rewrite systems for conditional congruence closures for
uninterpreted and AC symbols can also be constructed
by using the construction of a canonical Horn rewrite system from
Horn rules on constants as proposed in \cite{KapurJSSC19}.}
These features of the proposed algorithms make them attractive for integration into SMT solvers as their implementation does not need heavy duty infrastructure including AC unification, extension rules, and AC compatible orderings.

The next subsection contrasts in detail the results of this paper with previous methods, discussing the advantages of the proposed framework and the resulting algorithms. Section 2 includes definitions of congruence closure with uninterpreted and
interpreted symbols. This is followed by a review of key constructions used
in the congruence closure algorithm over uninterpreted symbols as proposed
in \cite{KapurRTA97}. Section 3 
introduces purification and flattening of ground terms with AC and uninterpreted symbols by extending the signature
and introducing new constants.
This is followed by an algorithm first reported in \cite{KapRTA85} for computing the congruence closure and the associated
canonical rewrite system from ground equations with a single AC symbol and constants. In section 4, it is shown how
additional properties of AC symbols such as idempotency, nilpotency, identity and their combination can be integrated into the algorithm. Section 5 presents a new algorithm for generating a congruence closure algorithm for ground equations with a cancelative AC symbol; new kinds of superpositions and critical pairs are needed to generate a canonical rewrite system since unlike most congruence closure algorithms, it becomes necessary to explore terms larger than appearing in the input and/or a query to get congruences on terms using the cancelation inference rule. That section also presents a
congruence closure algorithm for ground equations with an AC symbol satisfying group properties; it is shown to be related to Gaussian elimination.
Section 6 discusses how the propagation of new constant equalities deduced from various algorithms update various kinds of ground canonical rewrite system.
In Section 7, an algorithm for computing congruence closure of AC ground equations with multiple AC symbols and constants is presented.
Section 8 generalizes to the case
of combination of AC symbols and uninterpreted symbols.
Section 9 shows how lexicographic orderings in which a constant can be bigger than a nonconstant ground term can be supported in the proposed framework.
Section 10 discusses a variety of examples illustrating the proposed algorithms.
Section 11 illustrates the power
and elegance of the proposed framework by demonstrating how a congruence closure  algorithm for two AC symbols with some additional properties, particularly distributivity, can be further generalized to formulate a \Groebner basis algorithm on
polynomial ideals over the integers. Section 12 concludes with some ideas for further investigation.

\subsection{Related Work}

Congruence closure algorithms have been developed
and analyzed for over four decades \cite{DST80,Shostak78,NO80}.
The algorithms presented here use the framework first informally
proposed in \cite{KapurRTA97} for congruence
closure in which the author separated the algorithm
into three parts: (i) constant equivalence closure, and (ii)
nonconstant flat terms related to constants by
flattening nested terms by introducing new constants to stand for them,  and (iii) updating of nonconstant rules as constant equivalence closure
evolves and propagates new constant equalities. This simplified the presentation, the correctness
argument as well as the complexity analysis, and made the framework easier to generalize  to other settings including conditional congruence closure \cite{KapurJSSC19} and semantic congruence closure \cite{BK20}. Further, it enables the generation
of a unique reduced ground canonical rewrite system for a congruence
closure, assuming a total ordering on constants; most importantly, the framework gives freedom in
choosing orderings on ground terms, leading to desired canonical forms
appropriate for applications. 
\ignore{The conditional congruence closure algorithm in \cite{KapurJSSC19}
made this framework explicit and generalized it to conditional
congruence closure of ground Horn equations on uninterpreted
symbols; extensions to interpreted symbols with simple properties
such as idempotency, commutativity, nilpotency, etc. were also mentioned.}

To generate congruence closure in the presence of AC symbols, the proposed framework 
builds on the author and his collaborators' work dating back to 1985,
where they demonstrated how an ideal-theoretic approach based on \Groebner basis
algorithms could be employed for word problems and unification problems over commutative
algebras \cite{KapRTA85}. 

Congruence closure algorithms on ground equations with interpreted symbols can be viewed as special cases of the Knuth-Bendix completion procedure \cite{KB} on (nonground) equations with  
universal properties characterizing the semantics of the interpreted symbols. In case of 
equations with AC symbols, Peterson and Stickel's extension of the Knuth-Bendix completion \cite{PS81} using extension rules, AC unification and AC compatible orderings can be used for congruence closure over AC symbols. For an arbitrary set $E$ of universal axioms characterizing the semantics of interpreted symbols, $E$-completion with coherence check and E-unification along with $E$-compatible orderings need to be used. Most of the general purpose methods do not terminate in general. Even though the Knuth-Bendix procedure can be easily proved to terminate on ground equations of uninterpreted terms, that is not necessarily the case for its extensions for other ground formulas.

In \cite{BL81}, 
a generalization of the Knuth-Bendix
completion procedure \cite{KB} to handle AC symbols \cite{PS81}
is adapted to develop decision algorithms for word problems over finitely presented commutative semigroups; this is equivalent to the congruence closure of ground equations with a single AC symbol on constants. Related methods using extension rules introduced to handle AC symbols and AC rewriting for solving word problems over other finite presented commutative algebras were subsequently reported in \cite{LeChenadec83}. 

In \cite{NR91}, the authors used the completion
algorithm discussed in \cite{KapRTA85} and a total AC-compatible polynomial reduction ordering on congruence classes of AC ground terms to establish the existence
of a ground canonical AC system first with one AC symbol. To extend their method to multiple AC symbols, particularly the instances of distributivity property relating ground
terms in two AC symbols $*$ and $+$,
the authors had to combine an AC-compatible total reduction ordering on
terms along with complex polynomial interpretations with
polynomial ranges, resulting in a complicated
proof to orient the distributivity axiom from left to right.
Using this highly specialized generalization of polynomial orderings, it was proved in \cite{NR91} that every ground AC theory has a finite canonical system
which also serves as its congruence closure.

The proposed approach, in contrast, is orthogonal to
ordering arguments on AC ground terms; any total ordering
on original constants and new constants is extended to one of many possible orderings on pure terms with
a single AC symbol; this ordering on AC terms suffices to compute a canonical ground AC rewrite system. Different orderings on AC terms with different AC symbols can be used; for example, ground terms of an AC symbol $+$ could be oriented in a completely different way than ground terms for another AC symbol $*$. Instances of the distributivity property expressed on different AC ground terms 
can also be oriented in different nonuniform ways.
This leads to flexible
ordering requirements on uninterpreted and interpreted symbols
based on the properties desired of canonical forms.

In \cite{Marche91}, a different approach was taken for computing a finite canonical rewrite system for ground equations on AC symbols. Marche first proved a general result about AC ground theories that for any finite set of ground equations with AC symbols, if there is an equivalent canonical rewrite system modulo AC, then that rewrite system must be finite. 
He gave an AC completion procedure, which does not terminate even on ground equations; he then proved its termination on ground equations with AC symbols using a special control
on its inference rules using a total ordering on AC ground terms in \cite{NR91}. Neither in \cite{NR91} nor in \cite{Marche91}, any explicit mention is made of uninterpreted symbols appearing in ground equations.

Similar to \cite{BL81}, several approaches based on adapting Peterson and Stickel's generalization of the Knuth-Bendix completion procedure to consider special ground theories have been reported \cite{LeChenadec83,Marche91}. 
In \cite{CCAC00,BTVJAR03}, the authors adapted Kapur's congruence closure
\cite{KapurRTA97} using its key ideas to an abstract inference
system.
Various congruence closure algorithms, including Sethi, Downey and Tarjan \cite{DST80}, Nelson and Oppen \cite{NO80} and Shostak \cite{Shostak78}, from the literature can be expressed as different combinations of these inference steps. They also proposed an
extension of this inference system to AC
function symbols, essentially integrating it with 
Peterson and Stickel's generalization
\cite{PS81} of the Knuth-Bendix completion procedure for AC symbols.
All of these approaches based on Peterson and Stickel's generalization used extension rules introduced in \cite{PS81} to
define rewriting modulo AC theories so that a local-confluence test for rules
with AC symbols could be developed using AC unification. During completion on ground terms, rules with variables appear in intermediate computations. All of these approaches suffer
from having to consider many unnecessary inferences due to extension rules and AC unification, as it is well-known that AC unification can generate doubly exponentially
many unifiers \cite{KN92}.

An approach based on normalized rewriting was proposed in \cite{MarcheNormalized} and
decision procedures were reported for ground AC theories with AC symbols satisfying additional properties including idempotency, nilpotency and identity as well as their combinations. This was an attempt to integrate Le Chenadec's method \cite{LeChenadec83} for finitely presented algebraic structures with Peterson and Stickel's AC completion, addressing weaknesses in E-completion and constrained rewriting; AC compatible termination orderings are difficult to design in the presence of AC symbols with identity, idempotency and nilpotency  properties. Such approach had to redefine local confluence for normalized rewriting and normalized critical pairs, leading to a complex
completion procedure whose termination and proof of correctness needed extremely
sophisticated machinery of normalized proofs.

The algorithms presented in this paper, in contrast, are very different and are based on an approach first presented in \cite{KapRTA85} by the author with his collaborators. 
Their termination and correctness proofs are based on
the termination and correctness proofs of a congruence closure algorithm for uninterpreted symbols (if present) and the termination and correctness of an algorithm for
deciding the word problems of a finitely presented commutative semigroup using Dickson's Lemma. Since the combination is done by propagating equalities on shared constants among various components, the termination and correctness proofs of the combination algorithm become much easier since there are only finitely many constants to consider, as determined by the size of the input ground equations.

\ignore{There is no need to use complex AC term ordering, resulting in
considerable flexibility in the use of termination ordering on ground AC terms, instead of having to worry about AC compatible termination ordering on 
general terms and preserved under ground substitutions; different instances of nonground AC terms can be compared in different flexible ways as mentioned above for the distributivity equation. 
The completion method used in \cite{NR91} is related to \cite{KapRTA85}, however it must use AC compatible orderings.}
\ignore{This is a sharp contrast to approaches based on adapting \cite{PS81} using extension rules and AC unification; as a result, many joinable critical pairs are generated, making such approaches extremely inefficient, and thus not suitable for integration into SMT solvers.}

A detailed comparison leads to several reasons why the proposed algorithms are simpler, modular, easier to understand and prove correct: (i) there is no need in the proposed
approach to use extension rules whereas almost all other approaches
are based on adapting AC/E completion
procedures for this setting requiring considerable/sophisticated
infrastructure including AC unification and
E/normalized rewriting.
As a result, proofs of correctness and
termination become complex 
using heavy machinery including proof orderings
and normalized proof methods 
not to mention arguments dealing with fairness of completion procedures.
(ii) all require complex total
AC compatible orderings. 
In contrast, ordering restrictions in the proposed algorithms
are dictated by individual components; little restriction is imposed for the
uninterpreted part; independent separate orderings on $+$-monomials for
each AC symbol + that extend a common total ordering
on constants can be used, thus giving considerable flexibility in choosing termination orderings overall.
In most related approaches except for \cite{NR91}, critical
pairs computed using expensive AC unification steps are needed,
which are likely to make the algorithms inefficient; it is
well-known that many superfluous critical pairs are generated due
to AC unification. These advantages make us predict that the
proposed algorithms can be easily integrated with  SMT solvers since
they do not require sophisticated machinery of AC-unification and
AC-compatible orderings, extension rules and AC completion.
\ignore{Further canonical rewrite systems generated are simpler: constant
rules, flat rules of the form $f(\cdots) \rew c$ for uninterpreted
symbols, and for each AC symbol $+$, $a_1 + ... + a_j = b_1 + ... + b_k,$
where $a,b$'s are constants.}

\ignore{
In the proposed framework, different AC symbols in contrast are considered separately
and ground equations in which multiple AC symbols appear, are
separated/purified using new constant symbols. Consequently, ground equations can be oriented in a flexible way in our
framework without having to design complex termination orderings. So
ground distributivity equations are purified and 
new symbols are introduced for $a * b, a * c, b + c$ respectively, say $u_1, u_2,
v_1,$, producing $u_1 + u_2 = a * v_1$; to have flexibility for
orienting this equation and hence the distributivity equality,
two new constants are introduced: $u_1 + u_2 = v_2$ and $a * v_1 = v_2$. 
There are equalities over $*$ terms as well as $+$ terms: $a * b = u_1, a * c = u_2, a * v_1 =
v_3, u_1 + u_2 = v_2, b + c = v_1$. 
Canonical rewrite systems can be easily generated assuming total orderings on constants including new constants,
$+$ and $*$ terms independent of ground distributivity equations.
An ordering  
with $v_3 \grt v_2$ ensures that 
the distributivity equality is oriented from left to  
right as is usually the case but the distributivity could also be oriented from right to left by making $v_2 \grt v_3$.}

\ignore{Consider the case when $(a * a) + (b * (a + c)) = (a
+ b) * c$. Purification leads to $a * a = w_1, a + c = w_2, b *
w_2 = w_3, a + b = w_4, w_4 * c = w_5,  w_1 + w_3 = w_5$. }

\ignore{On the original signature, the rewrite system is: $LDO = \{ ~ a * (b 
+ c)  \rew (a* b + a * c) \}$ with 
trivial equalities: $\{ c * (a *  b)  = b * (a * c) , (a * c) * ( b + c)  = c * (a * b 
+ a * c) ,  b * (a* b + a* c))   = (a * b)  * (b + c) \}$ in
which both left and right sides are equivalent after applying distributivity.}

\ignore{For orienting the distributivity equality from right to left,
the ordering constraint is $v_2 \grt v_3$ 
giving the rewrite
system: $LD' = \{ a * b \rew u_1, 
~ a* c \rew u_2, ~ a * v_1 \rew v_2, ~u_1 + u_2 \rew v_3, ~b + c 
\rew v_1, v_2 \rew v_3, v_1 \rew v_2 \}$ with
additional equalities $\{ c * u_1 = b * u_2, u_2 * v_2 = c * v_3, b *
v_3  = u_1 * v_3\}$ which is oriented depending upon the
ordering on the constants.
On the original signature, the rewrite system $LDO' = \{ (a* b +
a * c) \rew a * (b+c) \}$ with trivial equalities, again,
$\{ c * (a * b)  = b * (a  * c) , (a * c) * ( b + c)  = c * (a *
(b + c)),  b * (a* (b +  c))   = (a * b)  * (b + c) \}$.
A reader can observe that
using the extended signature gives considerable
flexibility for choosing total orderings on ground terms in generating locally confluent systems on the
original signature.}

\ignore{'These rewrite rules also define a decision procedure for  AC congruence closure on
the extended signature as well as on the original signature as
long as subterms in the original signature for which new symbols are introduced, are
replaced using the associated new symbols, thus transforming it
from a term in the original signature to a term in the extended signature.}

\ignore{The canonical form of $a + b$ is of
course $a * b$; assuming $g \grt *$, the canonical form of $g(e)$
is $ a * c$, implying that the canonical form of $a * b * c$ is
$a * b * c$; assuming $* \grt g$, the canonical form of $g(e)$ is
$g(f)$ implying that the canonical form of $a * b * c$ is $b * g(f)$.
If there are multiple ways to define a new constant,
  which one to use as definition? smaller one in the ordering?}

\ignore{Much like in the uninterpreted case, the resulting rewrite system
serves as a decision procedure of $CC(S)$ even when the input $S$
includes Horn equations.}

Since the proposed research addresses combination of congruence closure algorithms, a brief comparison with the extensive literature on the modular combination of decision procedures for theories with disjoint signatures but sharing constants, as well as  combining matching and unification algorithms must be made. Combination problems are a lot easier to consider for ground theories. The reader would notice that much like the seminal work of Nelson and Oppen where they showed that for combining satisfiability procedures of theories under certain conditions, it suffices to share equalities on shared constants and variables, the proposed combination framework only shares constant equalities even though there are no additional restrictions such as that ground equational theories in this paper satisfy the stable finiteness condition (see \cite{BaaderT97} for a detailed discussion of implications of Nelson and Oppen's requirements for equational theories). More general combination results for decidable theories can be found in \cite{ArmandoBRS09} in which a variable inactivity condition (disallowing literals of the form $x = t$  in clauses thus prohibiting paramodulation on a variable), a kind of ``pseudo" collapse-free equation, are imposed; such combination frameworks are likely to need more sophisticated machinery to consider AC symbols.

\section{Preliminaries}
\label{sec:preliminaries}

Let $F$ include a finite set $C$ of constants, a finite
set $F_U$ of uninterpreted symbols, and a finite set $F_{AC}$ of AC symbols. i.e., $F = F_{AC} \cup F_U \cup C.$
Let $GT(F)$ be the ground terms constructed from $F$;
sometimes, we will write it as $GT(F, C)$ to highlight the constants $C$ of $F$.
We will abuse the terminology by calling a $k$-ary  function
symbol as a function symbol 
if $k > 0$  and constant if $k =
0$. A function term is meant to be a nonconstant term
with a nonconstant outermost symbol.
Symbols in $F$ are either uninterpreted (to mean no semantic property of
such a function is assumed) or interpreted satisfying properties
expressed as universally quantified equations (called universal equations).


\subsection{Congruence Relations}

\ignore{Let $S$ be a finite set $\{  (\bigwedge h_{1}^i = h_2^i) \implies (c_{1}^i =
c_2^i) | 1 \le i \le k  \}$ of conditional (Horn) equations where $h$'s and $c$'s are
ground terms from $GT(F)$; the hypothesis in a Horn (conditional) equation could be empty (meaning
True and hence an unconditional equation). In addition, a Horn
equation of the form $(\bigwedge h_{1}^i = h_2^i) \implies
{\tt False}$ is allowed.  A Horn equation $(\bigwedge h_{1}^i = h_2^i) \implies
{\tt True}$ is trivial.\footnote{If the hypothesis of a Horn
equation includes {\tt False}, the Horn equation is trivial.}
Trivial Horn equations are deleted. {\tt True} is deleted from the hypothesis of an equation. 
${\tt True} \implies c_1 = c_2$ is the same as the unconditional
Horn equation $c_1 = c_2$. Henceforth, it is assumed that this preprocessing
is performed on the input as well as any Horn equation generated
during the algorithms.}

\ignore{There are two goals: (i) generate congruence closure to decide
whether another conditional equation follows from $S$ or not,
and (ii) generate canonical forms of constrained
terms as well as conditional equations so that the algorithm can
be used dynamically in an SMT solver. Canonical
form computation can be used to decide membership in conditional
congruence closure; however, it can be much more expensive if
the cost of generating canonical forms is not amortized over numerous membership tests.}

\begin{defi}
Given a finite set $S = \{ a_i = b_i \mid 1 \le i \le m \}$ of
ground equations where $a_i, b_i \in GT(F)$, the congruence
closure $CC(S)$ is inductively defined as follows: (i) $S
\subseteq CC(S)$, (ii) for every $a \in GT(F)$, $a = a \in
CC(S)$, (iii) if $a = b \in CC(S),$ $b = a \in CC(S)$, (iv)
if $a = b$ and $b = c \in CC(S)$, $a = c \in CC(S)$, and (v) for every
nonconstant $f \in F$ of arity $k > 0$, if for all $1 \le k$,  $a_i = b_i \in CC(S)$, then $f(a_1, \cdots, a_k) = f(b_1, \cdots, b_k)
\in CC(S)$. Nothing else is in $CC(S)$.
\end{defi}

$CC(S)$ is thus the smallest relation that includes $S$ and is
closed under reflexivity, symmetry, transitivity, and under function application. It is easy to see that
$CC(S)$ is also the equational theory of $S$ \cite{BN98,BK20}.

\subsection{A Congruence Closure Algorithm for Uninterpreted Symbols}
The algorithm in \cite{KapurRTA97,KapurJSSC19} for computing congruence closure of a finite set $S$ of ground equations over uninterpreted function symbols and constants serves as the main building block in this paper. The algorithm extends the input signature by introducing new constant
symbols to recursively stand for each nonconstant subterm and generates two types
of equations: (i) constant equations, and (ii) flat terms of the form
$h(c_1, \cdots, c_k)$ equal to constants (also called flat equations).\footnote{This representation of ground equations can also be viewed as an equivalence relation on nodes, named as constants, in a DAG representation of ground terms in which nonleaf nodes are labeled with function symbols and represent subterms, with a edge from the function labeling the node to the nodes serving as its arguments. A DAG representation supports full sharing of ground subterms.}
It can be proved that the congruence closure of ground equations on the extended signature when restricted to the original signature, is indeed the congruence closure of the original equations \cite{BK20}. A disequation on ground terms is converted to a disequation on constants by introducing new symbols for the ground terms. 

As given in Section 3 of \cite{BK20}, assuming a total ordering on constants, the algorithm (i) computes a canonical rewrite system from constant equations, picking a canonical representative, which is the least constant among all constants in each congruence class, and (ii) replaces constants in the left side of each flat rule by their canonical representatives, and identifies flat rules with identical left sides, replacing them by a single rule with the least constant on its right side, thus possibly generating new constant equalities. These steps are repeated if at least one new constant equality is generated in (ii).

Using a total ordering on constants (typically with new constants introduced to extend the signature being smaller than constants from the input), the output of the algorithm in \cite{KapurRTA97,KapurJSSC19,BK20} is a reduced canonical rewrite system $R_S$ associated with $CC(S)$ (as well as $S$) that includes {\it function} rules, also called {\it flat} rules, of the form $h(c_1, \cdots, c_k) \rew d$ 
and {\it constant} rules $c \rew d$ such that
no two left sides of the rules are identical; further, all constants are in canonical forms. The following result is proved in \cite{KapurRTA97} (see also \cite{NOEncode}).
\begin{thmC}[\cite{KapurRTA97}]
Given a set $S$ of ground equations, a reduced canonical rewrite system $R_S$ on the extended signature, consisting of nonconstant flat rules $ h(c_1, \dots, c_k) \rew d,$ and constant rules $c \rew d$, can be generated from $S$ in $O(n^2)$ steps. The complexity can be further reduced to $O(n*log(n))$ steps if  all function symbols are binary or unary. For a given total ordering $\gg$ on constants, $R_S$ is unique for $S$, subject to the introduction of the same set of new constants for nonconstant subterms. 
\end{thmC}

As shown in \cite{DST80}, function symbols of arity $> 2$ can be encoded using binary symbols using additional linearly many steps.

The canonical form of a ground term $g$ using $R_S$ is denoted by $\hat{g}$ and is its canonical signature (in the extended language). Ground terms $g_1, g_2$ are congruent in $CC(S)$ iff $\hat{g_1} = \hat{g_2}$. 

The output of the above algorithm can also be viewed as consisting of disjoint rewrite systems corresponding to (i) a rewrite system on constants representing Union-Find algorithm in which the rewrite rules correspond to the edges in a forest of trees such that if there is an edge from a constant $c_i$ to $c_j$ toward a root (implying that $c_i > c_j$), then there is a rewrite rule $c_i \rew c_j$, and (ii) for each nonconstant uninterpreted symbol $h \in F_U$, the set of uninterpreted function symbols, there is a finite set of flat rewrite rules  of the form $h(c_1, \cdots, c_k) \rew d$ on root constants in the forest with the property that no two left sides of the rewrite rules are identical. However, if Union-Find algorithm is used for generating a canonical rewrite system on constant equations, ordering on constants cannot be chosen a priori and instead should be chosen dynamically based on how various equivalence classes of constants get merged to maintain the balanced trees representing each equivalence class. For simplicity, in the rest of the paper, a total ordering on constants a priori will be assumed for generating a canonical rewrite system $R_S$ to represent the congruence closure $CC(S)$ of $S$.

\subsection{AC Congruence Closure}
The above definition of congruence closure $CC(S)$ is extended to consider
interpreted symbols. Let $IE$ be a finite set of universally quantified equations with variables, specifying properties of 
interpreted function symbols in $F$.
For example, the properties of an AC
symbol $f$ are: $\forall x, y, z, ~f(x, y) = f(y, x)$ and $f(x, f(y, z)) = f(f(x, y), z).$ An idempotent symbol $g$, for another example, is specified as $\forall x, ~g(x, x) = x$. 
To incorporate the semantics of these properties:

{\em (vi) from a universal axiom $s = t \in IE$, for 
all variables $x_i$'s in $s, t$ and
any ground substitution
$\sigma$ of $x_i$'s, $\sigma(s) = \sigma(t) \in CC(S)$.}

$CC(S)$ is thus the smallest relation that includes $S$ and is
closed under reflexivity, symmetry, transitivity, function application, and the 
substitution of variables in $IE$ by ground terms.

Given a finite set $S$ of ground equations with uninterpreted and interpreted symbols, the congruence closure membership problem is to check whether another  ground equation $u = v \in CC(S)$
(meaning semantically that $u = v$ follows from $S$, written as
$S \models u = v$). A related problem is whether given two sets $S_1$ and $S_2$ of ground equations,
$CC(S_2) \subseteq CC(S_1)$, equivalently $S_1 \models S_2$. Birkhoff's theorem relates the syntactic properties, the equational theory, and the semantics of $S$.

If $S$ also includes ground disequations, then besides checking the unsatisfiability of
a finite set of ground equations and disequations, new disequations 
can be derived in case $S$ is satisfiable.
The inference rule for deriving new disequations for an uninterpreted symbol is:

$f(c_1, \dots, c_k) \neq f(d_1, \cdots, d_k) \implies (c_1 \neq d_1 \lor \cdots \lor c_k \neq d_k).$

\noindent
In particular, if $f$ is unary, then the disequation is immediately derived.

\ignore{Consider particularly an example from \cite{CCVAR}, from
$g(f(a), h(b)) \neq g(f(b), h(a))$, the disequation $a \neq b$ is derived; this is
obtained by two applications of the above rule: from the disequation,
$f(a) \neq f(b) \lor h(a) \neq h(b)$, each of the disequations in the disjunction
implies $a \neq b$.}

Disequations in case of interpreted symbols generate more interesting formulas. In case of a commutative symbol $f$, for example, the disequation $f(a, b) \neq f(c, d)$ implies $(a \neq c \lor b \neq d) \land  (b \neq c \lor a \neq d ).$
For an AC symbol $g$, there are many possibilities; as an example, for the disequation
$g(a, g(b, c)) \neq g(a, g(a, g(a, c)))$ implies
$(a \neq g(a, c) \lor b \neq a \lor c \neq a) \land ( b \neq g(a, c) \lor  g(a, c) \neq g(a, a)) \cdots ).$

To emphasize the presence of AC symbols, let $ACCC(S)$ stand for the AC congruence closure of $S$ with $AC$ symbols; we will interchangeably use $ACCC(S)$ and $CC(S)$.

\subsection{Flattening and Purification}

Following \cite{KapurRTA97}, ground equations in $GT(F)$ with AC symbols are transformed into three kinds of equations by introducing new constants for subterms: (i) constant equations of the form $c = d,$
(ii) flat equations with uninterpreted symbols of the form $h(c_1, \cdots, c_k) = d$, and (iii) for each
$f \in F_{AC},$ $f(c_1, \cdots, c_j) = f(d_1, \cdots, d_{j'})$, where $c$'s, $d$'s are constants in $C$, $h \in F_U$, and every AC symbol $f$ is viewed to be variadic (including $f(c)$ standing for $c$).
Nested subterms of every $AC$ symbol $f$ are repeatedly 
{\bf flattened}: $f(f(s_1, s_2), s_3)$ to $f(s_1, s_2, s_3)$ and
$f(s_1, f(s_2, s_3))$ to $f(s_1, s_2, s_3)$ until all arguments to
$f$ are constants or nonconstant terms with outermost symbols different from $f$.\footnote{There are two types of flattening being used: (i) one for flattening arguments of AC symbols, converting them to be of variadic, (ii) creating flat terms in which function symbols have constants as arguments by extending the signature. This abuse of terminology can hopefully be disambiguated from the context of its use.}
Nonconstant arguments of a mixed AC term $f(t_1, \cdots, t_k)$ are transformed to $f(u_1, \cdots, u_k)$, where
$u_i$'s are new constants, with $t_i = u_i$ if $t_i$ is not a constant. 
A subterm whose outermost function symbol is uninterpreted, is also flattened by introducing new constants for their nonconstant arguments.
These transformations are recursively applied on the equations including those with new constants.

\ignore{

A mixed flattened term in many AC symbols is {\bf purified} using new constant symbols to stand for its nonconstant subterms; this gives unmixed flattened terms only in a single AC symbol applied on constants. Nested uninterpreted terms are also flattened using new constants so that every uninterpreted symbol only has constants as arguments.

Below, $c, d, c_i$'s, $d_j$'s are constants and $s_i$'s, $t_j$'s are either constant or nonconstant subterms. New constants are introduced only for nonconstant subterms. 
}
New constants are introduced only for nonconstant subterms and their number is 
minimized by introducing a single new constant
for each distinct subterm irrespective of its number of
occurrences (which is equivalent to representing terms by directed acyclic graphs (DAGs) with full sharing
whose each non-leaf node is assigned a distinct constant).
As an example, $((f(a, b) * g(a)) + f(a + (a + b), (a * b) + b)) * ((g(a) + ((f(a, b) + a) + a)) + (g(a) * b)) = a$ is purified and flattened with new constants $u_i$'s, resulting in
$\{ f(a, b) = u_1,  g(a) = u_2, u_1 * u_2 = u_3, a + a + b = u_4, a * b = u_5, u_5 + b = u_6, f(u_4, u_6) = u_7, u_3 + u_7 = u_8, u_2 * b = u_9, u_2 + u_1 + a + a + u_9 = u_{10}, u_7 * u_{10} = a\}$.

\ignore{
\vspace*{2mm}
\noindent
$FP(\{c = d \}) = \{ c = d \},$ ~~
$FP(\{f(c_1, \cdots,
c_k) = d\}) = \{f(c_1, \cdots, c_k) = d \}$, \\
\noindent
$FP(\{d = f(c_1, \cdots, c_k) \}) =  \{f(c_1, \cdots, c_k) = d \},$ ~~$FP(\{f(c_1, \cdots,
c_k) = f(d_1, \cdots,
d_{k'})\}) = \{f(c_1, \cdots,
c_k) = f(d_1, \cdots,
d_{k'})\}, f \in F_{AC},$ \\
\noindent
$FP(\{f(f(t_1, t_2), t_3) = s\}) = FP(\{f(t_1, t_2, t_3) = s\}), f \in F_{AC},$\\
\noindent
$FP(\{c = f(f(t_1, t_2), t_3) \}) = FP(\{f(t_1, t_2, t_3)) = c \}), f \in F_{AC},$\\
\noindent
$FP(\{f(t_1, f(t_2, t_3)) = s \}) = FP(\{f(t_1, t_2, t_3) = s\}), f \in F_{AC},$\\

\noindent
$FP(\{c = f(t_1, f(t_2, t_3)) \}) = FP(\{f(t_1, t_2, t_3) = c\}), f \in F_{AC},$\\ 
\noindent
$FP(\{f(c_1, \cdots, c_j) = f(d_1, \cdots, d_{j'})\}
=\{f(c_1, \cdots, c_j) = f(d_1, \cdots, d_{j'})\}, f \in F_{AC},$\\
\noindent
$FP(\{ f(t_1, \cdots, t_k) = c \}) =  \bigcup_{i=1}^k
FP(\{t_i = u_i\}) \cup \{
f(u_1, \cdots, u_k) = c\}, f \notin F_{AC} ~\lor~ (f \in F_{AC} \land fn(t_i) \neq f,$\ \
\noindent
$FP(\{ c = f(t_1, \cdots, t_k) \}) = FP(\{ f(t_1, \cdots, t_k) = c \}), f \notin F_{AC}$\\
\noindent
$FP(\{ f(t_1, \cdots, t_k) = g(s_1, \cdots, s_{k'}) \})$ = 
$FP(\{ f(t_1, \cdots, t_k) = u\})  \cup FP(\{g(s_1, \cdots, s_{k'}) = v\} \cup \{ u = v \}, f \neq g,$
\noindent
$FP(\{ s \neq t \}) = FP(\{ s = u\}) \cup FP(\{ t = v\}) \cup \{ u \neq v \}.$

\noindent
where new constant symbols $u_i$'s, $v_j$'s are introduced only
for nonconstant subterms, but constant symbols are not replaced
by new symbols.

The output of $FP(S)$ is a finite set of equations consisting of:
(i) constant equations: $c = d$, (ii) flat equations: $h(c_1, \dots, c_k) = d$, and (iii) for each $f \in F_{AC}$,  $f(c_1, \cdots c_j) = f(d_1, \cdots, d_{j'}).$  The arguments of an AC symbol are represented as a multiset since the order and multiplicity do not matter. For an AC symbol $f$, let $f(M)$ be a pure flattened 
term $f(a_1, \cdots, a_k)$ with $M = 
\mset{a_1, \cdots, a_k}$, a multiset of constants; $f(M)$ is called an {\bf $f$-monomial}. In case $f$ has its identity $e$, i.e., $f(x, e) = x$, then $e$ is written as is $f()$ or $f(\mset{})$. A singleton constant $c$ is written as is or equivalently 
$f(\mset{c})$ or further abusing the notation, as $f(c)$.
}

The arguments of an AC symbol are represented as a multiset since the order does not matter but multiplicity does. For an AC symbol $f$, if a flat
term is $f(a_1, \cdots, a_k)$, it is written as $f(M)$ with a multiset 
$M = \mset{ a_1, \cdots, a_k}$; 
$f(M)$ is also called an {\bf $f$-monomial}. In case $f$ has its identity $e$, i.e., $\forall x, f(x, e) = x$, then $e$ is written as is, or 
$f(\mset{})$. 
A singleton constant $c$ is ambiguously written as is or equivalently 
$f(\mset{c})$.
An $f$-monomial $f(M_1)$ is equal to $f(M_2)$ iff the multisets $M_1$ and $M_2$ are equal.

Without any loss of generality, the input to the algorithms below are assumed to be the above set of flat ground equations on constants, equivalently, equations of the form $f(M_i) = f(M_{i'})$.

\section{Congruence Closure with a single Associative-Commutative (AC) Symbol}

The focus in this section is on interpreted symbols with the
associative-commutative properties; uninterpreted symbols are considered later.

Checking whether a ground equation on AC terms is in the congruence closure of a finite set $S$ of ground equations
is the word problem over finitely presented commutative algebraic structures, presented by $S$
characterizing their interpretations as discussed in \cite{KapRTA85,LeChenadec83,BL81}.
\ignore{Given a finite set $S$ of ground equations expressed in multiple AC symbols and constants, the membership problem in $ACCC(S)$, the AC congruence closure of $S$, is check whether a given equation $s = t$ is in $ACCC(S)$.
In the presence of disequations over AC ground terms, one is also interested in
determining whether the set of ground equations and disequations is satisfiable or not.}
Another goal is to associate a reduced canonical rewrite system as a unique presentation of $ACCC(S)$ and a canonical signature with every AC congruence class in the AC congruence closure of a satisfiable $S$.

For a single AC symbol $f$ and a finite set $S$ of monomial equations $\{ f(M_i) = f(M'_i) | 1 \le i \le k\}$, $ACCC(S)$ is the reflexive, symmetric and transitive closure of $S$ closed under $f$. Section 2.3 gives a general definition of semantic congruence closure both in the presence of uninterpreted and interpreted symbols including AC symbols. For a single AC symbol $f$, $ACCC(S)$ becomes:
if $f(M_1) = f(M_2) $ and $f(N_1) = f(N_2)$ in $ACCC(S)$, where $M_1, M_2, N_1, N_2$ could be singleton constants,
then $f(M_1 \cup N_1) = f(M_2 \cup N_2)$ is also in $ACCC(S)$. 

\ignore{
Below, $S_C$ is finite set of constant equations and
$S_{f}$ is a finite set of equations $\{ f(M_i) = f(N_i) ~|~ 1 \le i \le k\}$ on $f$-monomials $f \in F_{AC}$. }

\ignore{\footnote{In case an AC symbol has additional
  properties, then interpreted terms can be normalized further to embed
  those properties in their representation. If $f$ has both the same left and right identity $e$, them $f(M) =
f(M-\{e\})$; similarly if $f$ is idempotent, then $f(M) = f(M')$
with $M'$ being $M$ with duplicates removed; for nilpotent $f$,
$f(M) = f(M')$ with $M'$ being $M$ with duplicates replaced by
$0$, which if identity, is deleted unless $M'$ becomes
singleton or $0$. } 
}

\ignore{In the next two subsections, we present algorithms for generating a reduced canonical 
rewrite system from $S$, starting with the case of a single AC symbol followed by multiple AC symbols.
These algorithms (admittedly in their nonmodular form) have been known to the author since 1985
\cite{KapRTA85} (see Section 6 there), where word problems 
over finitely presented commutative algebras using
ideal-theoretic approach and \Groebner basis construction were first
discussed. This paper presents a modular and simpler version of these algorithms, resulting in considerable flexibility on termination orderings on terms. }

\ignore{We first present a method for generating
an AC congruence closure from
$S_f$ consisting of ground equations on flattened terms using a single AC symbol $f$. This is done by computing a canonical
rewrite system from $S_f$ which also associates canonical signatures with flattened  ground $f$ terms.}
\ignore{The congruence closure membership problem is the same as deciding the word
problem of a finitely presented commutative semigroup.
In the subsequent section, we generalize the method to combining several AC symbols and
constants.
Using this, it would be possible to decide the word
problem of finitely presented commutative rings with $+$ and $*$
with the universal distributivity law. Finally, a combination method of
these theories with $EUF$, the theory of uninterpreted symbols with
equality, will be presented. Using this, the word problem of
theories with many AC symbols and uninterpreted symbols can be
decided. }


As in \cite{KapurRTA97}, we follow a rewrite-based approach for computing the AC congruence closure $ACCC(S)$ by generating a canonical rewrite system from $S$.
To make rewrite rules from equations in $S$, a total ordering $\gg$ on the set $C$ of constants is extended to a total ordering on $f$-monomials and denoted as $\gg_f$.
One of the main advantages of the proposed approach is the flexibility in using termination orderings on $f$-monomials, both from the literature on termination orderings on term rewriting systems as well as well-founded orderings (also called admissible orderings) from the literature on symbolic computation including \Groebner basis.

Using the terminology from the \Groebner basis literature, an ordering $\gg_f$ on the set of $f$-monomials,
$GT(\{f\}, C)$, is called {\it admissible} iff 
i)
$f(A) \gg_f f(B)$ if the multiset $B$ is a proper subset of the multiset $A$ (subterm property), and 
(ii)  for any multiset $B$, $f(A_1) \gg_f f(A_2) \implies
f(A_1 \cup B) \gg_f f(A_2 \cup B)$ (the compatibility property). 
$f(\mset{})$
may or may  not be included in $GT(F)$ depending upon an application. Two popular families of admissible orderings on monomials are degree ordering (also called degree-lexicographic orderings).
A degree ordering compares multisets $f(A)$ and $f(B)$ first on their size and in case of equal size, the largest constant in the difference $A-B$ is bigger than every constant in $A-B$:
$f(A) \gg f(B)$ if and only if $|A| > |B|$ or 
if $|A| = |B|$, then $\exists c \in A-B, \forall d \in B-A, c \gg d$. In a lexicographic ordering.
size does not matter: $f(A) \gg f(B)$ if and only if $\exists c \in A-B, \forall d \in B-A, c \gg d$.

From $S$, 
a rewrite system $R_{S}$ is associated with $S$ by orienting nontrivial
equations in $S$ (after deleting trivial equations $t = t$ from
$S$) using $\gg_f$: a ground equation $f(A_1) = f(A_2)$ is oriented into
a terminating rewrite rule  $f(A_1) \rew f(A_2)$, 
where $f(A_1) \gg_f f(A_2)$.
The rewriting relation induced by 
this rewrite rule is defined below.
\begin{defi}
A flattened term $f(M)$ is {\em rewritten} in one step, denoted by $\rew_{AC}$ (or simply $\rew$),
using a rule $f(A_1) \rew f(A_2)$  to $f(M')$ iff $A_1 \subseteq M$
and $M' = (M - A_1) \cup A_2$, where $-, \cup$ are operations on
multisets.
\end{defi}

\vspace*{-1mm}
Given that $f(A_1) \gg_f f(A_2)$, it follows that $f(M) \gg_f f(M')$,
implying the rewriting terminates. Standard notation and concepts from
\cite{BN98} are used to represent and study properties of the reflexive and transitive closure of $\rew_{AC}$ induced by $R_{S}$; the reflexive, symmetric and transitive closure of $\rew_{AC}$
is the AC congruence closure $ACCC(S)$ of $S$. Below, the subscript $AC$ is dropped from $\rew_{AC}$ and $f$ is dropped both from
$S_f$ and $\gg_f$ whenever it is obvious from the context.

A rewrite relation $\rew$  defined by $R_{S}$ is called terminating iff there are no infinite rewrite chains of the form $t_0 \rew t_1 \rew \cdots \rew t_k \rew \cdots$. A rewrite relation $\rew$ is {\em locally confluent} iff
for any term $t$ such that $t \rew u_1, t \rew u_2$, there exists $v$ such that $u_1 {\rew}^* v, u_2 {\rew}^* v$. $\rew$ is confluent iff 
for any term $t$ such that $t \rew^* u_1, t \rew^* u_2$, 
there exists $v$ such that $u_1 \rew^* v, u_2 \rew^* v$. $\rew$ is canonical iff it is terminating and locally-confluent (and hence also terminating and confluent). A term $t$ is in normal form iff there is no $u$ such that $t \rew u$.

An $f$-monomial $f(M)$ is in normal form with respect to $R_{S}$ iff $f(M)$ cannot be rewritten using any rule in $R_{S}$.

Define a nonstrict partial ordering on $f$-monomials, called the {\bf Dickson} ordering, informally
capturing when an $f$-monomial rewrites another $f$-monomial: 
$f(M) \gg_D f(M')$ iff $M'$ is a subset of $M$. It is easy to see that
for any admissible ordering $\gg_f$,  $\gg_D \subseteq \gg_f$.
Observe that the strict subpart of this ordering, while
well-founded, is not total; for example, two distinct singleton multisets (constants) 
$\mset{a} \neq \mset{ b}$ 
cannot be compared. This ordering is later used to show the termination of the completion algorithm using the Dickson's lemma, which states that any infinite subset of $f$-monomials on finitely many constants must include at least two $f$-monomials comparable by the Dickson ordering. The Dickson's lemma is a combinatorial property on $k$-tuples of natural numbers which states that any infinite subset $A \subseteq \mathbb{N}^k$ must include at least two comparable $k$-tuples by component-wise ordering.  Assuming an ordering on $k$ constants, an $f$-monomial can be represented as a $k$-tuple corresponding to the number of times various constants occur in the $f$-monomial.

A rewrite system $R_S$ is called {\em reduced} iff neither the left side nor the right side of any rule in $R_S$ can be rewritten by any of the other rules in $R_S$.

\ignore{The reflexive, symmetric and transitive closure of $\rew$ induced by a finite set
$R$ of rewrite rules obtained by orienting ground equations on AC terms with
$f$ and constants is the congruence closure of $R$ when viewed as equations.}

As in \cite{KapRTA85}, the local confluence of $R_{S}$ can be checked using the following constructions
of superposition and critical pair.

\begin{defi}
Given two distinct rewrite rules  $f(A_1)
\rew f(A_2), ~ f(B_1) \rew f(B_2)$, 
let $AB = (A_1 \cup B_1) - (A_1 \cap
B_1)$; $f(AB)$ is then the {\em superposition} of the two rules,
and the {\em critical pair} is $(f((AB-A_1) \cup A_2), f((AB-B_1) \cup B_2))$.
\end{defi}

To illustrate, consider two rules $f(a, b) \rew a, f(b, c) \rew
b$; their superposition $f(a, b, c)$ leads to the critical pair
$(f(a, c), f(a, b))$. 

A rule can have a constant on its left side and a nonconstant on its right side. 
As stated before, a singleton constant stands for the multiset containing that constant. 

A critical pair is {\it nontrivial} iff the normal forms of its two
components in $\rew_{AC}$ as multisets are not the same (i.e., they are not joinable).  A nontrivial critical pair generates an
implied equality relating distinct normal forms of its two components.

For the above two rewrite rules, normal forms of two sides are $(f(a, c), a)$, respectively, 
indicating that the two rules are not locally confluent. A new derived equality
is generated: $f(a, c) = a$ which is in $ACCC(\{f(a, b) = a, f(b, c) = b \}).$

It is easy to prove that if $A_1, B_1$ are disjoint multisets,
their critical pair is trivial. 
Many critical pair criteria to identify additional trivial
critical pairs have been investigated and proposed in \cite{WeisBook,KapurPrime88,BachmairDersho}.

\begin{lem}
An AC rewrite system $R_{f}$ is {\em locally confluent} iff the
critical pair: $(f((AB-A_1) \cup A_2), f((AB-B_1)
\cup B_2))$ between every pair of distinct rules $f(A_1) \rew f(A_2),
f(B_1) \rew f(B_2)$ is joinable, where
$AB = (A_1 \cup B_1) - (A_1 \cap B_1)$.
\end{lem}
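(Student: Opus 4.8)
The plan is to establish both directions of this critical-pair characterization, viewing it as the monomial/multiset analogue of the classical Knuth--Bendix critical pair lemma, where the flat ground setting removes the usual complications. The forward direction is immediate: given distinct rules $f(A_1) \rew f(A_2)$ and $f(B_1) \rew f(B_2)$, the superposition $f(AB)$ with $AB = (A_1 \cup B_1) - (A_1 \cap B_1)$ is by construction the least multiset whose multiplicities dominate those of both $A_1$ and $B_1$, so $A_1$ and $B_1$ are sub-multisets of $AB$. Hence both rules apply to $f(AB)$, producing exactly $f(AB) \rew f((AB-A_1)\cup A_2)$ and $f(AB) \rew f((AB-B_1)\cup B_2)$. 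This is a peak, so if $R_f$ is locally confluent these two terms, which are precisely the two components of the critical pair, must be joinable.

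For the converse I would first isolate an auxiliary monotonicity fact: rewriting is stable under adding context, i.e.\ if $f(X) {\rew}^* f(Y)$ then $f(X \cup Z) {\rew}^* f(Y \cup Z)$ for any multiset $Z$. This follows step by step from the rewrite definition, since a left side $A_1 \subseteq X$ remains a sub-multiset of $X \cup Z$, and $((X \cup Z) - A_1) \cup A_2 = ((X - A_1)\cup A_2) \cup Z$; chaining these one-step liftings gives the claim for ${\rew}^*$.

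With this in hand, the substantive step is to reduce an arbitrary peak to the critical pair in context. Consider any $f(M_1) \rep f(M) \rew f(M_2)$, where the left step uses $f(A_1) \rew f(A_2)$ and the right step uses $f(B_1) \rew f(B_2)$. Applicability forces $A_1 \subseteq M$ and $B_1 \subseteq M$, so the pointwise maximum of their multiplicities is bounded by $M$, i.e.\ $AB \subseteq M$. Writing $R = M - AB$ and using $A_1, B_1 \subseteq AB$, one computes $M_1 = ((AB - A_1)\cup A_2)\cup R$ and $M_2 = ((AB - B_1)\cup B_2)\cup R$; that is, $M_1$ and $M_2$ are exactly the two critical-pair components each extended by the common context $R$. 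By hypothesis the critical pair is joinable at some $f(Q)$, and the monotonicity lemma lifts both joining derivations by $R$, yielding $f(M_1) {\rew}^* f(Q \cup R)$ and $f(M_2) {\rew}^* f(Q \cup R)$. Hence every peak joins, which is local confluence. The degenerate cases fit in: applying a single rule to $f(M)$ is deterministic, so a peak from one rule has $M_1 = M_2$, and when $A_1, B_1$ are disjoint the critical pair is trivially joinable (as already noted), both being special instances of the general argument.

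The main obstacle is conceptual rather than computational: it is the bookkeeping showing that the single superposition $AB$ captures the full overlap of any peak, so that no other ``shapes'' of peak can occur. Here the flat ground structure is what makes this work, and I would emphasize it explicitly—because the rules are ground there are no variable overlaps to discharge via closure under substitution, and because every monomial is a single AC symbol applied to constants there is no nesting and hence no redex-inside-redex case. Consequently the only way two redexes can interact in $f(M)$ is by sharing a common sub-multiset, and all such interactions are subsumed by $AB \subseteq M$ together with the context-stability lemma. Verifying the multiset identities $M_1 = P_1 \cup R$ and $M_2 = P_2 \cup R$ cleanly (tracking multiplicities through $\cup$, $\cap$, and $-$) is the one place demanding care.
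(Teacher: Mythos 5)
Your proof is correct and follows essentially the same route as the paper's: both reduce an arbitrary peak $f(M_1) \rep f(C) \rew f(M_2)$ to the critical pair extended by the common context $D = C - AB$, and then lift the joining rewrites of the critical pair into that context. The only difference is presentational—you make explicit the context-stability lemma ($f(X) \rew^* f(Y)$ implies $f(X \cup Z) \rew^* f(Y \cup Z)$) and the degenerate same-rule case, which the paper compresses into the phrase that all rules joining the critical pair ``also apply.''
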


\begin{proof}
\,Consider a flat term $f(C)$ rewritten in 
two different ways in one step using not necessarily distinct rules:
 $f(A_1) \rew f(A_2),
f(B_1) \rew f(B_2)$. The result of the rewrites is: $(f((C-A_1)
\cup A_2), f((C-B_1) \cup B_2))$. Since $A_1 \subseteq C$ as well as $B_1  \subseteq C$, $AB \subseteq C$; let
$D = C-AB$. The critical pair is then
$(f(D \cup ((AB-A_1) \cup A_2)), f(D \cup ((AB-B_1) \cup B_2)))$,
all rules applicable to the critical pair to show its
joinability, also apply, thus showing the joinability of the
pair. The other direction is straightforward.
The case of when at least one of the rules has a constant on its left side is trivially handled.
\end{proof}

\ignore{\begin{proof}
\,Consider a flat term $f(C)$ rewritten in 
two different ways in one step using not necessarily distinct rules:
 $f(A_1) \rew f(A_2),
f(B_1) \rew f(B_2)$. The result of the rewrites is: $(f((C-A_1)
\cup A_2), f((C-B_1) \cup B_2))$. Since $A_1 \subseteq C$ as well as $B_1  \subseteq C$, $AB \subseteq C$; let
$D = C-AB$. The critical pair is then
$(f(D \cup ((AB-A_1) \cup A_2)) f(D \cup ((AB-B_1) \cup B_2)))$,
all rules applicable to the critical pair to show its
joinability, also apply, thus showing the joinability of the
pair. The other direction is straightforward.
The case of when at least one of the rules has a constant on its left side is trivially handled. $\Box$
\end{proof}
}
Using the above local confluence check, a completion procedure is designed in the classical manner; equivalently, a nondeterministic algorithm can be given as a set of inference rules \cite{BN98}.
If a given rewrite system is not locally confluent, then new 
rules generated from nontrivial critical pairs (that are not joinable)
are added until the resulting rewrite system is locally confluent. New rules
can always be oriented since an ordering on $f$-monomials is assumed to be total. This completion algorithm is a special case of \Groebner basis algorithm on monomials built using a single $AC$ symbol. The result of the completion algorithm is a locally confluent and terminating rewrite system for $ACCC(S)$.

Applying the completion algorithm on the two rules in the above example,
the derived equality is oriented into
a new rule  $f(a,c) \rew a$.
The system $\{ f(a, b) \rew a, f(b, c) \rew b, f(a, c) \rew a\}$
is indeed locally-confluent. This canonical rewrite system is a presentation
of the congruence closure of $\{ f(a, b) = a, f(b, c) = b\}.$
Using the rewrite system, membership in its congruence closure can be decided by rewriting: 
$f(a, b, b) = f(a, b, c) \in ACCC(S)$ whereas $f(a, b, b) \neq f(a, a, b)$.

\ignore{
Applying this construction on the above example, 
the superposition is from 1 and 2.1 is $f(a, c, u_2)$ and the
critical pair is: $ f(a,
b) = f(a, u_2),$. From 1 and 2.0, the superposition is 
$f(a, b, c)$ and the critical pair is the same as 
the previous one: $f(a, b) = f(a, u_2)$. 
Similarly, 2.0 and 2.1 gives the superposition $f(b, c, u_2)$ and
the associated critical pair is $f(b, b) = f(u_2, u_2)$. All
other critical pairs are trivial. 

}

\ignore{
Consider an example from \cite{BTVJAR03}, where $E = \{ f(a, c) =
a, ~ f(c, g(f(b, c))) = b, ~ g(f(b, c)) =f(b, c) \}$. Introducing
new symbols for $f(b, c)$, say $u_1$, and $g(u_1) = u_2$, gives
$\{ f(a, c) = a, ~ f(c, u_2) = b, ~ g(u_1) = u_2, f(b, c) = u_2,
u_1 = u_2 \}$. For simplicity, let us separate out the equations
on uninterpreted symbols: $\{ f(a, c) = a,  ~ f(c, u_2) = b, f(b, c) = u_2,
u_1 = u_2 \}$; since $u_2$ is equal to a $AC$ term, it is substituted in
every equation in which $u_2$ is an argument to $f$, thus
replacing $f(c, u_2) = b$ by $f(b, b, c) = b$, giving
$\{ f(a, c) = a,  ~ f(b, c, c) = b, f(b, c) = u_2,
u_1 = u_2 \}$

Orienting the above equations to terminating rewrite rules gives: $ \{1. ~f(a, c) \rew a, ~ 2.0~ f(b, c) \rew u_2,
~2.1.~ f(b, c, c) \rew b, ~ 2.2. ~g(u_2) \rew  u_2, ~3.~ u_1 \rew
u_2\}$. Applying the above completion algorithm,
the superposition of 1 and 2 is $f(a, b, c, c)$ giving
the critical pair: $\langle f(a, b, c), f(a, b) \rangle$ which
reduce to the same normal form: $f(a, b)$. 
The superposition between 1 and 2.0 gives the critical pair:
$\langle f(a, b), f(a, u_2) \rangle$ giving a new rule $4. f(a, b)
\rew f(a, u_2)$ resulting in the canonical rewrite system $R_S$ consisting
of $\{ 1, 2.1, 2.2, 2.0, 3, 4 \}$.
On the original signature,   it is:  $\{1.~ f(a, c) \rew a, ~2.1.~ f(c, f(b, c)) \rew
  b, ~ 2.2~g( g(f(b, c))) \rew g(f(b, c)), ~3.~f(b, c) \rew
  g(f(b, c)), 4. f(a, b) \rew f(a, a, b)\}$ which is nonterminating
  but confluent, a situation similar to the uninterpreted case
  \cite{KapurRTA97}.

}

A simple completion algorithm is presented for the sake of completeness. It takes as input, a finite set $S_C$ of constant equations and a finite set $S$ of equations on $f$-monomials, and a total ordering $\gg_f$ on $f$-monomials extending a total ordering ordering  $\gg$ on constants, and computes a reduced 
canonical rewrite system $R_f$ (interchangeably written as $R_S$) such that $ACCC(S) = ACCC(S_{R_f})$, where
$S_{R_f}$ is the set of equations $l = r$ for every $l \rew r \in R_f$.

\vspace{2.5mm}
\hspace{-3mm}
{\bf SingleACCompletion($S = S_f \cup S_C$, $\gg_f$):}
\vspace*{1mm}
\begin{enumerate}

\item Orient constant equations in $S_C$ into terminating rewrite rules $R_C$ using $\gg$. Equivalently, using Tarjan's Union-Find data structure, for every constant $c \in C$, compute, from $S_C$, the equivalence class $[c]$ of constants containing $c$ and make $R_C = \cup_{c \in C} \{ c \rew \hat{c} ~| ~c \neq \hat{c} ~\mbox{and}~ \hat{c} ~\mbox{is the least element in}~ [c]\}.$

Initialize $R_f$ to be $R_C$. Let $T := S_f$.

\item Pick an $f$-monomial equation $l = r \in T$ using some selection criterion (typically an equation of the smallest size) and remove it
from $T$. Compute normal forms $\hat{l}, \hat{r}$ using $R_f$. If equal, then discard the equation, otherwise, orient into 
a terminating rewrite rule using $\gg_f$. Without any loss of generality, let the rule be $\hat{l} \rew \hat{r}.$

\item Generate critical pairs between $\hat{l} \rew \hat{r}$ and every $f$-rule in $R_f$, adding them to $T$.\footnote{Critical pair generation can be done incrementally using some selection criterion for pairs of rules to be considered next, instead of generating all critical pairs of 
the new rule with all rules in $R_f$.}

\item  Add the new rule $\hat{l} \rew \hat{r}$ into $R_f$; {\bf interreduce} other rules in $R_f$ using the new rule. 

(i) For every rule $l \rew r$ in $R_f$ whose left side $l$ is reduced by $\hat{l} \rew \hat{r}$, remove $l \rew r$ from $R_f$ and insert $l = r$ in $T$. 

(ii) If $l$ cannot be reduced but $r$ can be reduced, then reduce $r$ by the new rule and generate a normal form $r'$ of the result. Replace
$l \rew r$ in $R_f$ by
$l \rew r'$. 

\item Repeat the previous three steps until the critical pairs among all pairs of rules in $R_f$ are joinable, and $T$ becomes empty.

\ignore{\item During the generation of $R_f$, if a rule with $c \rew m,$
is generated with a constant $c$, then introduce a new constant $u$, extend the constant ordering to include $c \gg u$ and replace $c \rew m$ by two rules $c \rew u$ to be included
in $R_C$ and $m \rew u$ to be included in $R_f$. This ensures that the rules in $R_f$ always have a nonconstant on their left sides.\footnote{This trick is found useful while combining multiple reduced canonical rewrite system the congruence closures of ground equations with different AC symbols.}}

\item Output $R_f$ as the canonical rewrite system associated with $S$.
\end{enumerate}

\vspace*{5mm}
{\bf Example 1:} As a simple example, consider ground equations on an AC symbol $*$: $1.~ a * a * b = a * a, 2. ~a * b * b = b * b$ with the ordering $a > b$ on constants. They are oriented from left to right as terminating rewrite rules (using either total degree ordering or pure lexicographic ordering). The overlap of the two left sides gives the superposition $a * a * b * b$, giving the critical pair: $(a * a * b, a * b * b)$; their normal forms are:$(a * a, b * b)$. The new rule is oriented as: $3.~ a * a \rew b * b.$ This rule simplifies rule 1 to
$1'.~ b * b * b \rew b * b.$ The completion algorithm terminates with $1', 2, 3$ as the canonical rewrite system.

\begin{thm}\label{termination}
The algorithm {\bf SingleACCompletion}  terminates, i.e., in Step 4, rules to $R_f$ cannot be added infinitely often.
\end{thm}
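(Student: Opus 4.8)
The plan is to track how the left-hand sides of the rules in $R_f$ evolve and to exploit the well-quasi-ordering of $f$-monomials under the Dickson ordering $\gg_D$. Fixing the finite constant set $C = \{c_1, \dots, c_k\}$, I identify each $f$-monomial $f(M)$ with the tuple in $\mathbb{N}^k$ recording how often each $c_i$ occurs in $M$; under this identification the submultiset relation is exactly the componentwise order, so $f(A) \gg_D f(B)$ (equivalently, $f(A)$ is rewritable by a rule whose left side is $f(B)$) holds iff the tuple of $B$ is componentwise below that of $A$. Dickson's Lemma, in the form quoted before the statement, says precisely that $(\mathbb{N}^k, \le)$ is a well-quasi-order, equivalently that there is no infinite strictly ascending chain of upward-closed subsets of $\mathbb{N}^k$ (the Ascending Chain Condition for monomial ideals on $k$ constants).

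With the $n$-th rule addition in Step 4, associate the set $L_n$ of left sides of the rules present in $R_f$ immediately after that addition, and let $I_n \subseteq \mathbb{N}^k$ be the upward closure of $L_n$ under $\gg_D$, i.e.\ the set of all $f$-monomials reducible by some rule of $R_f$. The core of the argument is to show that $I_1 \subsetneq I_2 \subsetneq \cdots$ is strictly increasing. First I would establish \emph{strict growth}: when a new rule $\hat l \rew \hat r$ is added, $\hat l$ was computed in Step 2 as a normal form with respect to the current $R_f$ (which is unchanged between Steps 2 and 4, since Step 3 only enlarges $T$), so $\hat l$ is irreducible; that is, $\hat l$ is not a superset of any left side already present, whence $\hat l \notin I$ for the ideal $I$ just before the addition. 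Adding $\hat l$ therefore enlarges the upward-closed set strictly. Second I would check \emph{monotonicity under interreduction}: rules are deleted only in Step 4(i), where a rule $l \rew r$ is discarded precisely because $\hat l \subseteq l$, so $l$ lies in the upward closure of the retained monomial $\hat l$; deleting such a redundant generator leaves the upward-closed set unchanged, while Step 4(ii) alters only right sides. Consequently the ideal never shrinks between additions, so $I_n \subsetneq I_{n+1}$ for every $n$.

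Given the strictly ascending chain $I_1 \subsetneq I_2 \subsetneq \cdots$, Dickson's Lemma (equivalently, the ACC for monomial ideals over $k$ constants) forbids it from being infinite, so only finitely many new rules are ever added in Step 4, which is exactly the claim.

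The step I expect to be the main obstacle is the monotonicity bookkeeping of the second paragraph: one must argue carefully that none of the incidental activity between two successive additions --- deletion of rules in Step 4(i), reinsertion of their equations into $T$, right-side normalization in Step 4(ii), and accumulation of critical pairs in Step 3 --- can cause the upward-closed set of left sides to decrease, since only then is the strict chain $I_n \subsetneq I_{n+1}$ guaranteed. Everything else reduces to the single observation that a normal-form left side escapes the current monomial ideal, combined with the well-quasi-ordering of $\mathbb{N}^k$.
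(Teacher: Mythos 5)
Your proof is correct and follows essentially the same route as the paper's: both rest on the observation that every left side added in Step 4 is a normal form, hence not $\gg_D$-above any left side currently in $R_f$, and both then conclude by Dickson's Lemma. The paper casts the contradiction as an infinite collection of pairwise $\gg_D$-noncomparable left sides, while you cast it as a strictly ascending chain of upward-closed sets of monomials violating the ascending chain condition; these are equivalent formulations of the same well-quasi-ordering fact, and your explicit monotonicity bookkeeping for the interreduction step (deleted left sides stay in the upward closure of the new left side) makes rigorous a point the paper's proof only remarks on in passing.
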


\begin{proof}
By contradiction. A new rule $\hat{l} \rew \hat{r}$  in Step 4 of the algorithm is added to $R_f$ only if no other rule can reduce it, i.e, for every rule $l \rew r \in R_f,$
$\hat{l}$ and $l$ are noncomparable in the Dickson ordering $\gg_D$.
For $R_f$ to be infinite, implying the nontermination of the algorithm means that $R_f$ must include
infinitely many noncomparable left sides in $\gg_D$,
\ignore{ The left side of the new rule $l_i \rew r_i$  cannot be reduced by any rule previously added, i.e., all rules $l_j \rew r_j, j < i,$ $l_i \gg_D l_j$ is false. Since rules
in $R_f$ are always in normal form due to interreduction, $R_f$ being infinite means there are infinitely many $f$-monomials serving as the left sides of rules which are noncomparable in $\gg_D$,} a contradiction to the Dickson's Lemma.
This implies that there cannot be infinitely many noncomparable $f$-monomials serving as the left hand sides of rules in $R_f$. In the interreduction step, new rules, if any, added to replace a deleted rule $l \rew r$ if $l$ can be reduced, are always smaller in $\gg_f$.
\end{proof}

\begin{thm}\label{correctness}
Given a finite set $S$ of ground equations with a single AC symbol $f$ and constants, and a total admissible ordering $\gg_f$ on flattened AC terms and constants, a canonical rewrite system $R_f$ is generated by the above completion procedure, which serves as a decision procedure for $ACCC(S)$. 
\end{thm}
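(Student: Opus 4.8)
The plan is to establish three properties of the output $R_f$ and then read off the decision procedure: that the induced rewrite relation $\rew$ terminates, that $R_f$ is confluent, and that $ACCC(S_{R_f}) = ACCC(S)$. Termination of $\rew$ is already available from admissibility of $\gg_f$: since a rule $f(A_1) \rew f(A_2)$ is oriented only when $f(A_1) \gg_f f(A_2)$, the compatibility property yields $f(M) \gg_f f(M')$ for every single step $f(M) \rew f(M')$, and a total admissible ordering on $f$-monomials over finitely many constants is well-founded (it contains $\gg_D$, and well-foundedness follows from Dickson's Lemma). Hence $\rew$ has no infinite chains and every $f$-monomial has at least one normal form.

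Next I would argue confluence. Theorem~\ref{termination} guarantees that the completion procedure halts; at that point the worklist $T$ is empty and, by the loop condition of Step~5, every critical pair between every pair of distinct rules of $R_f$ has been generated and reduced to a joinable pair. By the local confluence lemma this makes $R_f$ locally confluent, and combining local confluence with termination of $\rew$ via Newman's Lemma yields confluence. Thus $R_f$ is canonical and every $f$-monomial has a \emph{unique} normal form $\hat{\cdot}$.

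The third ingredient is the invariant $ACCC(S_{R_f \cup T}) = ACCC(S)$, which I would maintain across every operation of the algorithm. Orienting constant and monomial equations in the initialization preserves the generated congruence. In Step~2, replacing $l = r$ by $\hat l = \hat r$ only rewrites both sides by rules already in $R_f$, and each such step is a member of $ACCC(S_{R_f})$, so the closure is unchanged; discarding an equation when $\hat l = \hat r$ removes only a trivial consequence. Each critical pair $(f((AB - A_1) \cup A_2), f((AB - B_1) \cup B_2))$ added in Step~3 is obtained by rewriting the common superposition $f(AB)$ in two ways, so its two sides are already equal in $ACCC(S_{R_f})$, and adding it as an equation changes neither side of the invariant. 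The interreduction of Step~4 either moves a rule back into $T$ as an equation or replaces $l \rew r$ by $l \rew r'$ with $r \rew^* r'$, both sound transformations. Since $T$ is empty on termination, the invariant collapses to $ACCC(S_{R_f}) = ACCC(S)$. The decision procedure then follows: a query $f(M_1) = f(M_2)$ lies in $ACCC(S) = ACCC(S_{R_f})$ iff $f(M_1)$ and $f(M_2)$ are joinable, which by confluence and termination is equivalent to their unique normal forms being the \emph{identical} multiset, a computable test.

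I expect the main obstacle to be the careful bookkeeping of the invariant through the interreduction of Step~4, where rules are simultaneously deleted, re-oriented, and reinserted into $T$: one must check that no equational consequence is silently gained or lost, and that these transformations do not interfere with the eventual emptiness of $T$ or with reducedness. The confluence argument itself is routine once the local confluence lemma and termination of $\rew$ are in hand, so it is the soundness of the individual completion steps, rather than confluence per se, that requires the closest attention.
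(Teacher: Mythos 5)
Your proposal is correct and follows exactly the route the paper intends: the paper's own ``proof'' is a one-sentence remark that the argument is the classical correctness proof for completion (termination of $\rew$ from admissibility, local confluence via the critical-pair lemma at termination of the procedure, Newman's Lemma, and soundness of each completion step), which is precisely what you have spelled out, including the invariant $ACCC(S_{R_f \cup T}) = ACCC(S)$ and the normal-form decision test. Your version is simply the fully worked-out form of the paper's sketch, with no substantive deviation.
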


The proof of the theorem is classical, typical of a correctness proof of a completion algorithm based
on ensuring local confluence by adding new rules generated from superpositions whose critical pairs are not
joinable.

\noindent
\begin{thm}\label{unique}
Given a total ordering $\gg_f$ on $f$-monomials, there is a unique reduced canonical rewrite system associated with $S_f$.
\end{thm}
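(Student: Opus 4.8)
The plan is to show that any two reduced canonical rewrite systems $R_1$ and $R_2$ that are compatible with the same total admissible ordering $\gg_f$ and that both present $ACCC(S_f)$ must be literally the same set of rules. This is the monomial analogue of the classical uniqueness theorem for reduced canonical term rewriting systems, so the strategy is to characterize both the left-hand sides and the right-hand sides of the rules purely in terms of data that depend only on the congruence $ACCC(S_f)$ and on $\gg_f$, never on the particular system.

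First I would establish that $R_1$ and $R_2$ induce the same normal-form function. Since each $R_i$ is canonical, every $f$-monomial has a unique $R_i$-normal form, and since every rule strictly decreases $\gg_f$, I would argue this normal form is exactly the $\gg_f$-least element of its $ACCC(S_f)$-congruence class: if $n$ is the $R_i$-normal form of $m$ and $n'$ is the least element of $[m]$, then $n'$ reduces by $R_i$ to its normal form, which by confluence equals $n$, so $n' \gg_f n$ or $n' = n$, while minimality of $n'$ gives $n \gg_f n'$ or $n = n'$, forcing $n = n'$. Because both systems present the same congruence and the least element of a class is a purely semantic object, $R_1$ and $R_2$ have identical normal forms; call this common map $\mathrm{nf}$. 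In particular an $f$-monomial $m$ is reducible by $R_i$ iff $m \neq \mathrm{nf}(m)$, i.e. iff $m$ is not the $\gg_f$-least element of its class, a condition independent of $i$.

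Next I would characterize the left-hand sides. Using that each $R_i$ is reduced, I would show that $f(M)$ is a left side of $R_i$ iff $f(M)$ is reducible while every proper submultiset $f(M')$ with $M' \subsetneq M$ is in normal form. For the forward direction, reducedness forbids another rule's left side from being a proper submultiset of $M$ (otherwise $f(M)$ would be reducible by that other rule, since $\rew_{AC}$ applies exactly when a left side is $\subseteq M$), so no proper submultiset of $M$ is reducible. For the converse, if $f(M)$ is reducible then some rule left side $B_1 \subseteq M$; were $B_1 \subsetneq M$ proper, $f(B_1)$ would be a reducible proper submultiset, a contradiction, so $B_1 = M$ and $f(M)$ is itself a left side. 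Since reducibility and the submultiset relation $\gg_D$ are determined by $ACCC(S_f)$ and $\gg_f$ alone, $R_1$ and $R_2$ have the same set of left-hand sides.

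Finally, for a fixed left side $l$ the right side must be its normal form: the admissibility (subterm) property rules out $l$'s own rule applying to its right side, since an oriented rule $f(A_1) \rew f(A_2)$ with $f(A_1) \gg_f f(A_2)$ cannot have $A_1 \subseteq A_2$, and reducedness forbids any other rule from reducing it, so the right side is irreducible and hence equals $\mathrm{nf}(l)$, the same in both systems. Matching left sides with their forced right sides yields $R_1 = R_2$. I expect the main obstacle to be the two structural lemmas, namely that the normal form coincides with the $\gg_f$-least class representative and that the left sides admit the submultiset characterization above; both hinge on the admissibility properties (subterm and compatibility) of $\gg_f$ and on the fact that $\rew_{AC}$ rewrites precisely along multiset inclusion, while the remaining bookkeeping is routine.
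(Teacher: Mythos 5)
Your proof is correct, but it takes a genuinely different route from the paper's. The paper argues by contradiction on a minimal counterexample: it picks the $\gg_f$-least rule $l \rew r$ on which $R_1$ and $R_2$ differ, uses canonicity of $R_2$ to produce a rule $l' \rew r' \in R_2$ with $l \gg_D l'$, observes that any such strictly smaller rule must already belong to $R_1$ (by minimality of the differing rule), so it would reduce $l$ and violate reducedness of $R_1$; the residual case $l' = l$, $r' \neq r$ is killed separately by showing the larger right side is reducible. You instead give a direct structural characterization that never needs a contradiction: both systems compute the same normal-form map (each normal form being the $\gg_f$-least element of its $ACCC(S_f)$-class), the left sides are exactly the $\subseteq$-minimal reducible monomials, and each right side is forced to be the normal form of its left side, so $R_1 = R_2$ rule by rule. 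This is the Métivier-style uniqueness argument, and it buys something the paper's proof does not: an intrinsic description of the unique reduced system purely in terms of $ACCC(S_f)$ and $\gg_f$, and it sidesteps the slightly delicate bookkeeping of ordering whole rules and selecting a "least differing rule." The paper's argument, in exchange, is shorter and stays entirely local to two rules. One small point you should make explicit: the existence of a $\gg_f$-least element in each (possibly infinite) congruence class requires $\gg_f$ to be well-founded, which does follow from admissibility (subterm plus compatibility) via Dickson's lemma but deserves a sentence; alternatively, you can avoid least elements entirely by noting that if $n_1 = \mathrm{nf}_{R_1}(m) \neq n_2 = \mathrm{nf}_{R_2}(m)$, then $n_2 \rew^*_{R_1} n_1$ forces $n_2 \gg_f n_1$ and symmetrically $n_1 \gg_f n_2$, a contradiction, which gives $\mathrm{nf}_{R_1} = \mathrm{nf}_{R_2}$ directly from canonicity and the fact that rewriting strictly decreases $\gg_f$.
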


\begin{proof}
By contradiction. Suppose there are two distinct reduced canonical rewrite systems $R_1$ and $R_2$ associated with $S_f$ for the same $\gg_f$.
Pick the least rule $l \rew r$ in $\gg_f$  on which $R_1$ and $R_2$ differ; wlog, let $l \rew r \in R_1$.
Given that $R_2$ is a canonical rewrite system for $S_f$ and $l = R \in ACCC(S_f)$, $l$ and $r$ must reduce using $R_2$ implying that there is a rule $l' \rew r' \in R_2$ such that $l \gg_D l'$; since $R_1$ has all the rules of $R_2$ smaller than $l \rew r$, $l \rew r$ can be reduced in $R_1$, contradicting the assumption that $R_1$ is reduced.
If $l \rew r' \in R_2$ where $r' \neq r$ but $r' \gg_f r$, then $r'$ is not reduced implying that $R_2$ is not reduced.
\end{proof}

\ignore{\begin{proof}
Proof by Contradiction. A new rule $\hat{l} \rew \hat{r}$  in Step 4 of the algorithm is added to $R_f$ only if no other rule can reduce it, i.e, for every rule $l \rew r \in R_f,$
$\hat{l}$ and $l$ are noncomparable in $\gg_D$.
For $R_f$ to be infinite, implying the nontermination of the algorithm means that $R_f$ must include
infinitely many noncomparable left sides in $\gg_D$,
\ignore{ The left side of the new rule $l_i \rew r_i$  cannot be reduced by any rule previously added, i.e., all rules $l_j \rew r_j, j < i,$ $l_i \gg_D l_j$ is false. Since rules
in $R_f$ are always in normal form due to interreduction, $R_f$ being infinite means there are infinitely many $f$-monomials serving as the left sides of rules which are noncomparable in $\gg_D$,} a contradiction to Dickson's Lemma.
\end{proof}
}

\ignore{The termination of critical pair generation in the completion
procedure is guaranteed by 
Dickson's lemma \cite{WeisBook}. Nontermination would imply that infinitely many rules with left sides which cannot be reduced by other existing rules can be generated contradicting that there cannot be noncomparable $f$-monomials in the Dickson ordering $\gg_D$ among an infinite subset of $f$-monomials. In other words, an infinite set of $f$-monomials must include two $f$-monomials
comparable by a sub multiset ordering.
The completion algorithm cannot keep adding rules which are irreducible with respect to existing rules. More details about how Dickson's lemma is used for proving termination of \Groebner basis like completion algorithms on associative commutative structures can be found in \cite{WeisBook}.
}

\ignore{
\begin{thm}
Given a total ordering $\gg_f$ on $f$-monomials, there is a unique reduced canonical rewrite system associated with $S_f$
\end{thm}
}
\ignore{\vspace*{-4mm}
\begin{proof}
Proof by Contradiction. Suppose there are two distinct reduced canonical rewrite systems $R_1$ and $R_2$ associated with $S_f$ for the same $\gg_f$.
Pick the least rule $l \rew r$ in $\gg_f$  on which $R_1$ and $R_2$ differ; wlog, let $l \rew r \in R_1$.
Given that $R_2$ is a canonical rewrite system for $S_f$ and $l = R \in ACCC(S_f)$, $l$ and $r$ must reduce using $R_2$ implying that there is a rule $l' \rew r' \in R_2$ such that $l \gg_D l'$; since $R_1$ has all the rules of $R_2$ smaller than $l \rew r$, $l \rew r$ can be reduced in $R_1$, contradicting the assumption that $R_1$ is reduced.
If $l \rew r' \in R_2$ where $r' \neq r$ but $r' \gg_f r$, then $r'$ is not reduced implying that $R_2$ is not reduced.
\end{proof}
}

The complexity of this specialized decision procedure has been
proved to require exponential space and double exponential upper
bound on time complexity \cite{Mayr82}. 

\ignore{It should be noted that the above problem/algorithm has been extensively studied in the literature in the early 80's \cite{BL81,Mayr82,LeChenadec83,KapRTA85}.}

\ignore{It is easy to prove that for a given total ordering on $f$-monomials $\gg_f$, there is a unique reduced canonical
rewrite system among
all canonical rewrite systems generated from $S_f$ using $\gg_f$; in this sense, such a unique reduced canonical rewrite system is itself canonical in the class of equivalent canonical rewrite systems. }

The above completion algorithm generates a unique reduced canonical rewrite system $R_f$ for the congruence closure $ACCC(S)$ because of interreduction of rules whenever a new rule is added to $R_f$. $R_f$ thus serves as its unique presentation. Using the same ordering $\gg_f$ on $f$-monomials; two sets $S_1, S_2$ of AC ground equations have
identical (modulo presentation of multisets
as AC terms) reduced canonical rewrite systems  $R_{S_1}=R_{S_2}$ iff $ACCC(S_1) = ACCC(S_2)$, thus generalizing the result for the uninterpreted case. 
\ignore{Every $f$-monomial in $GT(\{f\}, C)$ has its canonical signature--its canonical form computed using $R_{f}$ generated from $S$.}

\ignore{In the absence of any restriction on a total ordering $\gg_f$ on $f$-monomials, a reduced canonical rewrite system $R_S$ can have
a rule in which the left side is a constant whereas its right side is a nonconstant $f$-monomial. This would particularly be the case if pure lexicographic ordering a la \Groebner basis algorithm is used in which a constant symbol $a$, say, is $\gg$ any $f$-monomial in which $a$ does not appear.\footnote{Such an ordering can also be defined using $lpo$ in which a constant is bigger than every symbol including $f$.}
}

\section{Congruence Closure of a Richer AC Theory with Idempotency, Nilpotency, Identity, and their combination}

If an AC symbol $f$ has additional properties such as nilpotency, idempotency and/or unit, the above completion algorithm can be easily extended by modifying the local confluence check. Along with the above discussed critical pairs from a distinct pair of rules, additional critical pairs must be considered from each rule in $R_f$. This section explores such extensions in case of an AC symbol having idempotency, nilpotency, cancelativity and being an Abelian group. In each case, it is shown that the joinability of additional critical pairs suffices to ensure generation of a canonical rewrite system from ground equations on an AC symbol with additional semantic properties.

\subsection{Idempotency}

If an AC symbol $f$ is also idempotent, implying $\forall x, f(x, x) = x$, the above algorithm {\bf SingleACCompletion} is extended with minor modifications. In the presence of idempotency, multiple occurrences of the same constant as arguments to an idempotent AC symbol can be normalized to a single occurrence, implying that the arguments to an AC symbol can be represented as sets, instead of multisets. 
For any rule $f(M) \rew f(N)$ where $f$ is idempotent and $M, N$ do not have duplicates, for every constant $a \in M$, a superposition 
$f(M \cup \mset{a})$ is generated,
leading to a new critical pair 
$(f(N \cup \mset{a}), f(M))$ 
and check its joinability. This is in addition to critical pairs generated from pairs of distinct rules used in {\bf SingleACCompletion.}

For an example, from $f(a, b) \rew c$ with the idempotent $f$, 
the superpositions are $f(a, a, b)$ and $f(a, b, b)$, leading to the critical pairs: $(f(a, c), f(a, b))$ and $(f(b, c), f(a, b))$, respectively, which further reduce to $(f(a, c), c)$ and $(f(b, c), c)$, respectively.

With the addition of critical pairs generated from each rule in a rewrite system, the local confluence check becomes as follows:

\noindent
\begin{lem}
An AC rewrite system $R_{S}$ with an idempotent AC symbol $f$ (with $f(x, x) = x$) in which
no rule in $R_S$ has $f$-monomials with duplicates, is {\em locally confluent} iff (i) the
critical pair: $(f((AB-A_1) \cup A_2), f((AB-B_1)
\cup B_2))$ between every pair of distinct rules $f(A_1) \rew f(A_2),
f(B_1) \rew f(B_2)$  is joinable, where
$AB = (A_1 \cup B_1) - (A_1 \cap B_1)$, and (ii) for every rule $f(M) \rew f(N) \in R_S$ and for every constant $a \in M$, the critical pair,
$(f(M), f(N \cup \mset{a})$,
is joinable.
\end{lem}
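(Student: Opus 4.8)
The plan is to adapt the proof of the basic local confluence lemma (the earlier \textbf{Lemma} for plain AC rewriting) by accounting for the extra rewrite step that idempotency provides, namely collapsing $f(M \cup \mset{a})$ to $f(M)$ whenever $a \in M$. The ``if'' direction is the substantive one, so I would first fix the set-up: assume all listed critical pairs are joinable, take an arbitrary $f$-monomial $f(C)$ (now a set, since idempotency lets us treat duplicate-free representatives) together with two one-step rewrites to $u_1$ and $u_2$, and show $u_1, u_2$ are joinable. The novelty is that a ``rewrite'' of $f(C)$ now comes in two flavours: (a) an ordinary rule application $f(A_1) \rew f(A_2)$ with $A_1 \subseteq C$, and (b) an idempotent contraction, which is really the rule application $f(A_1)\rew f(A_2)$ applied to a monomial where some argument $a$ is duplicated, i.e. the critical pairs of type (ii). So I would do a case analysis on the flavours of the two competing steps.

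The first step is to reduce every case to one of the two critical-pair families. When both steps are ordinary rule applications, the argument is verbatim the proof of the first \textbf{Lemma}: set $AB = (A_1\cup B_1)-(A_1\cap B_1)$, $D = C - AB$, and observe that the type-(i) critical pair, joinable by hypothesis, lifts to $f(C)$ by adding the common context $D$ since rewriting is stable under multiset union with $D$. The essential point I must verify is that this stability under context still holds once we normalize duplicates away: if the type-(i) critical pair $(f((AB-A_1)\cup A_2), f((AB-B_1)\cup B_2))$ is joinable, its joinability is preserved when the common part $D$ is adjoined, even though adjoining $D$ may create duplicates that idempotency then removes. The second step is the genuinely new case: a rule application that overlaps a type-(b) idempotent contraction. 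Here I would argue that any idempotent collapse at $f(C)$ using rule $f(A_1)\rew f(A_2)$ where $A_1\not\subseteq C$ but $A_1 \subseteq C\cup\mset{a}$ for a duplicated argument $a$ is exactly witnessed by a type-(ii) critical pair $(f(M), f(N\cup\mset{a}))$ with $f(M)\rew f(N)$ the relevant rule, so its joinability is again given by hypothesis and lifts by the same context argument.

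The main obstacle I anticipate is the bookkeeping of duplicates under the set-versus-multiset convention. Because idempotency forces arguments to be sets, a single rewrite step can simultaneously trigger an idempotent collapse, so I must be careful that ``$A_1 \subseteq C$'' and the resulting $M' = (C-A_1)\cup A_2$ are interpreted as sets with duplicates removed, and that the superposition $f(M\cup\mset{a})$ genuinely captures every way idempotency can interact with a rule left side. Concretely, the worry is a step where the constant $a$ being duplicated belongs to the fired rule's left side $A_1$ itself, creating an overlap that is neither a pure type-(i) nor a pure type-(ii) pattern; I would handle this by showing such an overlap is subsumed, after normalization, by a type-(ii) critical pair of the same rule, so no third family of critical pairs is needed. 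Once the case analysis is complete and every competing pair of steps is matched to a hypothesized-joinable critical pair that lifts by context, joinability of $u_1$ and $u_2$ follows, establishing local confluence.

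For the converse (the ``only if'' direction), the argument is routine: each critical pair of type (i) or type (ii) arises from rewriting a specific superposition term ($f(AB)$ for type (i), $f(M\cup\mset{a})$ for type (ii)) in two ways, so if $R_S$ is locally confluent then by definition every such critical pair must be joinable. I would state this briefly and note, as in the earlier lemma, that the case where a rule has a constant left side is handled trivially.
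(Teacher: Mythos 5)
Your overall skeleton---classify the peaks, send rule-vs-rule peaks to the type-(i) critical pairs exactly as in the plain AC lemma, send the peaks created by idempotency to the type-(ii) pairs, and lift joinability by adjoining the common context---is the paper's approach, and your rule-vs-rule case and ``only if'' direction are fine. But your treatment of idempotency has a genuine gap, rooted in your opening move of passing to duplicate-free set representatives. The lemma, as the paper proves it, concerns the rewrite relation of $R_S$ \emph{together with the explicit rule} $f(x,x)\rew x$ acting on multisets; the entire purpose of hypothesis (ii) is to resolve peaks in which one of the two competing one-step rewrites is an application of $f(x,x)\rew x$. Once you normalize terms to sets, such steps no longer exist, every peak is rule-vs-rule, and hypothesis (ii) is never invoked---you would be proving a different (weaker) statement about normalized rewriting. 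The difference is not cosmetic: for $R_S=\{f(a,b)\rew c\}$, rewriting on sets is trivially locally confluent, yet the type-(ii) pair $(f(a,b),\,f(a,c))$ is not joinable, so under your reading the ``only if'' direction of the lemma would be false. The multiset peak $f(a,a,b)$, which collapses to $f(a,b)$ by idempotency and rewrites to $f(a,c)$ by the rule, is exactly what your framing erases.

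Your attempt to describe the new case is then incoherent under the lemma's hypotheses: you characterize it as a rule application $f(A_1)\rew f(A_2)$ with $A_1\not\subseteq C$ but $A_1\subseteq C\cup\mset{a}$, i.e.\ a match that needs a duplicated argument. Since the lemma assumes no rule has monomials with duplicates, a duplicate-free $A_1$ satisfies $A_1\subseteq C\cup\mset{a}$ iff $A_1\subseteq C$, so this case is empty; it is also not the peak that the type-(ii) pairs are designed for. The correct case (the paper's case (ii)) is: a multiset $C$ containing some constant $a$ at least twice is rewritten in one step by $f(x,x)\rew x$ to $f(C-\mset{a})$ and in one step by a rule $f(M)\rew f(N)$ with $M\subseteq C$ to $f((C-M)\cup N)$; writing $C=A\cup M\cup\mset{a}$, these two results are the components of the type-(ii) pair $(f(M),\,f(N\cup\mset{a}))$ with the common context $A$ adjoined, so joinability lifts whether or not $a\in M$ (in particular, $a\in M$ \emph{is} the type-(ii) situation, not a third pattern needing separate subsumption as you suggest). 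Finally, your case analysis omits the peak in which both steps are idempotency steps (the paper's case (i)); it is easy---removing one duplicated constant $a$ leaves another duplicated constant $b$ still duplicated, so both sides rewrite further to $f(C-\mset{a,b})$---but it must appear for the case analysis to be exhaustive.
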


\begin{proof}
\,Consider a flat term $f(C)$, possibly with duplicates, rewritten in 
two different ways in one step using not necessarily distinct rules and/or $f(x, x) \rew x.$
There are three cases: (i) $f(C)$ is rewritten in two different ways in one step using $f(x,x) \rew x$ to 
$f(C -\mset{a})$ and $f(C - \mset{b})$ 
with $a \neq b$: The idempotent rule can be applied again on both sides giving 
$f(C-\mset{a, b})$ since 
$C - \mset{a}$ includes 
$\mset{b,b}$
and $C - \mset{b}$ includes 
$\mset{a,a}$.

(ii) $f(C)$ is rewritten in two different ways, with one step using $f(x, x) \rew x$ and another using $f(M) \rew f(N)$: An application of the idempotent rule implies that $C$ includes a constant $a$, say, at least twice; the result of one step rewriting is: 
$(f(C - \mset{a}), f((C-M) \cup N)$.
This implies there exists a multiset $A$ such that 
$C = A \cup M \cup \mset{ a}$.
The critical pair generated from $f(M) \rew f(N)$ is 
$(f(M), f(N \cup \mset{a}))$. 
Irrespective of whether
$a \in M$ or not, 
$C - \mset{a}$ 
is a supermultiset of $M$; to show joinability of 
$(f(M), f(N \cup \mset{a})$
apply all rewrites to 
$f(C - \mset{a})$ 
as well as to 
$f((C-M) \cup N)$.\footnote{Hence the need for this extra critical pair due to idempotency.}

\ignore{The rewrite steps used to show the joinability of $(f(M), f(N \cup \{\{a\}\})$ apply also on
$(f(C - \{\{a\}\}), f((C-M) \cup N)$,

showing joinability.}
The proof of the third case is the same as that of Lemma 3.3 and is omitted.
\end{proof}

The completion algorithm {\bf SingleACCompletion} extends for an idempotent AC symbol by considering critical pairs for every rule in $R_f$. Its termination proof is similar based on Dickson's lemma. Theorems \ref{correctness} and \ref{unique} also extend to the idempotent case with similar related proofs.

{\bf Example 2:} Revisiting Example 1 from the previous section with the additional assumption that $*$ is idempotent, the equations simplify to:
$a * b = a, a * b = b.$ When oriented into terminating rewrite rules, $a * b \rew a, ~a * b\rew b$ generating a superposition $a * b$ with the associated critical pair: $(a, b)$ leading to a new rule: $a \rew b,$ assuming $a > b.$ This rule simplifies all the other two rules, leading to a canonical rewrite system $\{ a \rew b\}.$

\ignore{Consider an $f$-monomial
$f(M)$ where $f$ is idempotent. If $M$ does not have a duplicate element, say $a$, then the idempotent does not apply. Otherwise, $f(M)$ can be reduced by $f(x, x)$ in one case, and in another case by some other rule,  say $f(A) \rew f(B) \in R_f$, where $A$ does not have any duplicate element, generating the $(f(M-\{\{a\}\}), f((M-A) \cup B)$. Their joinability follows from the joinability of the critical pair $(f(B), f(A-\{a\}))$ generated from
$f(A) \rew f(B)$ since
there is $M'$ such that
$M = M' \cup A$.}

\ignore{

To capture these semantic properties in the multiset representation of flattened AC terms, $f(M)$ is further normalized as: (i) in case of nilpotency, identical constants in $M$ cancel each other, since $f(x, x ) = e$ for any term $x$ and replaced by the identity $e$ of $f$, (ii) in case of idempotency, i.e., $f(x, x) = x$, multisets reduce to sets, and (iii) in case of unit (identity), i.e., $f(x, e) = x$, the identity element is deleted from the multiset. Such $f$-monomials are called normalized with respect to these axioms.
The definition of rewriting remains the same. All equations and rules are assumed to be normalized using these universal axioms. Due to these additional properties, new superpositions may have to be considered to ensure local confluence. }

\ignore{
Case 1 $f$ is idempotent, then for
any rule $f(a, b) \rew M$, $f(a, b) = f(a, a, b) = f(a, b, b)$
two new critical pairs 
$(f(a, c), c), (f(b, c),  c)$ must be checked for joinability.
With these additional critical pairs, the local-confluence proof of Lemma 3.3 can be appropriately adapted to work for an idempotent AC rewrite system; the termination proof of the modified algorithm extends.}

\subsection{Nilpotency}

Consider a nilpotent AC symbol $f$ with the property that $\forall x, f(x, x) = e$ where $e$ is a constant, typically standing for the identity of $f$ satisfying the property that $\forall x, f(x, e) = x$. Below we do not assume that $f(x, e) = x$; that discussion is postponed to a later subsection where
an AC symbol $f$ has identity as well as is idempotent.

For every
rule $f(M) \rew f(N) \in R$, generate for every $a \in M$, an additional critical pair
$(f(N \cup \mset{a}), f((M-\mset{a}) \cup \mset{e})).$ 
With this additional check for joinability, local confluence can be proved as shown below. 

\noindent
\begin{lem}
An AC rewrite system $R_{S}$ with a nilpotent AC symbol $f$ (with $f(x, x) = e$), such that no monomial in a rule in $R_S$ has duplicates,
is {\em locally confluent} iff (i) the
critical pair: $(f((AB-A_1) \cup A_2), f((AB-B_1)
\cup B_2))$ between every pair of distinct rules $f(A_1) \rew f(A_2),
f(B_1) \rew f(B_2)$  is joinable, where
$AB = (A_1 \cup B_1) - (A_1 \cap B_1)$, and (ii) for every rule $f(M) \rew f(N) \in R_S$ and for every constant $a \in M$, the critical pair,
$(f(N \cup \mset{a}), f(M-\mset{a} \cup \mset{e}))$ is joinable.
\end{lem}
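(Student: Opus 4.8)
The plan is to mirror the structure of the local-confluence proof for the plain AC case (Lemma 3.3) and its idempotent analogue, adapting the case analysis to the nilpotent rewrite $f(x,x) \rew e$. I would fix a flat term $f(C)$, possibly with duplicate constants, and consider two one-step rewrites of $f(C)$ using not necessarily distinct rules, where now one or both rewrites may use the nilpotent rule. The argument splits into three cases exactly as before: (i) both rewrites use $f(x,x) \rew e$, (ii) one rewrite uses $f(x,x) \rew e$ and the other uses an ordinary rule $f(M) \rew f(N) \in R_S$, and (iii) both rewrites use ordinary rules, the last case being identical to Lemma 3.3 and hence omitted.

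For case (i), suppose $f(C)$ contains a pair $\mset{a,a}$ and a pair $\mset{b,b}$ and is reduced to $f((C-\mset{a,a})\cup\mset{e})$ and to $f((C-\mset{b,b})\cup\mset{e})$. If $a \neq b$, both remaining terms still contain the other duplicated pair, so a further application of $f(x,x)\rew e$ on each side produces the common term $f((C-\mset{a,a,b,b})\cup\mset{e,e})$, establishing joinability; the subcase where the same duplicated constant is used on both sides is immediate. The interesting case is (ii): here an application of $f(x,x)\rew e$ forces $C$ to contain some constant $a$ at least twice, so $C = A \cup M \cup \mset{a}$ for some multiset $A$ (since $M$ is duplicate-free, the second copy of $a$ lies in $A \cup \mset{a}$), and the two results are $f((C-\mset{a,a})\cup\mset{e})$ and $f((C-M)\cup N)$. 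The extra critical pair the lemma prescribes, namely $(f(N\cup\mset{a}), f((M-\mset{a})\cup\mset{e}))$, is precisely the overlap of $f(M)\rew f(N)$ with the nilpotent rule at the shared constant $a$; assuming it is joinable, I would lift the rewrite sequence witnessing that joinability by applying the same rules inside the common context, thereby joining $f((C-\mset{a,a})\cup\mset{e})$ and $f((C-M)\cup N)$.

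The main obstacle I anticipate is the bookkeeping in case (ii) around whether $a \in M$ or $a \notin M$, and the fact that nilpotency can \emph{decrease} multiplicities and inject the identity $e$, which then may create fresh duplicates (e.g.\ if $e$ or some constant in $N$ already occurs), triggering further nilpotent reductions. Unlike idempotency, where arguments can be kept as sets, here I must track multiplicities carefully and verify that the context $C - M$ (equivalently $A \cup \mset{a}$ minus the copies consumed) is correctly matched against the critical-pair components after the overlap at $a$ is accounted for. I would handle this by arguing at the level of multiset differences rather than positions, showing that whatever rewrites join the overlap also apply verbatim within the larger multiset context, exactly as in the final paragraph of the idempotent lemma's proof. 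The converse direction, that local confluence implies joinability of all these critical pairs, is routine since each listed critical pair arises as an actual pair of one-step divergent rewrites of its superposition term, and so is stated briefly.
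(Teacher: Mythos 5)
Your proposal is correct and takes essentially the same route as the paper's own proof: the identical three-case split on which of the two diverging rewrites use $f(x,x) \rew e$, the commutation argument (applying nilpotency once more on each side) when both do, the lifting of the prescribed critical pair $(f(N \cup \mset{a}), f((M-\mset{a}) \cup \mset{e}))$ through the surrounding multiset context when exactly one does, and deferral of the two-ordinary-rules case to Lemma 3.3. The one point the paper spells out that you only gesture at is the subcase $a \notin M$ in case (ii), where no critical pair is involved and joinability follows from the two redexes being disjoint so the rewrites commute; your plan to handle it "at the level of multiset differences" is exactly that argument, so this is a matter of completeness rather than a gap.
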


\begin{proof}
\,Consider a flat term $f(C)$, possibly with duplicates, rewritten in 
two different ways in one step using not necessarily distinct rules and/or $f(x, x) \rew e.$
As in the case of idempotency, there are three cases: (i) $f(C)$ is rewritten in two different ways in one step using $f(x,x) \rew e$ to 

$f((C -\mset{a, a}) \cup \mset{e})$ and $f(C - \mset{b,b} \cup \mset{e})$, 
$a \neq b$. After single step rewrites, the idempotent rule can be applied again on both sides giving 
$f(C-\mset{a,a, b,b}\cup \mset{e, e})$.

(ii) $f(C)$ is rewritten in two different ways, with one step using $f(x, x) \rew e$ and another using $f(M) \rew f(N)$: An application of the nilpotency rule implies that $C$ includes a constant $a$, say, at least twice; the critical pair is: 
$(f((C - \mset{a, a}) \cup \mset{e}), f((C-M) \cup N))$.

Consider two cases: $a \notin M$, implying that
$C$ still has $\mset{a, a}$ 
as a subset, so nilpotency can be applied to get: 
$f(((C - M) - \mset{a, a}) \cup \mset{e} \cup N)$ from $f((C-M) \cup N)$; $f((C - \mset{a, a}) \cup \mset{e})$ can be rewritten using $f(M) \rew f(N)$ to get the same result.

The second case of $a \in M$ implies
there is a joinable critical pair:
$(f((M-\mset{a}) \cup \mset{e}), f(N)).$
Rewrite steps that make the above critical pair joinable, apply also on 
$(f((C - \mset{a, a})\cup \mset{e}), f((C-M) \cup N))$, since $(M-\mset{a}) \cup \mset{e}$ is a multisubset of $(C - \mset{a, a}) \cup \mset{e}$.

The third case is the same as that of Lemma 3.3 and is omitted.
\end{proof}

{\bf Example 3:} Revisiting Example 1 from the previous section and assuming $*$ to be nilpotent, the above ground equations simplify to $e * b = e, ~ e * a = e$. The superposition from the terminating rules corresponding to the above equations is: $a * b * e$ leading to the critical pair: $(a * e, b * e)$ whose two terms have the same normal form $e$. The rewrite system $\{ e * a \rew e,~ e * b \rew e\}$ is reduced canonical.

\subsection{Identity}

If $f$ has identity, say $e$ satisfying $f(x, e) = x$, no additional critical pair is needed since from every rule $f(M) \rew f(N)$,
$(f(N \cup \mset{e}), f(M)) $ 
are trivially joinable.

\noindent
\begin{lem}
An AC rewrite system $R_{S}$ in which an AC symbol $f$ has an identity $e$ (with $f(x, e) = x$) such that every $f$-monomial in $R_S$ is already normalized using $f(x, e) \rew x,$ is {\em locally confluent} iff the
critical pair: $(f((AB-A_1) \cup A_2), f((AB-B_1)
\cup B_2))$ between every pair of distinct rules $f(A_1) \rew f(A_2),
f(B_1) \rew f(B_2)$  is joinable, where
$AB = (A_1 \cup B_1) - (A_1 \cap B_1)$.
\end{lem}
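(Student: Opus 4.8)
The plan is to prove the two directions of the equivalence separately, reusing the peak-analysis template of Lemma 3.3 and its idempotency/nilpotency refinements. The forward implication is routine: if $R_S$ is locally confluent then, for any two distinct rules $f(A_1) \rew f(A_2)$ and $f(B_1) \rew f(B_2)$, the superposition $f(AB)$ with $AB = (A_1 \cup B_1) - (A_1 \cap B_1)$ rewrites in one step to each component of the stated critical pair, so confluence at that peak forces the critical pair to be joinable.

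For the converse I would take an arbitrary flat term $f(C)$ — now possibly containing occurrences of the identity $e$ — together with two one-step rewrites of it, each using either a rule of $R_S$ or the identity rule $f(x,e) \rew x$, and split into three cases as in the nilpotency lemma. In Case (i), both steps apply $f(x,e) \rew x$; since all occurrences of $e$ are the same constant, the choice of which $e$ to delete is immaterial, so both rewrites produce the identical term $f(C - \mset{e})$ and the peak is trivially joinable. Case (iii), in which both steps use rules of $R_S$ and the identity rule is not involved, is word-for-word the argument of Lemma 3.3 and is discharged by the critical-pair hypothesis of the statement.

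Case (ii) is the heart of the matter and explains why no extra critical pair is required. Here one step is $f(C) \rew f(C - \mset{e})$ via the identity rule (so $e \in C$) and the other is $f(C) \rew f((C-M) \cup N)$ via a rule $f(M) \rew f(N) \in R_S$ (so $M \subseteq C$). The decisive fact is the normalization hypothesis: every monomial occurring in $R_S$ is already in normal form under $f(x,e) \rew x$, hence $e \notin M$ and $e \notin N$. Consequently the two steps commute. Starting from $f(C - \mset{e})$ we still have $M \subseteq C - \mset{e}$ (deleting an $e$ cannot remove an element of the $e$-free $M$), so applying $f(M) \rew f(N)$ yields $f(((C-M) - \mset{e}) \cup N)$; starting from $f((C-M) \cup N)$ we still have $e \in (C-M) \cup N$ (since $e \in C$ and $e \notin M$) while $e \notin N$, so applying the identity rule deletes the $e$ from the $C-M$ part and yields the same $f(((C-M) - \mset{e}) \cup N)$. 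Thus the peak closes in a single step on each side with no appeal to any critical pair.

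Equivalently, the only overlap of $f(x,e) \rew x$ with a rule $f(M) \rew f(N)$ occurs on $f(M \cup \mset{e})$ and gives the critical pair $(f(M), f(N \cup \mset{e}))$, which is trivially joinable because $f(M) \rew f(N)$ and $f(N \cup \mset{e}) \rew f(N)$. I expect no genuine obstacle here: the point is precisely that, unlike the duplicated constant in the idempotency case or the cancelled pair in the nilpotency case, the identity $e$ never coincides with a constant of a rule's left side, so it is always carried along inertly rather than interacting with the rule — which is exactly why this lemma, in contrast to Lemmas 4.1 and 4.2, needs no supplementary critical pairs.
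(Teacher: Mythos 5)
Your proof is correct and takes essentially the same approach as the paper's: the identical three-case peak analysis, with the decisive case (ii) closed by the same commutation argument that the normalization hypothesis forces $e \notin M$, so the identity step and the rule step can be applied in either order to reach $f(((C-M)-\mset{e}) \cup N)$. The only minor overstatement is your claim that $e \notin N$ (the right side of a rule may be the constant $e$ itself), but this is harmless since removing one occurrence of $e$ from the multiset $(C-M) \cup N$ gives the same result regardless of which part it is attributed to.
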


\begin{proof}
\,Consider a flat term $f(C)$, possibly with duplicates and the identity element $e$, rewritten in 
two different ways in one step using not necessarily distinct rules and/or $f(x, e) \rew x.$
There are three cases: (i) $f(C)$ is rewritten in two different ways in one step using $f(x,e) \rew x$ to 
$f(C -\mset{e})$ and $f(C - \mset{e})$, 
which is trivially joinable.

(ii) $f(C)$ is rewritten in two different ways, with one step using $f(x, e) \rew x$ and another using $f(M) \rew f(N)$: An application of the identity rule implies that $C$ includes the identity $e$; the result of one step rewriting is:
$(f(C - \mset{e}), f((C-M) \cup N))$
The first element in the critical pair can still be rewritten using $f(M) \rew f(N)$, giving 
$f(((C-M) - \mset{e}) \cup N)$; 
since $M$ cannot include $e$, $C-M$ still has $e$ because of which $f(x, e) \rew x$, giving 
$f((C-M) - \mset{e}) \cup N)$.
 
The third case is the same as that of Lemma 3.3 and is omitted.
\end{proof}
\subsection{Idempotency and Identity}

Consider a combination of idempotency and identity, namely a rewrite system in which an AC symbol $f$ has identity $e$ as well as is idempotent (with $\forall x, ~ f(x, x) \rew x, f(x, e) \rew x).$ In this case, it suffices to consider the additional critical pair due to idempotency since there is no additional critical pair to consider due to the identity.

\noindent
\begin{lem}
An AC rewrite system $R_{S}$ with an idempotent AC symbol $f$ that also has identity $e$,
is {\em locally confluent} iff (i) the
critical pair: $(f((AB-A_1) \cup A_2), f((AB-B_1)
\cup B_2))$ between every pair of distinct rules $f(A_1) \rew f(A_2),
f(B_1) \rew f(B_2)$  is joinable, where
$AB = (A_1 \cup B_1) - (A_1 \cap B_1)$, and (ii) 
for every rule $f(M) \rew f(N) \in R_S$ and for every constant $a \in M$, the critical pair,
$(f(M), f(N \cup \mset{a})$,
is joinable.
\end{lem}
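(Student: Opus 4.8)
The plan is to mirror the proof structure of Lemma~3.3 and of the pure idempotency lemma, since the claim is that combining idempotency with identity requires no new critical pair beyond the idempotency one. I would consider an arbitrary flat term $f(C)$, now possibly containing duplicates \emph{and} the identity $e$, rewritten in two different ways in one step using (a) a pair of rules $f(A_1) \rew f(A_2)$ and $f(B_1) \rew f(B_2)$, (b) one such rule together with $f(x,x) \rew x$, (c) one such rule together with $f(x,e) \rew x$, or (d) two of the universal rules $f(x,x)\rew x$ and $f(x,e)\rew x$ against each other. The goal in each case is to exhibit a common term reachable by further rewriting, appealing to joinability of the stated critical pairs.

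First I would dispatch the case of two distinct non-universal rules exactly as in Lemma~3.3: setting $D = C - AB$, the two one-step results extend the critical pair $(f((AB-A_1)\cup A_2), f((AB-B_1)\cup B_2))$ by the common context $D$, so joinability of the critical pair yields joinability here. Next, the pair consisting of one rule $f(M)\rew f(N)$ and $f(x,x)\rew x$ is handled verbatim as in the idempotency lemma: the extra critical pair $(f(M), f(N\cup\mset{a}))$ required by condition~(ii) is precisely what makes $f(C-\mset{a})$ and $f((C-M)\cup N)$ join. The remaining cases are the genuinely new interactions involving the identity rule, and these are where I would concentrate the argument.

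For the interaction of $f(M)\rew f(N)$ with $f(x,e)\rew x$, I would argue as in Lemma~4.7 (the identity lemma): removing an $e$ commutes with applying $f(M)\rew f(N)$, because $M$ need not be assumed to contain $e$ (normalized monomials drop $e$), so $C-M$ still carries the $e$ and the two reducts converge on $f(((C-M)-\mset{e})\cup N)$; no new critical pair is needed. The self-interaction of the two universal rules is routine: two applications of $f(x,x)\rew x$ at distinct constants commute as in the idempotency lemma, two applications of $f(x,e)\rew x$ trivially commute, and one of each commutes since deleting a duplicate and deleting $e$ touch disjoint multiset positions. Finally the case of two applications of $f(M)\rew f(N)$ (or any single non-universal rule used two ways) reduces to the non-universal case already treated.

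I expect the main obstacle to be the bookkeeping in the mixed case where a single constant $a$ simultaneously appears duplicated \emph{and} equals, or is adjacent to, the identity in the critical overlap --- i.e.\ confirming that the idempotency critical pair $(f(M), f(N\cup\mset{a}))$ still suffices when $e\in M$ or $a=e$ after normalization. The key point I would need to nail down is that under the combined normalization no genuinely new overlap between $f(x,x)\rew x$ and $f(x,e)\rew x$ arises (one might worry about $f(e,e)\rew e$ being reachable two ways), and that idempotency subsumes whatever the identity rule could contribute. Establishing that subsumption, so that condition~(ii)'s idempotency critical pair is the \emph{only} extra check, is the crux; everything else follows the established template, and I would simply remark that the third (two distinct non-universal rules) case is identical to Lemma~3.3 and omit it.
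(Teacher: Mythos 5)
Your case decomposition and the dispatching of the first three cases (two non-universal rules via Lemma~3.3; rule against $f(x,x)\rew x$ via the idempotency lemma and critical pair (ii); rule against $f(x,e)\rew x$ via the identity lemma) match the paper's proof exactly --- the paper likewise declares those subparts ``the same as in the proofs above'' and isolates the interaction of the two universal rules as the only genuinely new case. The problem is that this new case is precisely where your argument stops short. Your stated justification, that ``deleting a duplicate and deleting $e$ touch disjoint multiset positions,'' is false exactly when the duplicated constant \emph{is} $e$, and you concede as much by listing the $a=e$ overlap as ``the crux'' and ``the key point I would need to nail down'' rather than proving it. A proof that defers its crux is a proof with a gap.

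The resolution, which the paper gives explicitly, is short: write the two one-step reducts of $f(C)$ as $f(C-\mset{a})$ (idempotency, $a$ duplicated in $C$) and $f(C-\mset{e})$ (identity, $e\in C$). If $a\neq e$, apply $f(x,e)\rew x$ to the first and $f(x,x)\rew x$ to the second; both yield $f(C-\mset{a,e})$. If $a=e$, then either $C$ contains at least three $e$'s and the previous argument still applies, or $C$ contains exactly two $e$'s, in which case the two reducts are literally the same term $f(C-\mset{e})$ and joinability is trivial. So no new critical pair arises from the $f(x,x)$/$f(x,e)$ overlap --- but this must be argued by the above case split on whether $a=e$, not by an appeal to disjointness that fails in that very case. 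With that paragraph inserted, your proof coincides with the paper's.
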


\begin{proof}
It is assumed that all monomials in $R_S$ are already normalized using $f(x, x) \rew x, f(x, e) \rew x$.
The subpart of the proof when idempotency rule is used along with rules in $R_S$, or the identity rule is used along with rules in $R_S$ is the same as in the proofs above for the idempotency and identity cases. The only case left is when idempotency and identity rules are both used.

\,Consider a flat term $f(C)$, possibly with duplicates and the identity element, rewritten in 
two different ways in one step using  $f(x, x) \rew x$ in one case and $f(x, e) \rew x$ in the other case, giving the critical pair: 
$(f(C-\mset{a}), f(C-\mset{e}))$
where wlog, $a$ appears in duplicate in $C$ which has also $e$. There are two cases: $a \neq e$, in which case, apply $f(x, e) \rew x$ on the first component of the critical pair and $f(x, x) \rew x$ on the second component with the result in both cases to be 
$f(C - \mset{a, e})$. In case $a = e,$ 
then
if $C$ has at least three $e$'s, then the above case applies; if $C$ has only two $e$'s, then the critical pair
is 
$(f(C -\mset{e}), f(C-\mset{e}))$, 
which is trivially joinable.
\end{proof}
\subsection{Nilpotency and Identity}

Consider a combination of nilpotency and identity: $\forall x, ~ f(x, x) \rew e, f(x, e) \rew x.$ In this case, it suffices to consider the additional critical pair due to nilpotency since there is no additional critical pair to consider due to the identity.

\noindent
\begin{lem}
An AC rewrite system $R_{S}$ with $f(x, x) = e, f(x, e) = x$, such that
every rule in $R_S$ is normalized using $f(x, x) \rew e, f(x, e) \rew x$
is {\em locally confluent} iff (i) the
critical pair: $(f((AB-A_1) \cup A_2), f((AB-B_1)
\cup B_2))$ between every pair of distinct rules $f(A_1) \rew f(A_2),
f(B_1) \rew f(B_2)$  is joinable, where
$AB = (A_1 \cup B_1) - (A_1 \cap B_1)$, and (ii) for every rule $f(M) \rew f(N) \in R_S$ and for every constant $a \in M$, the critical pair,
$(f(N \cup \mset{a}), f((M-\mset{a}) \cup 
\mset{e}))$, 
is joinable.
\end{lem}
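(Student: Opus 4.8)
The plan is to mirror the proof structure already established for the pure idempotency case (Lemma for idempotency) and the nilpotency case (Lemma for nilpotency), combining them to handle the system with both $f(x,x) \rew e$ and $f(x,e) \rew x$. As in those proofs, I would consider an arbitrary flattened term $f(C)$ (possibly with duplicates and the identity element) rewritten in two different ways in one step, using either a pair of rules from $R_S$, a rule together with the nilpotency rule, a rule together with the identity rule, or the nilpotency and identity rules with each other. The key observation, as stated just above in the Nilpotency-and-Identity subsection, is that the identity axiom by itself contributes no nontrivial critical pair, so the only extra critical pairs needed are those arising from nilpotency, namely $(f(N \cup \mset{a}), f((M-\mset{a}) \cup \mset{e}))$ for each rule $f(M) \rew f(N)$ and each $a \in M$.

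I would then partition the argument into cases exactly as in the preceding lemmas. The case of two rules from $R_S$ reduces verbatim to the proof of Lemma 3.3. The case of a rule from $R_S$ together with $f(x,x) \rew e$ is handled identically to case (ii) in the Nilpotency lemma, where joinability of the extra nilpotency critical pair $(f(N \cup \mset{a}), f((M-\mset{a}) \cup \mset{e}))$ is propagated to $f(C)$ by adding the common context $C - M - \mset{a,a}$ (or the analogous supermultiset when $a \notin M$). The case of a rule from $R_S$ together with $f(x,e) \rew x$ is handled exactly as in the Identity lemma, since removal of $e$ commutes with any rewrite step whose left side $M$ cannot contain $e$ (as $f$-monomials in $R_S$ are assumed normalized, they never carry the identity).

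The genuinely new case — the one the proof must explicitly address — is when $f(x,x) \rew e$ and $f(x,e) \rew x$ are applied to $f(C)$ in the two different steps. This yields the critical pair $(f((C - \mset{a,a}) \cup \mset{e}),\, f(C - \mset{e}))$, where $a$ occurs at least twice in $C$ and $e$ also occurs in $C$. I would close this by case analysis on whether $a = e$. If $a \neq e$, then applying $f(x,e) \rew x$ to the first component and $f(x,x) \rew e$ to the second component both yield $f((C - \mset{a,a}) \cup \mset{} )$ with the surviving $e$ removed, i.e.\ a common reduct; the two eliminations are independent because they touch disjoint occurrences. If $a = e$, then $C$ contains at least two $e$'s, and I would further split on whether $C$ has exactly two $e$'s (in which case one step already collapses both components to the same term) or at least three (in which case the argument reduces to the $a \neq e$ subcase, applying nilpotency and identity to distinct occurrences). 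The reverse direction of the iff is immediate, since critical pairs are instances of divergences.

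The main obstacle I anticipate is bookkeeping the multiset arithmetic cleanly in the $a = e$ subcase: when the distinguished duplicated constant \emph{is} the identity, the nilpotency rule $f(e,e) \rew e$ and the identity rule $f(x,e) \rew x$ interact on overlapping occurrences, so one must carefully count how many $e$'s remain after each step to confirm confluence rather than merely local divergence. Once that counting is made precise, the lemma follows by assembling the four cases, and — as remarked for the earlier variants — the accompanying termination and uniqueness results (Theorems \ref{correctness} and \ref{unique}) extend with the same Dickson's-lemma argument.
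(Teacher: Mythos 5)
Your proposal is correct and follows essentially the same route as the paper's proof: the mixed cases (rule with rule, rule with nilpotency, rule with identity) are delegated to the earlier lemmas, and the only new case, where $f(x,x)\rew e$ and $f(x,e)\rew x$ both apply to $f(C)$, is closed by the same case split on $a = e$ versus $a \neq e$, with the $a=e$ subcase further split on whether $C$ has exactly two or at least three occurrences of $e$. Your multiset bookkeeping in the $a \neq e$ subcase is in fact slightly cleaner than the paper's (both components reach the common reduct $f(C-\mset{a,a})$ in one step), so no gap remains.
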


\begin{proof}
The subpart of the proof when nilpotency rule is used along with rules in $R_S$, or the identity rule is used along with rules in $R_S$ is the same as in the proofs above for the  nilpotency and identity cases. The only case left is when nilpotency and identity rules are both used.

\,Consider a flat term $f(C)$, possibly with duplicates and the identity element, rewritten in 
two different ways in one step using  $f(x, x) \rew e$ in one case and $f(x, e) \rew x$ in the other case, giving the critical pair: 
$(f((C-\mset{a, a}) \cup \mset{e}), f(C-\mset{ e}))$, 
where wlog, $a$ appears in duplicate in $C$ which has also $e$. There are two cases: $a \neq e$, in which case, apply $f(x, e) \rew x$ on the first component of the critical pair and $f(x, x) \rew e$ on the second component with the result in both cases to be 
$f(C - \mset{a,a, e})$. 
In case $a = e,$ then
if $C$ has at least three $e$'s, then the above case applies; if $C$ has only two $e$'s, then the critical pair
is 
$(f(C -\mset{e}), f(C-\mset{e}))$, 
which is trivially joinable.
\end{proof}

Algorithm {\bf SingleACCompletion} can be appropriately modified to include the joinability checks for additional critical pairs.
The termination proof from the previous subsection extends to each of these cases and their combination. Theorems \ref{correctness} and \ref{unique} generalize for each of the above cases as well.

\ignore{since $f(e, b) = f(a, a, b)$
and $f(e, a) = f(a, b, b)$,
two critical pairs $(f(a, c), f(e, b)), (f(b, c), f(e, a))$ must also be checked for joinability.
If $e$ is also the identity of $f$, i.e., $f(x, e) = x$, then the critical pairs normalize to $(f(a, c), b), (f(b, c), a)$. The local confluence proof of Lemma 3.3 can be adapted; the termination proof is the same.

In case of the identity, however, no additional critical pair needs to be considered. The local confluence proof of Lemma 3.3 and termination proof work the same way as for the pure AC case. In the presence of combination of these axioms, suitable combination of critical pairs are used; the proofs extend as well.}

\section{An AC symbol with cancelation}

An AC function symbols $f$ has the {\bf cancelation} property, equivalently is called {\bf cancelative} iff $\forall x, y, z, f(x, y) = f(x, z) \implies y = z.$ If $f$ does not have a constant, say $e$ serving as an identity of $f$ (i.e., $\forall x, f(x, e) = x$), then $y, z$ must be a nonempty multiset of constants for flat terms with $f$. In the presence of identity for $f$, $y$ or $z$ could be the empty subset, representing the identity $e$.

Congruence closure with cancelative function symbols can be subtle because it enables generating smaller equalities from larger equalities: from a cancelative AC $f$, for example, $f(a, b) = f(a, c) \implies b = c$. More interestingly, 
if a cancelative $f$ does not have an identity, then the cancelative congruence closure of $f(a, c) = a$ also includes
$f(b, c) = b$ for any constant $b \in C$; this is so because from $f(a, c) = a$, it follows that $\forall b \in C, f(a, b, c) = f(a, b)$, giving $f(b, c) = b$ by cancelation. This is a way to cancel $a$ by including a common subterm on both sides so that the result after cancelation is not meaningless. 

Given a finite set $S$ of ground equations with a cancelative AC $f$, its congruence closure $CACCC(S)$ is defined by the closure of the $ACCC(S)$ with the above universal cancelative axiom, i.e., 
if $f(A) = f(B) \in CACCC(S),$ then for any common nonempty multisubset $C$ of $A$ and $B$ and assuming neither $A-C$ nor $B-C$ become empty causing $f(A-C)$ or $f(B-C)$ to be meaningless, $f(A-C) = f(B-C) \in CACCC(S)$. It also follows that for
any nonempty multiset $D$, given that $f(A \cup D) = f(B \cup D) \in CACCC(S)$, every equation obtained after canceling every possible nonempty common multisubset $G$ such that neither $f((A \cup D) - G)$
nor $f((B \cup D)-G)$ are the empty set, 
$f((A \cup D) - G) = f((B \cup D)-G) \in CACCC(S)$. Of course, if $f$ has the identity $e$, then any of the above multisets can be the empty set representing $e$.

For example, the AC congruence closure of $SC_1 = \{ f(a, a, b) = f(a, b, b)\}$ includes all pairs of $f$-monomials that can be generated by adding $ f(a, a, b, X) = f(a, b, b, X)$ where $X$ is any multiset of constants; 
assuming the degree ordering on $f$-monomials extending
$a \grt b$, 
its canonical rewrite system is $R = \{ f(a, a, b) \rew f(a, b, b) \}$. If $f$ is assumed to be also cancelative, the cancelative congruence closure then includes $a = b$ along with the above congruences, since
$f(a, f(a, b)) = f(b, f(a, b)) \implies a = b$ by cancelation;
from $a = b$, one gets many other congruences including $f(a, a) =f(a, b), f(a, b) = f(b, b), f(a, \cdots, a) = f(b,\cdots, b)$ with equal numbers of $a$ and $b$. 
A canonical rewrite system associated with the cancelative congruence closure
$CACCC(SC_1)$ is $R_C= \{ a \rew b \}$ using which each of the above pairs can be shown to be in the cancelative congruence closure; the canonical forms of the two sides in each pair above are identical. 
 
Given a finite set $S = \{ f(A_i) = f(B_i) \mid 1 \le i \le k\}$ of ground equations, it is easy to see that if $f(A'_i) = f(B'_i)$ is generated from $f(A_i) = f(B_i)$ after cancelation of common subterms on both sides, the cancelative congruence closure of
$S$ is contained in that of $S'$ in which $f(A'_i) = f(B'_i) $ replaces $f(A_i) = f(B_i)$. That motivates simplifying ground equations by canceling out common subterms from both sides.

\ignore{To generate a canonical rewrite system 
$R_C$ from a finite set $S$ of $f$-monomial equations in a cancelative AC $f$,
new superpositions and associated critical pairs are proposed on an AC canonical rewrite system $R$ representing its AC congruence closure $ACCC(S)$ after it is cancelativity closed as defined below.
$R_C$ satisfy the properties that
$CACCC(R_c) = CACCC(R) = CACCC(S) = ACCC(R_c)$.}

\subsection{Cancelativity Closed:}
As a first step toward generating a cancelative congruence closure, equations are made {\bf cancelatively closed}.
An equation
$f(A) = f(B)$ is {\bf cancelatively closed} for a cancelative AC $f$ iff (i) if $f$ has an identity $e$, then $A \cap B$ is the empty multiset, implying there is no subterm common among the two sides; 
(ii) if $f$ does not have an identity,
then $A \cap B$ is a singleton multiset with exactly one of $A$ and $B$ being a constant (i.e., singleton multiset); this captures the property that a constant can be common on both sides to avoid the meaningless term $f(\mset{})$.
For examples, $f(a, b) = f(c, d)$ is cancelativity closed irrespective of whether $f$ has an identity or not;
$f(a, b) = a$ is not cancelativity closed if $f$ has an identity, in which case $a$ can be canceled from both sides of the equation resulting in $b = e$; however, $f(a, b) = a$ is cancelativity closed in the absence of $f$ having an identity, but
$f(a, b, b) = f(a, b)$ is not cancelatively closed since either $a$ or $b$ can be canceled without making the resulting equation meaningless.

A set of equations is cancelatively closed iff every equation in the set is cancelatively closed. Similarly, 
a rewrite rule $f(A) \rew f(B)$ is cancelatively closed if the corresponding equation $f(A) = f(B)$ is cancelativity closed; a set of rules is cancelatively closed iff every rule in the set is cancelatively closed.

\ignore{the cancelative closure of a single equation can lead to multiple equations; as an example, given $f(a, b, c) = f(a, b)$, two cancelativity closed equations are deduced: $f(a, c) = a$  and $f(b, c) = b.$ Further, for any constant $d \in C$, from
$f(a, c) = a$ as well as $f(b, c) = b$, congruences $f(c, d) = d$ enabling cancelation of $a$ and $b$, respectively, are generated;}

\ignore{Below, no assumption is made about whether a cancelative $f$ has an identity; the approach extends easily in the presence of an identity.}

\ignore{An equation $f(A) = f(B)$ is {\bf cancelativity closed} iff the multisets $A$ and $B$ have at most one common constant. Thus, $f(a, b, c) = f(a, b)$, for example, is not cancelativity closed, but $f(a, c) = a$ as well as $f(b, c) = b$ are cancelativity closed.}

\subsection{Cancelative Closure of $S$:}
Given a system $S$ of ground equations with a cancelative $AC$ symbol $f$, an equivalent system $S'$ is generated from $S$ that is cancelativity closed such that $CACCC(S) = CACCC(S')$ as follows: for every ground equation 
$f(A) = f(B) \in S$,
(i) if it is cancelatively closed, then it is included in $S'$; (ii) if $f$ has an identity, then in $f(A-B) = f(B-A)$ is included in $S'$ with one of the sides possibly being the identity; (iii) otherwise, if $f$ does not have an identity and (a) neither $B-A$ nor $A-B$ are the empty sets, then $f(A-B) = f(B-A)$ is included in $S'$, (b) if $A-B$ or $B-A$ is the empty set, then for every constant $c \in C$, a new equation $f((A-B) \cup \{c \}) = 
f((B-A) \cup \{c \})$ is included in $S'$.
Clearly, $CACCC(S) \subseteq CACCC(S')$; using cancelativity, it also follows that $CACCC(S') \subseteq CASCC(S)$. Let $CancelClose(S)$ be the set of cancelatively closed equations
generated from $S$. 

For example, $CancelClose(\{f(a,a,a) = f(b, b)\}) = \{f(a, a, a) = f(b, b)\}$,
$CancelClose(\{f(a, b, b) = f(a, b, a)\}) = \{b = a\}$, and
$CancelClose(\{f(a, a, a, b, b) = f(a, b, b, b) \} = \{f(a, a) = b\}$
irrespective
of whether $f$ has the identity or not.
If $f$ has the identity $e$, then $CancelClose(\{f(a, b) = a\}) = \{ b = e\}$, but in the absence of $f$ having an identity, $CancelClose(\{f(a, b) = a\}) = \{ f(b, c) = c \mid c \in C\}$ which will include $f(a, b) = a$, $f(b, b) = b$ and for any other constant $c \in C$, $f(b, c) = c$.

Running {\bf SingleACCompletion} on a finite set of cancelatively closed ground equations as well as keeping new ground equations derived during the completion
as cancelatively closed
(such as cancelativitely closing rules 
before adding to the current basis in steps 2 and 3(ii))
is not sufficient to generate a canonical rewrite system for a cancelative congruence closure 
This is illustrated by the following example: Let $SC_2 = \{ f(a, a, a) = f(b, b), f(b, b, b) = f(a, a).\}$. They are cancelatively closed since the two sides of each equation do not have any common subterms.
Orienting the above equations from left to right, generates a canonical rewrite system which is also cancelativity closed; however, the rewrite system does not represent a cancelative congruence closure of $SC_2$ since $f(a, a, b) = a$ as well as $f(a, b, b) = b$ are in the cancelative congruence closure (since $f(a, a, a, b) = f(b, b, b) = f(a, a)$ as well as $f(b, b, b, a) = f(a, a, a) = f(b, b)$)  but they have distinct normal forms with respect to the rules generated from $SC_2$ by orienting the equations in $SC_2$ from left to right.
The rules obtained from orienting the equations in $SC_2$ from left to right are non-overlapping, so ${\bf SingleACCompletion}$ does not generate any new rule.

\begin{thm}
{\bf SingleACCompletion} with steps 2 and 3(ii) modified to keep rules cancelativity closed does not generate a canonical rewrite system for an AC cancelative $f$ symbol.
\end{thm}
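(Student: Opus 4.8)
The statement is a non-existence (counterexample) claim, so the plan is to exhibit a single concrete input on which the modified completion halts yet produces a rewrite system that is not canonical for the cancelative congruence closure. I would take precisely the system $SC_2 = \{ f(a,a,a) = f(b,b),\ f(b,b,b) = f(a,a)\}$ already introduced, and argue in three stages: first that the modified algorithm outputs only the two left-to-right oriented rules and adds nothing; second that $CACCC(SC_2)$ contains an equation, namely $f(a,a,b) = a$, whose two sides are distinct normal forms of that rewrite system; and third, conclude that the output therefore cannot serve as a decision procedure for $CACCC(SC_2)$, so no canonical system for the cancelative closure is produced.

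For the first stage I would observe that both input equations are already cancelatively closed, since the argument multisets $\mset{a,a,a}$ and $\mset{b,b}$ are disjoint, as are $\mset{b,b,b}$ and $\mset{a,a}$; hence the per-equation cancelative-closure operation inserted in steps 2 and 3(ii) leaves them unchanged. Orienting with any admissible $\gg_f$ (for instance the degree ordering with $a \grt b$) yields $R = \{ f(a,a,a) \rew f(b,b),\ f(b,b,b) \rew f(a,a)\}$. The only superposition is between these two distinct rules, with $A_1 = \mset{a,a,a}$, $B_1 = \mset{b,b,b}$ and $A_1 \cap B_1$ the empty multiset; by the disjointness remark following Lemma 3.3, its critical pair is trivial. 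Neither rule reduces the other under interreduction, so no new rule is generated, $T$ empties, and {\bf SingleACCompletion} (so modified) terminates with exactly the reduced system $R$.

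For the second stage I would exhibit the derivation witnessing $f(a,a,b) = a \in CACCC(SC_2)$: adding $b$ to the first equation gives $f(a,a,a,b) = f(b,b,b)$, and substituting the second equation gives $f(a,a,a,b) = f(a,a)$; canceling the common $\mset{a}$ then yields $f(a,a,b) = a$. (Symmetrically one obtains $f(a,b,b) = b$.) Now $f(a,a,b)$ is irreducible under $R$, because neither $\mset{a,a,a}$ nor $\mset{b,b,b}$ is a submultiset of $\mset{a,a,b}$, and $a$ is a constant, hence also irreducible; since $f(a,a,b)$ and $a$ differ as multisets, these are two $CACCC(SC_2)$-congruent terms with distinct normal forms, so $R$ is not confluent for the cancelative closure.

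I expect the one point needing care to be the implicit claim underlying the second stage: that the algorithm genuinely cannot recover the missing congruence by any other route. The equation $f(a,a,b)=a$ can be produced only by first enlarging an equation with a fresh common argument and then canceling, whereas ordinary superposition overlaps existing left-hand sides (here disjoint, hence trivial) and the per-equation cancelative closure never introduces a new shared argument. This is exactly the phenomenon flagged for cancelative symbols, namely that one must explore $f$-monomials larger than those appearing in the input, and it is what motivates the new \emph{cancelative disjoint superposition} developed in the remainder of this section.
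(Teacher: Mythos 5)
Your proof is correct and takes essentially the same route as the paper's own argument: the paper proves this theorem with the very same counterexample $SC_2 = \{f(a,a,a)=f(b,b),\ f(b,b,b)=f(a,a)\}$, observing that both equations are cancelatively closed, that the two oriented rules are non-overlapping so the modified completion adds no new rules, and that $f(a,a,b)=a$ (and $f(a,b,b)=b$) lie in $CACCC(SC_2)$ via the enlarge-then-cancel derivation $f(a,a,a,b)=f(b,b,b)=f(a,a)$ yet have distinct normal forms under the output system. Your explicit irreducibility checks and the closing remark that only the forthcoming cancelative disjoint superposition can supply the missing congruence simply spell out details the paper leaves implicit.
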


In a typical proof of local-confluence of rewrite rules, any term that includes both left sides of two distinct rules, are joinable when rewritten using these rules in two different ways. However, for a cancelative $f$, this step can generate a nontrivial cancelative superposition as follows:
given two distinct rules $f(A_1) \rew f(B_1)$ and $f(A_2) \rew f(B_2)$, 
$f(A_1 \cup A_2)$ can be disjointly rewritten using the above rules and shown to be joinable; but 
$f(A_1 \cup A_2) = f(B_1 \cup B_2)$ may not be cancelatively closed, possibly giving rise to new critical pairs $(f(A'_1 \cup A'_2), f(B'_1 \cup B'_2))$ obtained from the cancelative closure of $f(A_1 \cup A_2) = f(B_1 \cup A_2)$. For the local confluence of $\{ f(A_1) \rew f(B_1), f(A_2) \rew f(B_2) \}$, the joinability of  $(f(A'_1 \cup A'_2), f(B'_1 \cup B'_2))$ must be checked. 

For example, from $SC_2$ above, joining the two equations gives a congruent pair $(f(a, a, a, b, b, b), f(a, a, b, b))$ that is not cancelatively closed. The pair is trivially joinable as the rules corresponding to $SC_2$
can be disjointly applied; however, a new congruence pair is generated because of cancelation.
From
$f(a, a, a, b, b, b) = f(a, a, b, b)$, new congruences, 
$f(a, b, b) = b$ and $f(a, a, b) = a$
are generated because of cancelation and in the absence of $f$ having an identity; in the presence of the identity $e$ for $f$, only one congruence pair $f(a,b) = e$ is generated.
Consequently, orienting the two equations in $SC_2$ from left to right does not result in a canonical rewrite system for its cancelative congruence closure.
It will be shown that with the cancelative closure of the disjoint superposition which was trivial in the absence of cancelativity, must be accounted for in the local confluence check of a rewrite system for the cancelative congruence closure. For $SC_2$, $\{ f(a, a, a) \rew f(b, b), f(b, b, b) \rew f(a, a), f(a, a, b) \rew a, f(a, b, b) \rew b \}$ is a canonical rewrite system assuming $C = \{a, b \}.$ In case $f$ has the identity $e$, then the third and fourth rules can be simplified to be $f(a, b) \rew e.$

\ignore{
An inference rule such as cancelativity derives additional equational inferences of the form:
from $s_i \eqv s_{i+1} \eqv s_{i+2}$, generate
$t_i \eqv t_{i+2}$ by applying cancelativity on $s_i \eqv s_{i+2}.$ Each step in the above inference is obtained by enlarging a cancelatively closed equation, when combined together, they need not be. There is interaction also between the left side of a cancelativity-closed rewrite rule with the right side of the other rule, which must be accounted for during the computation of a cancelative congruence closure.}

\ignore{As the above example illustrates, 
$f(b, b, b) = f(a, a)$ can be used to enlarge 
$f(a, a, a) = f(b, b)$ to  $f(a, a, a, b) = f(b, b, b)$, giving $f(a, a, a, b) = f(a, a)$; due to cancelativity, it becomes $f(a, a, b) = a$; similarly, $f(a, b, b) = b$ is also generated.\footnote{This behavior appears to be different from the classical congruence closure in which it suffices to consider subterms appearing in a set of ground equations to generate their congruence closure. In contrast, by considering larger superterms of subterms (a superterm $f(b, b, b)$ of $f(b, b)$ appearing in the input ground equations), it is possible to generate additional congruences.} Whereas a canonical rewrite system for $ACCC(S)$ is $\{ f(a, a, a) \rew f(b, b), f(b, b, b) \rew f(a, a)\}$ using degree ordering on $f$-terms, a canonical rewrite system for cancelatively closed
$CACCC(SC_2)$ is $\{ f(a,a,a) \rew f(b, b), f(b, b, b) \rew f(a, a), f(a, b, b) \rew b, f(a, a, b) \rew b \}.$ }

\ignore{The congruence $f(a, b, b) \eqv b$ is derived by cancelativity from $f(a, b, b, b) \eqv f(a, a, a) \eqv f(b, b)$ in $ACCC(S).$\footnote{In case of an extensional rule as in \cite{BK20}, $f(x, y) = f(u, v) \implies (x = u \land y = v),$ $f(a, b) \eqv d \eqv f(a', b'),$ there is a superposition on the right sides of the two rules, generating a right-right critical pair. Given two rules $L_1 \rew R_1, L_2, \rew R_2,$ such that a nonvariable subterm $R_1$ unifies with $R_2$ using a mgu $\sigma,$ $\sigma(L_1) \eqv \sigma(R_1) \eqv \sigma(R_1)[p \rightarrow \sigma(L_2)]$ on which the inference rule may apply.}}

\ignore{Step 2 in $SingleACCompletion$ algorithm is changed as: (i) Step 2: Normal forms $\hat{l},\hat{r}$ lead to a derived equation $\hat{l} = \hat{r}$ is cancelatively closed to become $\hat{\hat{l}} = \hat{\hat{r}}$ which is oriented into a rewrite rule.}

\ignore{Let us consider some more illustrations exhibiting the subtlties of the issues involved in generating a cancelative congruence closure of ground equations on a single AC symbol. 
Unlike in most local confluence tests, if a term (superposition) in which the two distinct left sides of rules are disjoint subterms, critical pairs are joinable, in case of cancelativity, such a term can generate nontrivial critical pairs which need to be considered for joinability.}

\ignore{For illustration, consider another example: $SC_3 = \{ f(a, b) = f(c, d), f(a, c) = f(b, d')\}$. The term $f(a, b, a, c)$ from the two left sides of the equations has the left sides $f(a, b)$ as well as $f(a, c) $ as disjoint subterms; rewriting using $f(a, b) \rew f(c, d)$ gives $f(a, c, c, d)$ whereas rewriting using $f(a, c) \rew f(b, e)$ gives $f(a, b, b, e)$.
Applying cancelativity on the critical pair generates $f(b, b, e) = f(c, c, d)$ which is in the cancelative AC congruence closure $CACCC(S)$, and hence must be checked for joinability for generating a canonical rewrite system representing $CACCC(S)$. Without applying cancelativity, the pair is joinable by applying the rewrite rules on disjoint parts as is typically done in the local confluence proofs. This additional check is still not sufficient for checking local confluence of a cancelative AC rewrite system as illustrated in the case of the first example: the superposition is $f(a, a, a, b, b, b)$ with the critical pair is $(f(b, b, b, b, b), f(a, a, a, a, a))$; since the two terms do not have any common subpart, the cancelativity rule does not apply. 
}

\ignore{
{\bf Example 4:} For illustration, consider $SC_3 = \{ f(a, b) = f(c, d), f(a, c) = f(b, d')\}$. The term $f(a, b, a, c)$ generated from the two left sides of the equations 
equals $f(c, d, b, d'$ which after cancelation, generates $f(a, a) = f(d, d')$; however, the term $f(a, b, b, d')$ generated from the left side of the first equation and the right side of the second equation equals $f(c, d, a, c)$, which after cancelation, generates $f(b, b, d') = f(c, c, d)$ which follows from the above two equations.}

\ignore{
To contrast with the classical superpositions/overlaps generated from the left sides of the rewrite rules, a new superposition, called a {\bf left-right} superposition, is introduced to facilitate cancelation, possibly due to common subterms appearing on the different sides of different rules such as a possible overlap between the left side of a rule with the right side of another rule leading to cancelation. In case of $SC_2$, the right side of rule 1 overlaps with the left side of rule 2, and similarly, the right side of rule 2 overlaps with the left side of rule 1. 
}

\ignore{A universal conditional inference rule such as cancelativity\footnote{(or extensionality--see \cite{BK20}); see Section for related comments in the paragraph ending Section4 about "large" terms vs "small" terms.} enlarges congruence closure by adding adding inferences. From a sequence of inferences in an AC congruence closure, additional inferences are deduced; in order to include such inferences,  additional rewrite rules need to be added into a canonical rewrite system for $ACCC(S)$. This is discussed below.}

\ignore{
\subsection{Left-Right Superposition}
Given two distinct rules $f(A_1) \rew f(B_1)$ and $f(A_2) \rew f(B_2)$ with an AC $f$, a superposition is generated as before when $A_1 \cap A_2 \neq \emptyset$. In case of a cancelative $f$, a special superposition is also generated if $A_1 \cap B_2 \neq \emptyset$ or $B_1 \cap A_2 \neq \emptyset$ since there is an opportunity for common subterm to be generated and subsequently canceled out due to possible overlaps of the right side of a rule with the left side of another rule. 

Let $C$ be the nonempty common part, if any, of $A_1$ and $B_2$; similarly, let $D$ be the nonempty common part of $B_1$ and $A_2$, if any. The critical pair generated from the left-right superposition from the above rules is then $(f(A'_1 \cup A'_2), f(B'_1 \cup B'_2))$, where $A_1 = A'_1 \cup C, A_2 = A'_2 \cup D, B_1 = B'_1 \cup D, B_2 = B'_2 \cup C$, provided both $A'_1 \cup A'_2$
$B'_1 \cup B'_2$ are nonempty\footnote{If $f$ has an identity $e$, then this condition is not necessary}. Such a critical pair is called a 
{\bf left-right}, to contrast with a traditional {\bf left-left} critical pairs. $f(A_1 \cup A_2) = f(A'_1 \cup A'_2 \cup C \cup D)$ can be viewed as the associated superposition, 
which rewrites using rules 1 and 2 to $f(B'_1 \cup D \cup B'_2 \cup C)$. In case 
$A'_1 \cup A'_2$ or 
$B'_1 \cup B'_2$ is the empty set, then subsets of $C$ and $D$ are so selected that
neither $A'_1 \cup A'_2$ nor
$B'_1 \cup B'_2$ is the empty set; these selections of $C$ and $D$ must be done in every possible way; an example below illustrates this subtlety.
It is easy to see that if both $C$ and $D$ above are the empty set, implying that $A'_1 = A_1, A'_2 = A_2, B'_1 = B_1, B'_2 = B_2$
the left-right critical pair is trivial since both rules can be disjointly applied to show joinability. Even if exactly one of $C$ or $D$ is nonempty, even then the the left-right critical pair is trivial since $(f(A'_1 \cup A'_2), f(B'_1 \cup B'_2))$ is joinable. 

\ignore{
Define ${AB}_1 = (A_1 \cup B_2) - (A_1 \cap B_2)$, as the {\bf lcm} of $A_1, B_2$; then ${AB}_1 = A_1 \cup B'_2 = A'_1 \cup B_2$, corresponding to the common part, denoted by $C$, of $B_2$ and $A_1$ removed. Similarly, ${AB}_2 = (A_2 \cup B_1) - (A_2 \cap B_1)$ as the lcm of $A_2, B_1$ with ${AB}_2 = A_2 \cup B'_1 = A'_2 \cup B_1$, corresponding to the common part $D$ of $A_2$ and $B_1$. Since $f(A_1 \cup A_2) = f(B_1 \cup B_2) = f(A'_1 \cup A'_2 \cup C \cup D) = f(B'_1 \cup B'_2 \cup C \cup D),$ by the cancelation inference rule, $f(A'_1 \cup A'_2) = f(B'_1 \cup B'_2)$.}

\ignore{
Applying $f(A_1) \rew f(B_1)$ on
$A_1 \cup B'_2$ gives $f(B_1 \cup B'_2)$

using $f(A_2) \rew f(B_2)$ backward, one obtains $f(A'_1 \cup A_2) = f(A'_1 \cup A'_2 \cup D),$ where $D$ is the common part ({\bf gcd}) of $A_2$ and $B_1$. Since $A'_1 \cup B_2 = A_1 \cup B'_2$, $f(A_1 \cup B'_2)$ rewrites using $f(A_1) \rew f(B_1)$ to
$f(B_1 \cup B'_2) = f(B'_1 \cup B'_2 \cup D)$ with $D$ being common in
$f(A'_1 \cup A'_2 \cup D) \equiv f(B'_1 \cup B'_2 \cup D)$ which cancels out, generating $(f(A'_1 \cup A'2), f(B'_1 \cup B'_2))$ as the critical pair; a similar analysis on the common part $C$ generates the same critical pair. Such a critical pair is called {\bf left-right} and the corresponding superposition is {\bf left-right (lr)}, to contrast with the traditional {\bf left-left(ll)} critical pairs. 
}
In case of the rewrite rules corresponding to equations in $SC_2$ when oriented from left to right: $\{1. ~f(a, a, a) \rew f(b, b), ~2.~f(b, b, b) \rew f(a, a)\}$, 
$A_1 = \mset{a, a, a} ,  B_2 = \mset{a, a}, C = \mset{a, a}, A_2 = \mset{b, b, b}, B_1 = \mset{b, b}, D = \mset{b, b}$, in which case
$A'_1 \cup A'_2 = \mset{a, b}$ but $B'_1 \cup B'_2 = \mset{} = \emptyset$. If $C$ is chosen to be $\mset{a}$, then 
$A'_1 \cup A'_2 = \mset{a, a, b}$ and 
$B'_1 \cup B'_2 = \mset{a}$; similarly if
$D$ is chosen to be $\mset{b}$, then
$A'_1 \cup A'_2 = \mset{a, b, b}$ and 
$B'_1 \cup B'_2 = \mset{b}$. 
The resulting two new critical pairs, $f(a, a, b) = a$ and $f(a, b, b) = b$ which are not joinable by rules 1 and 2.
To generate a canonical rewrite system for $CACCC(SC_2)$, rules corresponding to their normal forms must be added, leading to
$R_{SC_2} = \{ 1. ~f(a, a, a) \rew f(b, b),~ 2.~f(b, b, b) \rew f(a, a), ~3.~f(a, a, b) \rew a, ~4.~f(a, b, b) \rew b \}$. It can be checked that both $l-l$ as well as $l-r$ superpositions among
distinct pairs of the above four rules lead to joinable critical pairs. In particular, consider the l-r superposition between rules 1 and 4: the right side of rule 1 has $\mset{b,b}$ in common with the left side of rule 4,i.e., $C = \mset{b,b}$; but there is nothing in common between the left side of rules 1 with the right side of rule 4, i.e., $D = \mset{}$;
the critical pair is $(f(a, a, a, a), b)$ which is joinable by rules 1 and 4, similar to the disjoint case when $C = D = \mset{}$.
\ignore{the above rules lead to joinable ll as well as lr critical pairs.}
$R_{SC_2}$ is the reduced canonical rewrite system representing the cancelative AC congruence closure, $CACCC(SC_2)$, of $SC_2$.

{\bf Example 4:} Consider $SC_3 = \{ f(a, b) = f(c, d), f(a, c) = f(b, d')\}$ which are cancelatively closed equations.
Orienting them from left to right using the total ordering $a \grt b \grt c \grt d \grt d',$  the left-right critical pair is $(f(a, a), f(d, d'))$ since $b$ is the common part of the left hand side of the first rule and the right hand side of the second rule, and $c$ is the common part of the right hand side of the first rule and the left hand side of the second rule. 
The other rule $f(b, b, d') \rew f(c, c, d)$ can be obtained from the left-left superposition of the two rules. The rewrite system is $\{ f(a, b) \rew f(c, d), f(a, c) \rew f(b, d'), f(b, b, d') \rew f(c, c, d), f(a, a) \rew f(d, d') \}$ is canonical and represents $CACCC(SC_3)$.
}

\ignore{Their normal forms give two new rules $\{ f(a, a, b) \rew a, ~ f(a, b, b) \rew b\}$ using which
the canonical rewrite system
$\{f(a, a, a) \rew f(b, b), ~f(b, b, b) \rew f(a, a), f(a, a, b) \rew a, ~ f(a, b, b) \rew b\}$ represents $CACCC(SC_2)$.}

\ignore{Superposition between the two rules after orienting the above equations from left to right, gives another rule: $f(b, b, e) \rew f(c, c, d).$ Critical pairs between the new rule and the first two rules are joinable. The three rule rewrite system is cancelatively closed. However, $f(a, a) = f(d, e)$ is congruent in the cancelative congruence closure of $S$ even though both terms are in normal form with respect to the three rewrite rules since $f(a, b, e) = f(c, d, e) = f(a, a, c)$ by first equation, which simplifies by cancelativity to $f(d, e) = f(a, a)$, and similarly, $f(a, c, d) = f(a, a, b) = f(b, d, e)$, which also simplifies by cancelativity to $f(a, a) = f(d, e).$ The inference sequence $f(a, b, e) \eqv f(c, d, e)\eqv f(a, a, c)$ is in $ACCC(R)$ but $f(a, a) \eqv f(d, e)$ is not in $ACCC(R)$; however it is in $CACCC(R).$

For the above example,
$b$ appears in the left hand side of the first rule as well as the right side of the second rule, a superposition $f(a, b, e)$ is generated giving a critical pair:
$f(d, e) = f(a, a)$ which is cancelatively closed. The resulting system
$\{ f(a, b) \rew f(c, d), f(a, c) \rew f(b, e), f(a, a) \rew f(d, e), f(b, b, e) \rew f(c, c, d) \}$ is canonical as well as cancelatively closed. Below, an algorithm is given, which uses new additional superpositions between a pair of rules sharing constants on the left and right sides along with left-left critical pairs.}

\subsection{Cancelative Disjoint Superposition:} Given two distinct cancelativity closed rewrite rules $f(A_1) \rew f(B_1), f(A_2) \rew f(B_2)$ such that the pair $(f(A_1 \cup A_2), f(B_1 \cup B_2)) $ has common subterms, a critical pair $(f((A_1 \cup A_2) - (B_1 \cup B_2)), f(( B_1 \cup B_2) - (A_1 \cup A_2))$ is generated; if $f$ does not have an identity, then the terms in the critical pair cannot be $f(\mset{})$ and multiple critical pairs are generated by ensuring that neither term in a critical pair 
is $f(\mset{})$. Further, the cancelative closure of these critical pairs must be generated.
This is illustrated above for $SC_2$.
In case of the example $SC_3$, a cancelative disjoint superposition is generated since $(f(a,b, a, c), f(c,d, b, d')) $ have the common subterm 
$f(b, c)$ generating the critical pair
$(f(a, a), f(d, d'))$.

It is easy to see that new equations generated from cancelative disjoint superposition are in the cancelatively closed congruence closure.

\ignore{

\begin{thm}
Let $R$ be a canonical AC rewrite system $R = \{ f(A_i) \rew f(B_i) | 1 \le i \le k\}$ with an AC symbol $f$.
If $f$ is cancelative, 
$R$ is cancelativity closed and 
the critical pairs generated from disjoint superpositions of distinct pairs of rules in $R$ are joinable, then $R$ is also a canonical rewrite system for its cancelative congruence closure $CACCC(S)$, where $S$ is the finite set of equations corresponding to rewrite rules in $R$.
\end{thm}}

\begin{thm}
Let $R$ be a canonical AC rewrite system $R = \{ f(A_i) \rew f(B_i) | 1 \le i \le k\}$ with an AC symbol $f$.
If $f$ is cancelative, 
$R$ is cancelativity closed and 
the critical pairs generated from cancelative disjoint superpositions of distinct pairs of rules in $R$ are also joinable, then $R$ is a canonical rewrite system for its cancelative congruence closure $CACCC(S)$, where $S$ is the finite set of equations corresponding to rewrite rules in $R$.
\end{thm}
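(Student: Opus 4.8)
The plan is to prove that $R$ is a decision procedure for $CACCC(S)$ by establishing two directions: soundness (every rule in $R$ denotes an equation in $CACCC(S)$) and completeness (every equation in $CACCC(S)$ is decided by normalizing both sides with $R$). Soundness is straightforward since $R$ is assumed canonical for $ACCC(S)$ and $ACCC(S) \subseteq CACCC(S)$; moreover, each cancelative disjoint superposition, as already noted in the excerpt, produces equations that lie in the cancelatively closed congruence closure, so closing $R$ under such superpositions keeps every rule inside $CACCC(S)$. The real content is completeness: I must show that whenever $f(A) = f(B) \in CACCC(S)$, the canonical forms $\widehat{f(A)}$ and $\widehat{f(B)}$ under $R$ coincide.

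First I would set up an inductive characterization of $CACCC(S)$ as the least relation containing $ACCC(S)$ and closed under the cancelation rule: from $f(A\cup G) = f(B\cup G)$ derive $f(A) = f(B)$ (with the side conditions on emptiness / identity from the \emph{Cancelativity Closed} subsection). Since $R$ is already canonical for $ACCC(S)$ by Theorem \ref{correctness}, the only new proof obligation is that the relation ``equal normal forms under $R$'' is itself closed under cancelation. So the crux reduces to the following claim: if $f(A\cup G)$ and $f(B\cup G)$ have the same $R$-normal form, then so do $f(A)$ and $f(B)$. I would prove this by Noetherian induction on the pair using the well-founded ordering $\gg_f$, peeling off one rewrite step at a time and tracking how the shared context $G$ interacts with the redex.

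The key step, and the main obstacle, is the redex-interaction analysis in this induction. Consider the first rewrite step applied, say to $f(A \cup G)$, by a rule $f(A_1) \rew f(B_1)$, so $A_1 \subseteq A \cup G$. The difficulty is that $A_1$ may straddle the boundary between $A$ and the shared context $G$: part of the redex lives in $A$ and part in $G$. When $A_1 \subseteq A$ (redex entirely inside the non-shared part), the step descends cleanly to $f(A)$ and the induction hypothesis applies. The hard case is when $A_1$ overlaps $G$: then rewriting $f(A)$ alone is not directly possible, and I must use a second rule applicable to $f(B\cup G)$ whose redex also touches $G$, and argue that the \emph{cancelative disjoint superposition} between these two rules — whose critical pair is assumed joinable by hypothesis — precisely supplies the rewrite proof that lets the shared material $G$ be canceled coherently on both sides. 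This is the point where the new superposition notion does essential work: it is exactly the inference that repairs the failure of the naive disjoint-context argument when cancelation removes shared subterms.

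Finally, I would assemble these pieces: termination of $R$ (inherited from $\gg_f$, Theorem \ref{termination}) guarantees normal forms exist; the closure under cancelative disjoint superpositions together with ordinary local confluence (Lemma 3.3) yields that the induced relation on normal forms is closed under both congruence and cancelation, hence contains all of $CACCC(S)$; and soundness gives the reverse inclusion. Therefore $f(A) = f(B) \in CACCC(S)$ iff $\widehat{f(A)} = \widehat{f(B)}$, which is the assertion that $R$ is a canonical rewrite system deciding $CACCC(S)$. I expect the emptiness side conditions (avoiding the meaningless $f(\mset{})$ when $f$ lacks an identity, which forces the multiple critical pairs described for $SC_2$) to require careful but routine case splitting rather than a new idea.
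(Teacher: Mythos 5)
Your overall architecture is sound, and your reduction identifies exactly the right crux: since $R$ is canonical for $ACCC(S)$, the relation ``same $R$-normal form'' is already closed under the congruence operations, so everything rests on the single claim that it is also closed under cancelation --- if $f(A \cup G)$ and $f(B \cup G)$ have the same normal form, then so do $f(A)$ and $f(B)$ (modulo the emptiness side conditions). This is precisely the obligation the paper isolates as well, there phrased as the base case of an induction on the number of cancelation steps in a $CACCC(S)$-derivation. The difference lies in how this claim is proved, and that is where your proposal has a genuine gap.

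Your treatment of the hard case --- a redex of $f(A \cup G)$ straddling $A$ and $G$ --- prescribes taking ``a second rule applicable to $f(B \cup G)$ whose redex also touches $G$'' and invoking the cancelative disjoint superposition of those two rules. This fails in two ways. First, such a rule need not exist: $f(B \cup G)$ can itself be the common normal form, hence irreducible. Concretely, in the paper's system $R_{SC_2} = \{ f(a,a,a) \rew f(b,b),\, f(b,b,b) \rew f(a,a),\, f(a,a,b) \rew a,\, f(a,b,b) \rew b\}$, take $A = \mset{a,a,b}$, $B = \mset{a}$, $G = \mset{a}$: then $f(B \cup G) = f(a,a)$ is irreducible, while $f(A \cup G) = f(a,a,a,b)$ admits the straddling redex $f(a,a,a)$ of the first rule, so your prescription dead-ends. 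The induction survives here only because the non-straddling redex $f(a,a,b)$ happens to be available --- but your proof gives no argument that a usable non-straddling redex always exists, and securing that availability is essentially the whole content of the theorem (it is what the rules added by closure are for). Second, even when both sides are reducible, the superposition you invoke pairs two rules applied in parallel on opposite sides of the valley, whereas the hypothesis of the theorem concerns the critical pair obtained by canceling common parts of $f(A_1 \cup A_2)$ against $f(B_1 \cup B_2)$; this construction captures the interaction of the right side of one rule with the left side of another along a \emph{sequential} derivation (it is what reduces proofs of the shape $f(a,a,a,b) \rew f(b,b,b) \rew f(a,a)$, hence $f(a,a,b) \eqv a$ by cancelation, to an assumed-joinable critical pair), and it does not obviously yield the joinability statement your case needs. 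The paper sidesteps this mismatch by working on derivations rather than terms: it inducts on the number of cancelation steps, rearranges each cancelation-free subderivation so that the cancelation applies across two consecutive rule applications --- exactly a cancelative disjoint superposition --- and glues the resulting rewrite proofs. To repair your Noetherian induction you would have to show that, under the closure hypothesis, every straddling configuration can be traded either for a strictly smaller instance of the claim or for a rule of $R$ whose left side fits inside $A$ alone; that argument is the missing heart of the proof, not routine case splitting.
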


\ignore{
It is easy to see that new equations generated from left-left as well as left-right critical pairs are in the cancelatively closed congruence closure. Addition rules obtained from
left-right critical pairs that are not joinable, are added so that the resulting rewrite system represents the associated cancelative congruence closure without needing cancelation in rewrite proofs.}


In a typical $AC$ congruence relation, a sequence of derivations is of the form:
$f(C_1) \eqv f(C_2) \eqv \cdots \eqv f(C_i) \eqv f(C_{i+1})$, where $f(C_j) \eqv f(C_{j+1})$ using a rule $f(A) \rew f(B)$ is such that either $A \subseteq C_j$ in which case $C_{j+1} = (C_j-A) \cup B$, or
$B \subseteq C_j$ in which case $C_{j+1} = (C_j-B) \cup A$. In a cancelative congruence relation, however, from two congruent terms in the above sequence of derivations, say $f(C_{j_1})$ and $f(C_{j_2})$, by cancelation, 
$f(C'_{j_1}) \eqv f(C'_{j_2})$ can be derived by canceling a common subterm $f(X)$ (i.e., $C_{j_1} = C'_{j_1} \cup X, C_{j_2} = C'_{j_2} \cup X$), from which another sequence of derivations is generated. Further, 
a cancellation free derivation of congruence of $f(C_{j_1})$ and $f(C_{j_2})$ is not a single step by some equation $f(L) = f(R) \in S$, but rather involves multiple steps by multiple equations in $S$. Without any loss of generality, a cancelation-free derivation can be rearranged so that cancelation is on two terms congruent by two different equations, say  $f(L_1) = f(R_1), ~f(L_2) = f(R_2) \in S$. Cancelative disjoint superposition and the associated critical pairs capture such interaction among equations. 

To illustrate, consider a proof of congruence of $f(a, a, b) \eqv^* a$ in $SC_2$: $f(a, a, a, b, b, b) \eqv f(b, b, b, b, b) \eqv f(a, a, b, b) $ involving two rules
$f(a, a, a) = f(b, b),~ f(b, b, b) = f(a, a)$,
from which by cancelation, $f(a, a, b) \eqv a$.

\ignore{Since the final step in this proof is cancelation, unlike in my previous case, to generate the above proof, terms need to be enlarged, guided by the equations in $SC_2$.}

In a proof below, it is shown that for any sequence of derivations in $CACCC(S)$ possibly using cancelation, a new sequence of derivations in $CACCC(R) $ can be constructed without any cancelation using additional rules generated from cancelative disjoint critical pairs.

\ignore{The following discussion is to provide some intuition about a cancelative AC congruence closure and how to obtain the associated canonical rewrite system before getting into a proof of the above theorem.}

\ignore{
As stated above, a cancelative congruence closure $CACCC(R)$ of $R$ when its rules are viewed as equations, has the property that if $f(A) = f(B) \in CACCC(R) $ and $A \cap B \neq \emptyset,$ then $f(A-B) = f(B-A) \in CACCC(R)$ also.}

\ignore{Consider the rewrite system obtained from
$SC_2$ by orienting the equations from left to right: $\{1. ~f(a, a, a) \rew f(b, b), ~2. ~ f(b, b, b) \rew f(a, a)\}$. Not assuming cancelativity, it is easy to see that the rewrite system is canonical. Consider a sequence of inferences: $f(a, a, a, b) \eqv f(b, b, b) \eqv f(a, a)$ in $ACCC(SC_2)$. If
$f$ is also cancelative, $f(a, a, b) \eqv a$ can be derived in $CACCC(SC_s)$, however $f(a, a, b) \eqv a$ is not in $ACCC(SC_2)$.}
To check whether $f(A) = f(B) \in CACCC(R),$ both $A$ and $B$ may have to be enlarged to $A', B'$ respectively using some multiset $C$ of constants such that $A' = A \cup C, B' = B \cup C,$ and $f(A'), f(B')$ have the same canonical form using $R$, i.e., $f(A') \eqv f(B')$ in $ACCC(R)$. Determining how much to enlarge $f(A), f(B)$ can however be a challenge; left-right superpositions address this challenge. 
\ignore{For the above example,
a left-right superposition is $f(a, a, a, b) = f(b, b, b) = f(a, a)$ as well as $f(b, b, b, a) = f(a, a, a) = f(b, b)$ on which using the cancelativity of $f$, $f(a, a, b) = a$ and $f(b,b, a) = b$, respectively, are derived to be in $CACCC(SC_2)$. Thus from a sequence of derivations in $ACCC(SC_2)$, a new sequence of derivations in $CACCC(SC_2)$ is generated using the cancelativity of $f$. }
\ignore{To illustrate, from $f(c, d, e) \eqv f(a, b, d') \eqv f(a, a, c) \in ACCC(R)$ above, a new inference $f(d, d') \eqv f(a, a)$ which is not in $ACCC(R)$ along with $f(c, d) \eqv f(a, b), f(b, d') \eqv f(a, c),$ which are in $ACCC(R).$}

\ignore{In general, from a sequence of inferences in $ACCC(R)$ without using cancelativity, new sequences of inferences are successively derived by using cancelativity in $CACCC(R)$. To check membership in $CACCC(R)$ for which there is a derivation using cancelativity, one approach is to enlarge terms in the derivation by additional constants to get a derivation without cancelativity. 
Consider a sequence of inferences due to equations in $R$: $f(A_0) = f(A_1) = \cdots = f(A_k)$; from the above sequence, it is possible to 
derive a new sequence by enlarging $A_i$'s by a multiset $X_i$ of constants, which after cancelation, is the original sequence. Repeating this process of enlarging multiple times, a new sequence of derivations can be generated without using cancelativity.}

\ignore{In the above example, consider a proof of $f(a, a, a) = f(a, d, d')$; they have the same canonical form due to the rule $f(a, a) \rew f(d, d')$ generated from $left-right$ superposition of rules 1 and 2, and thus $(f(a,a,a), f(a, d, d')) \in CACCC(R).$
In the absence of that rule and using the canonical rewrite system for $ACCC(R)$, $f(a, a, a)$ and $f(a, d, d')$ are in normal form.
A proof of $f(a, a, a) = f(a, d, d')$ is generated by enlarging both sides by adding $c$ to $f(a, a, a, c) = f(a,d,c,d')$ on which rule 2 applies to give $f(a, a, b, d') = f(a, a, d, d')$ on which rule 1 applies. Thus a rewrite proof of $f(a, a, b, d') = f(a, a, d, d')$ in $R$ without the additional rule $f(a, a) \rew f(d, d')$ is generated; on this, the cancelativity rule can apply, giving a proof of $f(a, a) = f(d, d')$. This is the intuition behind the proof below.}

\begin{proof}
{\it Sketch:}
Consider a sequence of inferences showing the congruence of $f(C) \eqv^* f(D) \in CACCC(S)$. In general, 
because of cancelation, this sequence cannot be $f(C) = f(C_1) \eqv f(C_2) \cdots f(C_i) \eqv f(C_{i+1}) = f(D)$, where $f(C_j) = f(L \cup X), f(C_{j+1}) = f(R \cup X)$ for any $1 \le j < i+1$, or
$f(C_j) = f(R \cup X), f(C_{j+1}) = f(L \cup X)$ for some $f(L) = f(R) \in S$. But instead it is broken into a chain of such subsequences where subsequences are connected using the cancelation inference rule.

For illustration, for $SC_3$, a possible inference sequence of $f(a, a, b) \eqv a$ 
is a chain of subsequence $f(a, a, a, b) \eqv f(b, b, b) \eqv f(a, a)$
in which the first and second inferences are due to the first equation and second equation, respectively, but then there is another subsequence $f(a, a, b) \eqv a,$
connected to the previous subsequence by cancelation
on two congruent terms $f(a, a, a, b) \eqv f(a, a)$, leading to a sequence of derivations:
$f(a, a, a, b) \eqv f(b, b, b) \eqv f(a, a)$, $f(a, a, b) \eqv a.$ Using the rule $f(a, a, b) \rew a$ generated from cancelative disjoint superposition from  rules 1 and 2, a cancelative-free derivation of $f(a,a, b) \eqv a$ can be generated.

\ignore{

$s_1 \eqv s'_1, s_2 \eqv s'_2 \cdots s'_i \eqv s_i,  s'_{i+1} \eqv s_{i+1} = t$ as $s'_j$ need not be the same as $s_{j+1}$; }

The following proof involves generating a cancelation-free sequence of derivations from an arbitrary sequence of derivations involving cancelation. When no cancelation is involved, then there is nothing new to construct. 

Assume the sequence of inferences relating $s \eqv^* t$ employs $k$ cancelation steps. Proof is by induction on $k$.

{\bf Basis k =1:} $s \eqv^* t \in CACCC(S)$ can be decomposed into two subsequences $s \eqv^*  s_1$ without cancelation, followed by $t_1 \eqv t_2 \eqv^* t $ without cancelation. In other words,  $s \eqv^* s_1\in ACCC(S)=ACCC(R)$ and $t_1 \eqv t_2 \eqv^* t \in ACCC(S)=ACCC(R)$ where $t_1 \eqv t_2$ was generated by applying cancelation on two congruent terms $s_{j_1} \eqv^* s_{j_2}$ in
a cancelation-free sequence of inferences $s \eqv^* s_1$. Without any loss of generality this sequence can be arranged so that the two congruent terms are at the end of the sequence as: $s_{j_1} \eqv s_{j_3} \eqv s{j_2}$ that includes a cancelative disjoint superposition. By assumption, the critical pairs corresponding to them are joinable, implying that there is a rewrite proof
of the pair $t_1 \eqv t_2$ generated after cancelation. Since both cancelation-free sequences have rewrite proofs, they can be glued together using a rewrite proof of $t_1 \eqv t_2$, resulting in a cancelation-free derivation.

\ignore{
There is a rewrite proof of $s' \eqv s_1$ using $R$ and further $s' = f(A \cup X), t' = f(s_1 \cup X) $

A new cancelation free derivation is constructed using rules generated from $l-r$ superposition similar to a derivation for $SC_3$ above: from $f(a, a, a, b) = f(b, b, b) = f(a, a)$, infer
by cancelation from $f(a, a, a, b) = f(a, a)$: $f(a, a, b) = a$;
wlog, the new rule added from $l-r$ superposition of rules 1 and 2 gives the new rule using a proof without cancelation is found by applying the rewrite rule $f(a, a, b) \rew a$. }

\ignore{Consider $(f(A), f(B)) \in CACCC(R)$. If $(f(A), f(B)) \in ACCC(R)$, i.e., cancelativity is not needed to show the congruence, then their  normal forms using $R$ are equivalent (modulo AC), implying the corresponding multisets of constants in the normal forms are the same. 

Proof by contradiction. Assume 
$(f(A), f(B)) \in CACCC(R)$ but $(f(A), f(B)) \notin ACCC(R)$ implying that cancelativity is applied to show the congruence of $f(A)$ and $f(B)$. Let $k$ be the number of applications of the cancelativity inference rule applied to show that $(f(A), f(B)) \in CACCC(R).$ 

The proof is by induction on $k$.

The base case of $k = 1$ implies that there is only one application of the cancelativity inference rule; without any loss of generality, assume that $f(A)$ and $f(B)$ are both in normal forms with respect to $R$.
$f(A) = f(B)$ is then derived from 
$f(A') = f(A \cup X), f(B') = f(B \cup X)$ for some  $f(X)$ using cancelativity, where $X$ is an nonempty multi-set of constants such that 
$(f(A'),f(B')) \in ACCC(R)$. In the above example, to derive  $f(a, a, b) = a \in CACCC(SC_2)$, enlarge both sides to $f(a, a, a, b) = f(a, a)$ which is in $ACCC(SC_2)$, since the normal form of $f(a, a, a, b)$ using the first rule followed by the second rule is indeed $f(a,a)$.

Assume $f(A') \rew^* f(B')$ (or vice versa) or $f(A'), f(B')$ both rewrite to a common term $f(C').$ 
Since $f(A)$ and $f(B)$ are not congruent using $R$, the enlargement of $f(A), f(B)$ by some $X$ could be due to the applications of rules in $R$ in the rewrite steps either from $f(A')$ to $f(B')$ or both from
$f(A'), f(B')$ to $f(C')$.

Consider the first case: 
$f(A') \rew f(A'') \rew^* f(B')$ by rewrite rules $f(L_1) \rew f(R_1)$ and $f(L_2) \rew f(R_2),$ respectively. In this case, one way to get a nontrivial $X$ for enlargement is due to the left-right superposition between $f(L_1)\rew f(R_1)$
and $f(L_2) \rew f(R_2)$) 
$R_1 = R'_1 \cup X, L_2 = L'_2 \cup X,$ and similarly $L_1 = L'_1 \cup X', R_2 = R'_2 \cup X',$ generating $f(L'_1 \cup X' \cup L'_2 \cup Y) = f(L_1 \cup L'_2 \cup Y) \rew f(R_1 \cup L'_2 \cup Y) = f(R'_1  \cup L'_2 \cup X \cup Y) = f(R'_1 \cup R_2 \cup Y) = f(R'_1 \cup R'_2 \cup X' \cup Y)$. Since the critical pair $(f(L'_1 \cup L'_2), f(R'_1 \cup R'_2))$ is joinable using $R$ and in the congruence closure of $R$, 
$f(L'_1 \cup X' \cup L'2 \cup Y)$ and $f(R'_1 \cup R'_2 \cup X' \cup Y)$ are joinable using $R$ and are in the congruence closure of $R$.

In the second case, 
$f(A') \rew f(C')$ and $f(B') \rew f(C')$ by
$f(L_1) \rew f(R_1)$ and $f(L_2) \rew f(R_2),$ respectively; their joinability follows from the joinability of the left-left critical pair. The case of where the second step is applied backward: $f(C') \rew f(B')$  is similar to the first case since $f(A') \rew f(C') \rew f(B').$

For the induction step, assume the joinability of an inference sequence in which $k$ applications of cancelative inference rules are applied, can be shown; the joinability of the step corresponding to the $(k+1)^{th}$ application of cancelativity is shown in the same way by case analysis as in the base case: $k=1$.
}

\ignore{
From an inference sequence establishing $f(A) = f(B)$, a new inference sequence can be constructed using $f(X)$ without the application of the cancelativity rule, implying that $(f(A \cup X), f(B \cup X)) \in ACCC(R)$ and hence $f(A \cup X), f(B \cup X)$ have equivalent canonical forms.

at be from $f(C) = f(D)$, infer $f(C-T) = f(D-T)$, where $T$ is a common nonempty multi-subset of constants in $C$ and $D$. 

there is a finite rewrite sequence of $f(A) = f(A_0) \rew f(A_1) \rew \cdots \rew f(A_{i_1+1}) \rew f(A_{i_1 + 2})$ using $R$ such that $f(A_{i_1}) \rew f(A_{i_1 + 1})$ by a rule $f(L) \rew f(R)$ and
$f(A_{i_1 + 1}) \rew f(A_{i_1 + 2})$ by a rule 
$f(L') \rew f(R')$ which by cancelativity, gives $f(A'_{i_1}) = f(A'_{i_1 + 2})$ after removing some common multi-subset from both sides. The common multi-subset is throughout the sequence

but there is a proof $f(A)=f(A_0) \eqv f(A_1) \eqv \cdots \eqv f(A_k) = f(B)$ where $A_{i+1} = (A_i - L \cup R)$ or
$A_{i+1} = A_i - R \cup L$ for some rule $f(L) \rew f(R) $ (thus implying that $(f(A), f(B)) \in ACCC(R)$ also) or there exists another inference sequence
$(f(A'), f(B')) \in ACCC(R)$ implying that 
$f(A') = f(A'_0) \eqv f(A'_1) \eqv \cdots \eqv f(A'_{k'}) = f(B'),$ such that $f(A_i) = f(A'_{i'}-A'_{j'}), f(A_{i+1}) = f(A'_{j'} - f(A'_{i'})).$ It is this second case which necessitates superposing the left hand side of a rule with the right side of another rule when they have a common part; $f(A'_{i'}) \eqv^* f(A'_{j'}) in ACCC(R)$.

}

{\bf Induction Step:} The induction hypothesis is that any derivation with $k$ cancelations can be converted into a cancelation free derivation. To prove that a derivation with $k+1$ cancelations can also be converted into a cancelation free derivation, isolate the last cancelation step by breaking the original derivation with $k+1$ cancelations into $k+1$ subsequences. Since by the induction hypothesis, for the first $k$ subsequences, there is a cancelation free derivation, that derivation can serve as a single cancelation-free subsequence and is linked to the $k+1$ subsequence built using the $(k+1)^{th}$ cancelation step. This case is the same as the $k=1$ case. Using the same argument as for the $k=1$ case, a cancelation free derivation is generated for these two new subsequences, which by induction hypothesis, gives a cancelation-free derivation for the original derivation with cancelation.
\end{proof}

\hspace{-3mm}
{\bf CancelativeACCompletion($S = S_f \cup S_C$, $\gg_f$):}
\vspace*{1mm}
\begin{enumerate}

\item Orient constant equations in $S_C$ into terminating rewrite rules $R_C$ using $\gg_C$ and interreduce them. Equivalently, using a union-find data structure, for every constant $c \in C$, compute, from $S_C$, the equivalence class $[c]$ of constants containing $c$ and make $R_C = \cup_{c \in C} \{ c \rew \hat{c} ~| ~c \neq \hat{c} ~\mbox{and}~ \hat{c} ~\mbox{is the least element in}~ [c]\}.$

\item $S_f = \bigcup_{f(A)-f(B) \in S_f} CancelClose(\{f(A) - f(B)\})$.
\item If $S_f$ has any constant equalities, remove them from $S_f$ and update $R_C$ using them. 

\item Initialize $R_f$ to be $R_C$. 
\item Let $T := S_f$.
\item Pick an $f$-monomial equation $l = r \in T$ using some selection criterion (typically an equation of the smallest size) and remove it
from $T$.

Compute normal forms $\hat{l}, \hat{r}$ using $R_f$. If equal, then discard the equation.

Otherwise, for each $l' = r'\in CancelClose(\{\hat{l} = \hat{r}\})$, 

\begin{enumerate}
    \item orient the result into a terminating rewrite rule using $\gg_f$; abusing the notation and without any loss of generality, each is oriented from left to right.

\ignore{\item For every rule
$\hat{l_i} \rew \hat{r_i},$
generate critical pairs between $\hat{l_i} \rew \hat{r_i}$ and every $f$-rule in $R_f$, adding them to $T$ (similar to Step 3 in {\bf SingleACCompletion}).\footnote{Critical pair generation can be done incrementally using some selection criterion for pairs of rules to be considered next, instead of generating all critical pairs of 
the new rule with all rules in $R_f$.}}
    \item Generate both (classical) critical pair (similar to Step 3 in {\bf SingleACCompletion})
and disjoint superposition critical pair between $l' \rew r'$
and every rule in $R_f$, adding them to $T$.\footnote{Critical pair generation can be done incrementally using some selection criterion for pairs of rules to be considered next, instead of generating all critical pairs of 
the new rule with all rules in $R_f$.}

\item  Add the new $l' \rew r' \in $ into $R_f$.

\item Interreduce other rules in $R_f$ using $l' \rew r'$  (as in Step 4 in {\bf SingleACCompletion}).
\end{enumerate}

\ignore{
For every rule $l \rew r$ in $R_f$ such that
$l$ can be reduced by $\hat{l_i} \rew \hat{r_i}$, remove $l \rew r$ from $R_f$ and insert $l = r$ in $T$. If $l$ cannot be reduced by any rule or by $\hat{l_i} \rew \hat{r_i}$, but $r$ can be reduced, then reduce $r$ by the new rule and generate a normal form $r'$ of the result. Replace
$l \rew r$ in $R_f$ by
$l \rew r'$ after cancelatively closing it.}

\item Repeat Step 6 until both types of critical pairs among all pairs of rules in $R_f$ are joinable, and $T$ becomes empty.

\ignore{\item During the generation of $R_f$, if a rule with $c \rew m,$
is generated with a constant $c$, then introduce a new constant $u$, extend the constant ordering to include $c \grt u$ and replace $c \rew m$ by two rules $c \rew u$ to be included
in $R_C$ and $m \rew u$ to be included in $R_f$. This ensures that the rules in $R_f$ always have a nonconstant on their left hand sides.\footnote{This trick is found useful while combining multiple reduced canonical rewrite system the congruence closures of ground equations with different AC symbols.}}

\item Output $R_f$ as the canonical rewrite system associated with $CACCC(S)$.
\end{enumerate}

\ignore{
\begin{algorithm}[H]
  \DontPrintSemicolon
  \caption{$\bf cancelativeACCompletion(s, \gg_f)$}
  \label{CancelAC}
  \kwInput{$S = S_C \cup S_f,$ a finite set of constant equations and $f$-monomial
  equations; $\gg_f:\text{a monomial ordering on $f$-monomials extending $\gg_C$}$
       }
  \kwOutput{$R_f, \text{A reduced canonical rewrite system representing $CACCC(S_C \cup S_f)$}$}
  $T \gets \emptyset$\;
  For{each $f(A)  = f(B) \in S_f$}{
  $T \gets T \cup CancelClose(\{ f(A) = f(B)\})$\;
  }
  \For{each $ c = d \in T \mid c, d \in C$}{
  $S_C \gets  S_C \cup \{ c = d\}$\;
  $T \gets T - \{c = d \}$\;
  }
  $R_f \gets {\bf SingleACCompletion(T \cup S_C, \gg_f)}$\;
  
  \end{algorithm}
  
  }
  \ignore{

\item Initialize $R_f$ to be $R_C$. Let $T := S_f$.

\item Pick an $f$-monomial equation $l = r \in T$ using some selection criterion (typically an equation of the smallest size) and remove it
from $T$. cancelatively close the equation; since multiple equations can be generated if $f$ does not have the identity, abusing the notation, let each such equation be $l_i = r_i.$ Compute normal forms $\hat{l_i}, \hat{r_i}$ using $R_f$. If equal, then discard the equation, otherwise, cancelatively close the equation in normal forms and orient the result into a terminating rewrite rule using $\gg_f$. Without any loss of generality, let the rule be $\hat{l_i} \rew \hat{r_i}.$

\item For every rule
$\hat{l_i} \rew \hat{r_i},$
generate critical pairs between $\hat{l_i} \rew \hat{r_i}$ and every $f$-rule in $R_f$, adding them to $T$ (similar to Step 3 in {\bf SingleACCompletion}).\footnote{Critical pair generation can be done incrementally using some selection criterion for pairs of rules to be considered next, instead of generating all critical pairs of 
the new rule with all rules in $R_f$.}
Also, generate {\bf left-left} critical pairs among each distinct pair of rules $\hat{l_{i_1}} \rew \hat{r_{i_1}}$ and
$\hat{l_{i_2}} \rew \hat{r_{i_2}}$ $i_1 \neq i_2$, if needed.
Generate also {\bf left-right} critical pairs among all distinct pairs of rules as defined above, adding them to $T$. Critical pairs and their normal forms are cancelatively closed.

\item  Add each new rule $\hat{l_i} \rew \hat{r_i}$ into $R_f$.

\item Interreduce other rules in $R_f$ using the new rule (as in Step 4 in {\bf SingleACCompletion}).

For every rule $l \rew r$ in $R_f$ such that
$l$ can be reduced by $\hat{l_i} \rew \hat{r_i}$, remove $l \rew r$ from $R_f$ and insert $l = r$ in $T$. If $l$ cannot be reduced by any rule or by $\hat{l_i} \rew \hat{r_i}$, but $r$ can be reduced, then reduce $r$ by the new rule and generate a normal form $r'$ of the result. Replace
$l \rew r$ in $R_f$ by
$l \rew r'$ after cancelatively closing it.

\item Repeat the previous three steps until the critical pairs among all pairs of rules in $R_f$ are joinable, and $T$ becomes empty.

\ignore{\item During the generation of $R_f$, if a rule with $c \rew m,$
is generated with a constant $c$, then introduce a new constant $u$, extend the constant ordering to include $c \grt u$ and replace $c \rew m$ by two rules $c \rew u$ to be included
in $R_C$ and $m \rew u$ to be included in $R_f$. This ensures that the rules in $R_f$ always have a nonconstant on their left hand sides.\footnote{This trick is found useful while combining multiple reduced canonical rewrite system the congruence closures of ground equations with different AC symbols.}}

\item Output $R_f$ as the canonical rewrite system associated with $S$.
\end{enumerate}
}

\ignore{
\begin{algorithm}[H]
  \DontPrintSemicolon
  \caption{$\bf {update}_{AC}(R_f, R_C, \gg_f)$}
  \label{updateAC}
  \kwInput{$R_f:\text{An AC canonical rewriting system on $f$-monomials},    
       R_C:\text{A canonical rewriting system on constants}, \gg_f:\text{a monomial ordering on $f$-monomials extending a total ordering on constants}$
       }
       
  \kwOutput{\text{An updated AC canonical rewrite system}, \text{new implied constant rules}}
  $S \gets \set{l = r  \mid NF(l, R_C) \neq l, l \rightarrow r \in R_f}$\;
  $R_f \gets R_f - \{ l \rew r \mid l = r \in S\}$\;
  $R^{'}_f \gets R_C \cup R_f$\;
  $NR_C \gets \emptyset$\;
  \While{$S \neq \emptyset$}{
    $T \gets \emptyset$\;
    \For{each $l = r \in S$}{
      $S \gets S - \{l = r\}$\;
      \uIf{$NF(l, R^{'}_f) \neq NF(r, R^{'}_f)$}{
        Orient $(NF(l, R^{'}_f), NF(r, R^{'}_f))$ as $l^{'} \rightarrow r^{'} \text{using} \gg_f$\;
        \For{each $l^{''} \rew r^{''} \in R^{'}_f$}{
         Compute a critical pair $(s_1, s_2)$ between $l^{'} \rightarrow r^{'}$ and $l^{''} \rew r^{''}$\;
         \lIf{$NF(s_1, R^{'}_f) \neq NF(s_2, R^{'}_f)$}{$T \gets T \cup \{ 
         NF(s_1, R^{'}_f) = NF(s_2, R^{'}_f)\}$}
        }
    $R^{'}_f \gets R^{'}_f \cup \set{l^{'} \rightarrow r^{'}}$\;
    Interreduce other rules in $R^{'}_f$ using $l^{'} \rightarrow r^{'}$ 
        as in step 4 of ${\bf SingleACCompletion}$\;
        {\bf if} $l^{'} \rightarrow r^{'}$ is a new constant rule {\bf then } $NR_C \gets NR_C \cup \set{l^{'} \rightarrow r^{'}}\;$
    }
    }
    $S \gets T$\;
  }
  \Return $(R^{'}_f, NR_C)$\;
\end{algorithm}

}

Using proofs similar to those for the correctness and termination of {\bf SingleACCompletion} including using Dickson's lemma, the correctness and termination of the above algorithm can be established. Theorems \ref{correctness} and \ref{unique} generalize for this case also.

\begin{thm}
The algorithm {\bf CancelativeACCompletion}  terminates, i.e., in Step 4, rules to $R_f$ cannot be added infinitely often.
\end{thm}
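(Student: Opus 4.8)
The plan is to mirror the termination argument for \textbf{SingleACCompletion} (Theorem~\ref{termination}), which rests entirely on Dickson's Lemma, and then to verify that the two features peculiar to the cancelative setting---cancelative closure and cancelative disjoint superposition---do not invalidate that argument. The essential preliminary observation is that both operations act only on $f$-monomials over the \emph{fixed, finite} set $C$ of constants. Cancelative closure merely \emph{removes} common constants from the two sides of an equation, and when $f$ has no identity it at most pads the sides with a single constant already drawn from $C$; a cancelative disjoint superposition of $f(A_1) \rew f(B_1)$ and $f(A_2) \rew f(B_2)$ produces a critical pair built from $(A_1 \cup A_2) - (B_1 \cup B_2)$ and $(B_1 \cup B_2) - (A_1 \cup A_2)$, again using only constants of $C$. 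Hence no new constants are ever introduced, and every left side that is placed in $R_f$ is an $f$-monomial in $GT(\{f\}, C)$.

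I would then argue, exactly as in Theorem~\ref{termination}, that a rule $l' \rew r'$ is added to $R_f$ in Step~6(c) only after $l'$ has been reduced to normal form by $R_f$ (Step~6). Consequently $l'$ is irreducible by every rule currently in $R_f$, i.e.\ $l'$ is \emph{incomparable} in the Dickson ordering $\gg_D$ to the left side of every existing rule. The interreduction in Step~6(d) preserves this invariant: whenever the new rule reduces the left side of an old rule, that old rule is deleted and its equation is returned to $T$, so at every stage the set of left sides of rules in $R_f$ forms a $\gg_D$-antichain. Suppose, for contradiction, that rules were added to $R_f$ infinitely often. Then the accumulated left sides would constitute an infinite $\gg_D$-antichain of $f$-monomials over the finitely many constants of $C$. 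Representing each $f$-monomial as the tuple of multiplicities of the constants of $C$, this is an infinite antichain in $\mathbb{N}^{|C|}$ under the component-wise order, contradicting Dickson's Lemma.

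The main obstacle is to confirm that the auxiliary descent caused by interreduction and by cancelative closure does not secretly reintroduce nontermination through the work list $T$. For this I would note that re-closing right sides in Step~6(d)(ii) and cancelatively closing equations produce only $f$-monomials that are strictly smaller in the admissible ordering $\gg_f$: cancelative closure removes constants and therefore decreases a side under the subterm part of $\gg_f$, while reducing a right side by a rule strictly decreases it in $\gg_f$ as well. Thus each of these operations terminates on its own and feeds back into $T$ only equations that are strictly smaller, so they cannot generate an infinite stream of genuinely new \emph{kept} rules; the count of kept rules is still governed by the antichain bound above.

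The one genuinely novel phenomenon is that cancelative disjoint superposition inspects monomials such as $f(A_1 \cup A_2)$ that can be larger than anything in the input (this is precisely the behaviour illustrated by $SC_2$). This does not threaten termination, because the critical pair it yields, \emph{after} cancelation, is again a bounded $f$-monomial over $C$ whose normal form is either discarded or contributes a left side that must be $\gg_D$-incomparable to the existing ones. The preceding cancelative local-confluence theorem guarantees that these are exactly the pairs whose joinability must be secured, and since there are only finitely many mutually $\gg_D$-incomparable left sides available, only finitely many of them can ever be promoted to rules. Combining the antichain bound on kept rules with the $\gg_f$-descent of all auxiliary simplifications establishes that Step~6(c) adds rules to $R_f$ only finitely often, which is the desired termination.
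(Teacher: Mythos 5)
Your proposal is correct and follows essentially the same route as the paper, which disposes of this theorem in one sentence by appealing to the termination proof of \textbf{SingleACCompletion} via Dickson's Lemma. You in fact supply more detail than the paper does---checking that cancelative closure and cancelative disjoint superposition introduce no new constants and only feed $\gg_f$-smaller or Dickson-incomparable $f$-monomials back into the process---so the left sides of kept rules would form an infinite antichain in $\mathbb{N}^{|C|}$ under nontermination, exactly the contradiction with Dickson's Lemma that the paper intends.
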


{\bf Example 4:} Run the above algorithm on $SC_3 = \{1.~ f(a, b) = f(c, d), ~2.~f(a, c) = f(b, d')\}$, where $f$ is cancelative, using 
the degree ordering on $f$-terms
extending the constant ordering $a \grt b \grt c \grt d \grt d'$.
Both the rules are cancelatively closed. Critical pairs among them are generated: the classical critical pair construction from the two rules is
$3.~f(b, b, d') \rew f(c, c, d)$; cancelative disjointly superposition constructions gives:$~4.~ f(a, a) \rew f(d, d'$. The resulting reduced canonical rewrite system for the AC cancelative congruence closure of $SC_3$
is: $R = \{1.~ f(a, b) \rew f(c, d), ~2. ~f(a, c) \rew f(b, d'), ~3.~f(b, b, d') \rew f(c, c, d),~4.~ f(a, a) \rew f(d, d')\}$.

\subsection{An AC function symbol being a group operation}

If an AC function symbol $f$ has in addition to identity, a unique inverse for every element, then it belongs to an algebraic structure of  an Abelian group. The completion procedure in that case becomes much simpler in contrast to the {\bf SingleACCompletion} and {\bf cancelativeACCompletion} discussed in the previous section, both in its behavior and complexity. Completion reduces to Gaussian elimination which can also be formulated in matrix form using Smith normal form or Hermite normal form. 

Given a finite set $S$ of ground equations with an AC $f$ on constants satisfying the group axioms with the identity 0, its congruence closure $AGCC(S)$ is defined by the closure of $ACCC(S)$ with the universal group axioms, i.e., 
if $f(A) = f(B) \in AGCC(S),$ then
$f(-A) = f(-B) \in AGCC(S)$, where
$-A$ stands for the multiset of the inverses of elements in $A$, i.e.,
if $A = \mset{a_1, \cdots, a_k}$ in $f(A)$, then $-A = \mset{-a_1, \cdots, -a_k}$. Thus,
$f(A) = f(-(-A)),$
as well as 
$f(A \cup \mset{0}) = f(A) \in AGCC(S)$.

Assuming a total ordering on constants, because of the cancelation property and using properties of the inverse function, including
$-f(x, y) = f(-x, -y)$, a ground equation $f(A) = f(B)$ can be standardized into $f(A') = f(B')$ where
$A' \cap B' = \emptyset$ and furthermore, $A'$ consists of the positive occurrences of the largest constant in the ordering in $A' \cup B'$ whereas $B'$ has both negative and positive occurrences of constants smaller than the one in $A'.$ 

From a standardized equation $f(A') = f(B')$, a rewrite rule $f(A') \rew f(B')$ is generated since $A'$ already has the largest constant on both sides of the equation.

A rewrite rule $f(A_1) \rew f(B_1)$ rewrites a term $f(C)$ iff (i) $f(C)$ has a submonomial $f(D)$ with the constant of $A_1$ such that $A_1 \subseteq D$ and the result of rewriting is
the standardized form 
of $f((C-A_1) \cup B_1)$, (ii) $f(C)$ has a submonomial $f(D)$ that has negative occurrences of the constant in $f(A_1)$,i.e., $-A_1 \subseteq D$ and the result of rewriting is
the standardized form 
of $f((C \cup A_1) \cup (- B_1))$. 
For a simple example,
$f(-a-a-b)$ is rewritten using $f(a, a) \rew f(-b, c)$ to $f(-a-a-b+a + a + b - c)$ which standardizes to $f(-c) = -c.$
A term $f(C)$ is in normal form with respect to a rewrite rule $f(A_1) \rew f(B_1)$ iff neither $A_1$ nor $-A_1$ is a subset of $C$.

A rewrite rule $f(A_1) \rew f(B_1)$ rewrites
another rule 
$f(A_2) \rew f(B_2) $ with the same constant in $A_1, A_2$ iff $A_1 \subseteq A_2$, leading to $f(A_2-A_1) = f(B_2 - B_1)$.
 If $A_2 - A_1$ is empty, then the equation reduces to $0 = f(B_2 - B_1)$, which is then brought to standardized form by picking the largest constant in the result.

Let $+$ be the binary symbol of an Abelian group traditionally instead of $f$ with 0 as the identity. Let $k a, k > 0,$ stand for $a + \cdots + a,$ added $k$ times; similarly $k a, k < 0,$ stand for $-a + \cdots + -a$ added $|k|$ times, the absolute value of $k$.

{\bf Example 5:} For a simple illustration, if there are rewrite rules $4 a \rew B_1, 6 a \rew B_2, 10 a \rew B_3$, then using the extended gcd computation, $2 a \rew B_2 - B_1$ is generated along with other rules without $a$ on their left sides are generated from equations:
$2 B_2 = 3 B_1, 5 B_2 - 5B_1 = B_3.$

A rewrite rule $f(A_1) \rew f(B_1)$ can repeatedly reduce another rewrite rule
$f(A_2) \rew f(B_2)$ with the same constant on their left sides until its normal form.
Inter-reduction among such rewrite rules can be sped up by using the extended gcd of positive coefficients of the standardized rewrite rules with the same constant appearing in their left hand sides.
Particularly, given a set $\{ f(A_1) \rew f(B_1), \cdots, f(A_k) \rew f(B_k)\}$ with the identical constant $c$ in their left hand sides, with $A_i$ is the multiset with $j_i$ occurrences of  $c$, a new rule
$f(A) \rew f(B)$ is generated, where $A$ is the multiset of $c$'s with as many occurrences as the extended gcd $g$ of $j_1, \dots j_k$, i.e., 
$g = \sum_{l} i_l * j_l,$ and $f(B) = f(B'_1 \cup \cdots \cup B'_k)$ where $B'_l$ is the number of occurrences of $B_l$ multiplied by $i_l.$ In addition, other rules get simplified after eliminating $c$ from them using the newly generated rewrite rule which eliminates the constant from other rules by rewriting.

For the above example, the extended gcd of $4 a, 6 a, 10 a$ is $2 a$ with a Bezout identity
being $2 a = 6 a - 4a$. Using the rule $2 a \rew B_2 - B_1$, the rules $4 a \rew B_1, 6 a \rew B_2, 10 a \rew B_3$ are normalized eliminating $a$ from them.

The above process of interreduction is repeated thus generating a triangular form with at most one rewrite rule for each distinct constant, which also serves as a canonical rewrite system $R_S$ for $CC(S).$

\begin{thm}
Given a finite set $S$ of ground equations of terms in an Abelian group and a total
ordering on constants, a canonical rewrite system $R_S$ is generated from the rewrite rules from standardized equations generated from $S$ by rewriting and inter-reduction among rules.
\end{thm}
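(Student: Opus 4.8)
The plan is to reduce the entire completion problem to integer linear algebra, making precise the remark preceding the statement that completion ``reduces to Gaussian elimination.'' Fix the total ordering on $C = \{c_1, \dots, c_n\}$ so that $c_1 \gg c_2 \gg \cdots \gg c_n$. Since $f$ is an Abelian group operation, the set of $f$-monomials modulo the group axioms is the free Abelian group on $C$, i.e.\ $\mathbb{Z}^n$: identify each $f$-monomial $f(M)$ (now allowing inverses, so multiplicities range over $\mathbb{Z}$) with its coefficient vector $[M]\in\mathbb{Z}^n$. Under this identification an equation $f(A)=f(B)$ becomes the vector $[A]-[B]$, and the group closure makes $f(C)\eqv f(D)\in AGCC(S)$ equivalent to $[C]-[D]\in L$, where $L=\langle\,[A_i]-[B_i] : f(A_i)=f(B_i)\in S\,\rangle_{\mathbb{Z}}$ is the sublattice of $\mathbb{Z}^n$ generated by the input equations. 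The theorem then asserts exactly that the standardization/rewriting/inter-reduction process computes a \emph{triangular basis} of $L$ (a Hermite-normal-form basis) and that this basis is a canonical rewrite system for $AGCC(S)$.

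With this dictionary in hand, I would first verify that standardized rules are well-defined and triangular. Standardization selects, for each equation vector, the $\gg$-largest constant $c_j$ with nonzero coefficient and orients the rule so that $c_j$ sits on the left with positive coefficient; by the standardization rules the right side involves only constants $\gg$-below $c_j$ together with their inverses. Hence the pivot rule for $c_j$ strictly lowers the leading-constant index, so the whole system is triangular with respect to $\gg$. The extended-gcd inter-reduction of all rules sharing a leading constant $c_j$ with positive coefficients $p_1,\dots,p_m$ replaces them by a single pivot rule of leading coefficient $g=\gcd(p_1,\dots,p_m)$, realized through a B\'ezout combination; eliminating $c_j$ from the remaining rules by rewriting mirrors exactly the column reduction in a Hermite normal form computation.

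For termination, observe that a rewrite step by the pivot rule for $c_j$ (using either $A_1\subseteq C$ or $-A_1\subseteq C$) replaces $f(C)$ by a monomial whose coefficient of $c_j$ has strictly smaller absolute value while only altering coefficients of constants $\gg$-below $c_j$; this is a Euclidean reduction of the $c_j$-coefficient modulo the pivot. Ordering monomials lexicographically by $(|\mathrm{coeff}_{c_1}|,\dots,|\mathrm{coeff}_{c_n}|)$, from the largest constant down, gives a well-founded ordering that strictly decreases under $\rew$, so rewriting terminates; inter-reduction terminates because $C$ is finite and each gcd/B\'ezout step strictly decreases the pivot coefficient of the constant it acts on. For confluence and canonicity, note that after inter-reduction there is at most one rule per leading constant and no right-hand side contains any leading constant, so the system is reduced and triangular. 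A reduced triangular system yields a unique normal form for each monomial (reduce the $c_1$-coefficient modulo its pivot, then $c_2$, and so on), hence $\rew$ is confluent; two monomials share a normal form iff their difference lies in $L$, so $R_S$ decides $AGCC(S)$, and uniqueness of the reduced system is precisely uniqueness of the Hermite normal form of $L$ for the fixed constant ordering.

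The main obstacle is the presence of inverses, which lets a single rewrite step either add $A_1$ or subtract via $-A_1$: the termination measure must be made insensitive to sign (taking absolute values of leading coefficients), and one must check that the intended normal-form range for each pivot coefficient is actually reached and is unique after a canonical sign choice. A secondary obstacle is proving that the extended-gcd inter-reduction exactly computes the lattice pivots, i.e.\ that the B\'ezout combination generates the same subgroup of $L$ along the $c_j$ coordinate as the original rules; this is the heart of the correspondence with the Hermite normal form. Once that correspondence is established rigorously, termination, confluence, and uniqueness all follow from the standard properties of the Hermite normal form, and the analogues of Theorems~\ref{correctness} and~\ref{unique} carry over directly.
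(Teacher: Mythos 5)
Your overall strategy is the same as the paper's: the paper's proof is a short sketch of precisely the triangularization you describe---standardize, use extended-gcd/B\'ezout inter-reduction so that each constant heads at most one rule, eliminate that constant from the remaining rules---and the text preceding the theorem already identifies this procedure with Gaussian elimination and the Hermite/Smith normal form. Your lattice dictionary ($f$-monomials as vectors in $\mathbb{Z}^n$, $AGCC(S)$ as congruence modulo the sublattice $L$ spanned by the input differences, the output as a triangular basis of $L$) is a cleaner formalization of that sketch, and your termination argument (lexicographic comparison of pivot coefficients, most significant constant first) is sound.

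There is, however, one genuine gap, and it sits exactly at the point you labelled ``the main obstacle'' and then deferred: confluence. With the rewrite relation as the paper defines it and as you restate it---a rule $f(A_1)\rew f(B_1)$ fires when $A_1\subseteq C$ or $-A_1\subseteq C$, so each step strictly decreases the absolute value of the pivot coefficient---normal forms can have pivot coefficient anywhere in the open interval $(-g,g)$, i.e.\ $2g-1$ representatives for only $g$ residue classes, so normal forms are not unique. Concretely, take the single standardized, inter-reduced rule $f(a,a)\rew f(-b,c)$ (that is, $2a\rew -b+c$ with $a \gg b \gg c$). The monomials $a$ and $-a-b+c$ are congruent in the group closure (their difference is $2a+b-c$), yet both are irreducible, since neither contains two positive nor two negative occurrences of $a$; hence this reduced triangular system is not confluent for the symmetric relation. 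Your sentence ``reduce the $c_1$-coefficient modulo its pivot, then $c_2$, and so on'' silently substitutes a different relation, namely Euclidean reduction into the fixed residue system $[0,g)$, and that is the relation for which your HNF argument actually works. To obtain it, reduction must be made asymmetric: a rule $g\,c_j \rew r$ must be allowed to rewrite any monomial whose $c_j$-coefficient is negative (by adding $g\,c_j - r$), not only monomials containing at least $g$ negative occurrences of $c_j$. With that repair, every monomial reaches a unique normal form with all pivot coefficients in $[0,g_j)$, confluence and the decision procedure follow, and uniqueness of $R_S$ is indeed uniqueness of the Hermite normal form. Note that a ``canonical sign choice'' on normal forms alone, which is what your obstacle paragraph suggests, cannot suffice, because the symmetric relation is unable to move a term from one in-range representative to the other. (To be fair, the paper's own two-sentence proof does not address this either; the defect is inherited from its definition of rewriting, so closing it is a genuine contribution your write-up should make explicit rather than defer.)
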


\begin{proof}
Let $S'$ be the standardized equations generated from $S$. Starting with a standardized equation with the largest constant symbol with the least positive integer coefficient. Using inter-reduction on two rules at a time (or using extended gcd computation using multiple rules), $R$ consisting of rewrite rules with at most one rule for every constant on its left side is generated using the above algorithm.
\end{proof}

{\bf Example 6:} As another example, consider $S = \{ a + a + b + c = -a + b + b -c, ~a + b = -a + c + 0, ~-b -b-c = a -b + c\};$  Under a total ordering $a \grt b \grt c,$ the above equations are standardized to 
$\{ 3 a = b - 2c,~ 2 a = -b + c,~ a = -b-2c \},$
which orient from left to right as rewrite rules. The third rule $a \rew -b -2c$ rewrites the other two rules to: 
$\{ -3 b -6c = b - 2c, ~-2b-4c=-b + c\},$
which after standardization, generate: 
$\{ 4 b \rew -4c,~ b \rew -5c\}.$
Using the rewrite rule for $b$, the canonical rewrite system 
$\{ a \rew -b - 2c, ~b \rew -5c, ~16 c \rew 0\}$
is generated, further fully reduced to,
$\{ a \rew 3c,~ b \rew -5c, ~16 c \rew 0\}$
which represents the associated $AGCC(S)$.

\section{Combining Canonical Rewrite Systems Sharing Constants}

In the forthcoming sections, canonical rewrite systems generated from ground equations expressed using various AC symbols and uninterpreted symbols are combined for the case when the rewrite systems only share common constants. A critical operation performed on an already generated canonical rewrite system is when a new equality on constants implied by another rewrite system, is added/propagated to it. This section explores such an update in depth. It is done for two distinct cases but with similar analysis: (i) a constant equality generated from an AC canonical rewrite system such as from Section 3 is combined with a canonical rewrite system generated from uninterpreted ground equations such as using \cite{KapurRTA97}, and (ii) 
a constant equality generated from a canonical rewrite system generated from uninterpreted ground equations or another AC canonical rewrite system is added to another AC canonical rewrite system. 
For simplicity, it is assumed that in the monomial termination orderings $\gg_f$  used in generating canonical rewrite systems,
nonconstant ground terms  are bigger than constants as is typically the case; this restriction will be relaxed later in Section 9 however to allow great flexibility in choosing termination orderings on ground terms.

\ignore{Given two ground canonical rewrite systems that share only constants, they can be combined to get a canonical rewrite system over the mixed signature in the general case when constants can be bigger in orderings than nonconstant ground terms. For orderings in which constants are lower than nonconstant ground terms and no constant equalities are propagated from the generation of one rewrite system to another, it suffices to take the union of the two rewrite systems because of disjointness of the left sides of the rewrite rules in the two disjoint rewrite systems. 

In case constant equalities are propagated from one rewrite system to another and/or constants could be bigger in an ordering than ground nonconstant terms, then orientation of rewrite rules may change or replacement of a constant by another constant in the left sides of rewrite rules may generate additional overlaps among rewrite rules. A possible restoration of the canonicity of a rewrite system when its confluence has been violated, may lead to additional iterations, but they are guaranteed to be finite many because of finitely many constants.

When constant equalities are propagated from one rewrite system to another, it becomes necessary to restore canonicity of the rewrite systems which may get violated since constants get replaced in the left sides of rules, or in case a new constant is introduced to delay/postpone decision about which among many possible normal forms of certain constants with different outermost symbols should be picked as their canonical form. Further, such restoration may have to be done multiple times. An example above illustrated that. Below, we discuss yet another example illustrating such behavior. 

}

Let $R_U$ be a reduced canonical rewrite system representing a congruence closure of ground equations on uninterpreted symbols assuming a total ordering $\grt$ on constants in $C$; sometimes, $\gg_C$ is used to stand for a total ordering on constants to be consistent with the notation for monomial orderings $\gg_f$.

Let $R'_U = {\bf update_U}(R_U, \{ a_i =  b_i \mid 1 \le i \le k \})$ be a reduced canonical rewrite system generated from $R_U$, when a finite set of constant equalities $\{ a_i = b_i \mid 1 \le i \le k \}$ are added to $R_U$; if needed, the total ordering on constant is extended to consider any new constants not in $R_U$. $R'_U$ is generated as follows.
(i) The constant congruence relation defined by $R_U$ is updated using
the constant rules $\{ a_i = b_i \mid 1 \le i \le k \}$ by merging congruence classes.
Using the total ordering $\gg_C$ on constants.
the least element in a merged congruence class is picked as its canonical form and a new canonical constant rewrite system is generated. 
Later this update operation on constant rules due to the addition of new constant rules is called ${\bf update_C} (R_C, \{ a_i =  b_i \mid 1 \le i \le k \})$, where $R_C$ is a subset of $R_U$ consisting only of constant rules.
(ii) Flat rules with uninterpreted symbols from $U$ are then updated by replacing constants appearing in them by their new canonical representatives, if any; all flat rules having identical left sides $h(c_1, \cdots, c_j) \rew d_1, 
h(c_1, \cdots, c_j) \rew d_2, \cdots,
h(c_1, \cdots, c_j) \rew d_l$ are replaced by a single rule $h(c_1, \cdots, c_j) \rew d_i$, where $d_i$
is the least constant among $\{d_1, \cdots, d_l\}$; additional constant rules from $ d_j \rew d_i, d_j \neq d_i$ are added. If during this second step, any new constant rule is generated, then the above steps are repeated until no new constant rules are generated. 
\ignore{ (i) if $a$ does not appear in the left side of any rules in $R_U$ then $R'_U = \{ a \rew b \} \cup \bigcup_{l \rew r \in R_U} \{ l \rew \bar{r} \}$ where $\bar{r}$ is a normal form of $r$ using other constant rules in $R_U$. (i) If $a$ appears in the left side of a flat nonconstant rule in $R_U$, then that flat rule
$h(c_1, \cdots, c_k) \rew d$ changes to reflect new canonical forms of constants to be $h(c'_1, \cdots, c'_k) \rew d'$. This change may lead to two flat nonconstant rules with identical left sides: $h(c'_1, \cdots, c'_k) \rew d$ as well as $h(c'_1, \cdots, c'_k) \rew d'$, generating a new constant equality $d = d'$. Compute
${\bf update_U}(R_U - \{ h(c_1, \cdots, c_k) \rew d\}
\cup \{ h(c'_1, \cdots, c'_k) \rew d', ~a \rew b \}, 
\{ d \rew d'\}$ assuming $d > d'$ (otherwise if $d' > d$, then replace $d$ by $d'$ and vice versa everywhere above). ${\bf Update_U}$ is repeatedly applied until all constant equalities have been propagated. The result of this repeated ${\bf update_U}$ step is a canonical rewrite system $R'_U$, consisting of constant rules and flat rules with distinct left sides, all constants on the right sides of the rules and the left sides of the flat rules in their canonical forms. Observe that the orientation of flat-rules due to ${\bf update_U}$ do not change even though their left sides may change due to possible changes in their canonical forms.}

Note that the ${\bf update_U}$ is implicitly used in Section 2.2 during the construction of a canonical rewrite system representing $CC(S).$ The correctness of ${\bf update_U}$ follows from the correctness of a congruence closure algorithm as given in \cite{KapurRTA97,BK20}.

\begin{thm}
$R'_U$ as defined above is a reduced canonical rewrite system for the congruence closure $CC(S_U \cup \{a_i = b_i \mid 1 \le i \le k\})$ using a total ordering $\gg_C$ on constants, where $S_U$ is the set of equations obtained from the rewrite rules in $R_U$.
\end{thm}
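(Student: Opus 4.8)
The plan is to recognize that $\mathbf{update_U}$ is nothing but an execution of the congruence closure algorithm of Section 2.2 applied to the input $R_U \cup \{a_i = b_i \mid 1 \le i \le k\}$, and then to invoke the correctness already established for that algorithm (the restated Theorem of \cite{KapurRTA97}, see also \cite{BK20}). First I would observe that, since $R_U$ is a canonical rewrite system for $CC(S_U)$ and $S_U$ is by definition the set of equations read off from the rules of $R_U$, we have $CC(S_U) = CC(R_U)$; consequently $CC(S_U \cup \{a_i = b_i\}) = CC(R_U \cup \{a_i = b_i\})$. Thus it suffices to show that $R'_U$ is a reduced canonical system for the latter.

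Next I would verify that each step performed by $\mathbf{update_U}$ preserves the congruence closure $CC(R_U \cup \{a_i = b_i\})$. The constant-merge step ($\mathbf{update_C}$) only chooses canonical representatives within congruence classes and rewrites constants to equal constants, so it neither adds nor removes congruences. The flat-rule step rewrites each left side $h(c_1,\dots,c_j)$ using constant rules to $h(c_1',\dots,c_j')$ with $c_i' = \hat{c_i}$, a sound congruence step; and when two flat rules acquire identical left sides $h(\bar{c}) \rew d_i$, merging them and emitting $d_j = d_i$ is exactly the congruence-application rule (v) in the definition of $CC$. Hence every rule of $R'_U$ lies in $CC(R_U \cup \{a_i = b_i\})$, and because $R'_U$ retains (updated forms of) all original rules together with the added equalities, no congruence is lost; this yields soundness in both directions.

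For termination of the iterated loop I would use that there are only finitely many constants: every round that emits a new constant equality strictly decreases the number of distinct constant congruence classes, a quantity bounded below, so the fixpoint is reached in finitely many rounds (mirroring the quadratic bound of the base theorem). Canonicity of the result---distinct flat left sides, all constants in canonical form, right sides fully reduced---is precisely the reduced-canonical output guaranteed by the underlying congruence closure algorithm once the loop stabilizes.

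The step I expect to be the main obstacle is the completeness/confluence half: showing that the only overlaps that can arise for uninterpreted symbols are between flat rules whose left sides become identical after constant substitution, so that resolving exactly those (by the merge-and-propagate loop) restores local confluence. The key point, which I would make explicit, is that free function symbols admit no equational overlaps other than through shared constants; replacing a constant by its canonical representative in a flat left side can only make two previously distinct left sides coincide, and can create no other critical pair. Once this is argued, local confluence at the fixpoint follows, and with termination the system is canonical, so the conclusion follows from the correctness of the underlying congruence closure algorithm.
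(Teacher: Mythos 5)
Your proposal is correct and takes essentially the same route as the paper: the paper's own justification is simply that $\mathbf{update_U}$ is the same merge-and-propagate procedure used inside the congruence closure construction of Section 2.2, so the theorem follows from the correctness of that algorithm as established in \cite{KapurRTA97,BK20}. Your write-up merely makes the soundness, termination, and local-confluence details explicit (with one harmless slip: deriving $d_j = d_i$ from two flat rules with identical left sides is an application of symmetry and transitivity, the congruence rule (v) having already been used to justify replacing the left-side constants by their canonical representatives).
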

\ignore{generating an updated canonical rewrite system corresponding to the modified congruence closure,  reflecting the newly added $j$ constant equalities $a_i = b _i, 1 \le i \le j.$ Below, we use $R_f$ to stand for $R_C \cup R_U$ when there is no need to separate out a constant rewrite system from a nonconstant rewrite system, and use it as the first argument to $update$ and similarly, $R'_f$ to be $R'_C \cup R'_U.$

{\bf Case 1: Combining an AC canonical rewrite system with a canonical rewrite system of uninterpreted symbols when a constant equality from an AC rewrite system is added to a rewrite system for uninterpreted symbols.}
Given a canonical rewrite system representing a congruence closure on uninterpreted ground equations, as $R_C \cup \bigcup_{h \in F_U} R_h$ \cite{KapurRTA97},
revisit how a new constant equality $a = b$ from an AC canonical rewrite system updates it. Canonical forms of $a$ and $b$ using the union-find data structure are checked for equality, and if unequal, they are merged, selecting one of the canonical forms to be that of the merged equivalence classes. As a result, canonical forms of constant arguments in flat rules $h(c_1, \cdots, c_k) \rew d$ in $R_h$s change to reflect new canonical forms to be $h(c'_1, \cdots, c'_k) \rew d'$. If there are two nonconstant rules such that $h(c'_1, \cdots, c'_k) \rew d$ as well as $h(c'_1, \cdots, c'_k) \rew d'$, then a new constant equality $d = d'$ is generated as a result.

This propagation continues but eventually terminates since there are only finitely many constants and hence finitely many constant equalities can possibly be generated.
A reader can thus observe that adding a constant equality to a canonical rewrite system can in general result in recomputing superpositions and critical pairs to reestablish its canonicity. 

The operation {\bf update} captures this:
$(R'_C, R'_U) = update((R_C, R_U), \{ a_i = b_i | 1\le i \le j \})$, generating an updated canonical rewrite system corresponding to the modified congruence closure,  reflecting the newly added $j$ constant equalities $a_i = b _i, 1 \le i \le j.$ Below, we use $R_f$ to stand for $R_C \cup R_U$ when there is no need to separate out a constant rewrite system from a nonconstant rewrite system, and use it as the first argument to $update$ and similarly, $R'_f$ to be $R'_C \cup R'_U.$
}
Another ${\bf update_U}$ could have been defined in which constant rewrite rules, instead of constant equations, are added to $R_U$. 

The second case is ${\bf update_{AC}}$ when a finite set of constant rewrite rules are added to an AC canonical rewrite system $R_f$ consisting of rewrite rules on $f$-monomials. This case is more interesting since adding a constant equality not only changes the left side of the AC rewrite rules but due to this change, unlike in the uninterpreted case $\bf{update_U}$, the orientation of the updated rewrite rules can also change if the left side of an updated rule becomes smaller than its updated right side. This implies that new superpositions and critical pairs need to be considered; thus ${\bf SingleACCompletion}$ must be rerun. Heuristics and data structures can be developed so as to avoid repeating unnecessary work, especially if a pair of rules do not change. New constant rewrite rules generated during the update are kept track since they need to be propagated to other rewrite systems, if needed.

All rules in $R_f$ that get updated due to new constant rules are collected in the variable $S$. They are normalized, oriented and then their superpositions with other rules in $R_f$ as well as among themselves are recomputed (steps 7--18). 

Variable $T$ keeps track of critical pairs generated from superpositions as rules change; it also includes new equations generated from rewrite rules whose left sides are rewritten when a new rule is added; initially $T$ is $S$. 
Variable $NR_C$ is the set of any new constant rules generated which may have to be propagated to other rewrite systems in a combination, leading to changes in them. Otherwise, if $NR_C$ remains empty, that indicates that even though $R^{'}_f$ may be different from $R_f$, no new constant rewrite rules were generated; consequently, other rewrite systems involved in a combination are not affected.
$NF(l, R_C)$ computes a normal form of an $f$-monomial $l$ using the constant rules in $R_C$, and $NF(l, R_f)$ computes a normal form of $l$ using an AC rewrite system $R_f$. $NR_C$ stores any new constant rewrite rules generated.

Updates due to new constant rules are repeated until no new constant rules are generated.

Observe that interreduction in Step 16 does not introduce any new constant rule, since if the left side of a rule already in $R^{'}_f$ gets reduced by the addition of a new rule $l' \rew r'$, then that rule is moved from $R^{'}_f$ to $T$; if only the right side of a rule in $R^{'}_f$ gets reduced, then the original rule is not a constant rule.

The output of ${\bf update_{AC}}$ is an updated canonical AC rewrite system $R'_f$ generated from a canonical AC rewrite system $R_f$ due to constant rules in $R_C$ and the set $NR_C$ of new constant rewrite rules generated.

\ignore{

Let
$R'_f = {\bf update_{AC}}(R_f, \{ a \rew b \})$ be the canonical rewrite system generated from $R_f$ using the constant rule $a \rew b$ as follows: Case (i):
If $R_f$ consists of AC rules $f(c_1, \dots, c_k) \rew f(d_1, \cdots, d_l)$, where $k \ge 2$, where $a$ does not appear in any of the left sides, then 
$R'_f = \{ a \rew b \} \cup \bigcup_{ f(c_1, \dots, c_k) \rew f(d_1, \cdots, d_l) \in R_f} \{  f(c_1, \dots, c_k) \rew \overline{f(d_1, \cdots, d_l)} \}$ where $\overline{f(d_1, \cdots, d_l)}$ is a normal form of 
$f(d_1, \cdots, d_l)$ using other constant rules.

Case (ii): When $a$ also appears in the left sides of rules in $R_f$,
equations 
$\overline{f(c_1, \dots, c_k)}= \overline{f(d_1, \cdots, d_l)}$ are generated from such rules and 
may also result in their reorientation. In either case, if the left sides of the rules change, additional superpositions and critical pairs are generated to reestablish local confluence of the updated AC rewrite rules. This process may lead to additional constant equalities, generating an updated canonical AC rewrite system, in effect having to recompute the canonical rewrite system using updated AC rewrite rules.
}
\ignore{If a rule $f(c_1, \dots, c_k) \rew f(d_1, \cdots, d_l)$ in $R_f$ changes due to new canonical forms of constants appearing in it, let
the result be an equation
$f(c'_1, \dots, c'_k) = f(d'_1, \cdots, d'_l)$. The two sides of the equations may further be normalized leading to a possibly terminating rewrite rule which may result in overlaps among this rule and other rules in $R_f$.}

${\bf update_{AC}}$ can also be designed so that as soon as new constant equality is generated in Step 15 generating a constant rewrite rule $c \rew d$, the rewrite rule is eagerly used by moving to $T$, all those rules in $R'_f$ whose left sides have occurrences of $c$.

Generation of additional constant equalities and additional critical pairs among AC rules terminates because of the Dickson's lemma, as in the case of {\bf SingleACCompletion}, and the fact that there are only finitely many constants which can be possibly made equal.
The correctness of ${\bf update}_{AC}$ is patterned after the correctness of ${\bf SingleACCompletion}.$

\begin{algorithm}[H]
  \DontPrintSemicolon
  \caption{$\bf {update}_{AC}(R_f, R_C, \gg_f)$}
  \label{updateAC}
  \kwInput{$R_f:\text{An AC canonical rewriting system on $f$-monomials}$,    
       $R_C:\text{A canonical rewriting system on constants}$, $\gg_f:\text{a monomial ordering on $f$-monomials extending} \gg_C \text{constants}$
       }
       
  \kwOutput{\text{An updated AC canonical rewrite system}, \text{new implied constant rules}}
  $S \gets \set{l = r  \mid NF(l, R_C) \neq l, l \rightarrow r \in R_f}$\;
  $R_f \gets R_f - \{ l \rew r \mid l = r \in S\}$\;
  $R^{'}_f \gets R_C \cup R_f$\;
  $NR_C \gets \emptyset$\;
  \While{$S \neq \emptyset$}{
    $T \gets \emptyset$\;
    \For{each $l = r \in S$}{
      $S \gets S - \{l = r\}$\;
      \uIf{$NF(l, R^{'}_f) \neq NF(r, R^{'}_f)$}{
        Orient $(NF(l, R^{'}_f), NF(r, R^{'}_f))$ as $l^{'} \rightarrow r^{'} \text{using} \gg_f$\;
        \For{each $l^{''} \rew r^{''} \in R^{'}_f$}{
         Compute a critical pair $(s_1, s_2)$ between $l^{'} \rightarrow r^{'}$ and $l^{''} \rew r^{''}$\;
         \lIf{$NF(s_1, R^{'}_f) \neq NF(s_2, R^{'}_f)$}{$T \gets T \cup \{ 
         NF(s_1, R^{'}_f) = NF(s_2, R^{'}_f)\}$}
        }
    $R^{'}_f \gets R^{'}_f \cup \set{l^{'} \rightarrow r^{'}}$\;
    Interreduce other rules in $R^{'}_f$ using $l^{'} \rightarrow r^{'}$ 
        as in step 4 of ${\bf SingleACCompletion}$\;
        {\bf if} $l^{'} \rightarrow r^{'}$ is a new constant rule {\bf then } $NR_C \gets NR_C \cup \set{l^{'} \rightarrow r^{'}}\;$
    }
    }
    $S \gets T$\;
  }
  \Return $(R^{'}_f, NR_C)$\;
\end{algorithm}

\begin{thm}
$R'_f$ as computed by the above algorithm is a canonical rewrite system for the congruence closure $ACCCC(S_f \cup S_C)$ using a total ordering $\gg_f$ on $f$-monomials extending $\gg_C$ on constants, where $S_f$ is the set of equations obtained from the rewrite rules in $R_f$ and $S_C$ are constant equations obtained from constant rules in $R_C$.
\end{thm}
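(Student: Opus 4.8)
The plan is to recognize that ${\bf update_{AC}}$ is just the completion scheme behind {\bf SingleACCompletion} run on the combined input $R_f \cup R_C$, so that its correctness reduces to three ingredients already available: an invariant on the generated congruence, termination, and the local confluence criterion (Lemma~3.3). First I would fix the invariant that yields soundness and adequacy simultaneously: at the head of every iteration of the while loop, the AC congruence generated by $R'_f$ together with the pending equations in $S$ and $T$ equals $ACCC(S_f \cup S_C)$. This holds at initialization, because $R'_f$ is $R_C$ together with those rules of $R_f$ whose left sides are irreducible by $R_C$, while $S$ carries the remaining rules of $R_f$ as equations; their union presents the same congruence as $R_C \cup R_f$, whose equational closure is exactly $ACCC(S_f \cup S_C)$. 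I would then verify the invariant is preserved by each manipulation: discarding an equation whose two sides share an $R'_f$-normal form removes only a consequence; orienting an equation into $l' \rew r'$ leaves the generated congruence unchanged; each critical pair $(s_1,s_2)$ enqueued in $T$ satisfies $s_1 \eqv^* s_2$ in the current congruence (both arise by rewriting a common superposition), so adding $NF(s_1,R'_f) = NF(s_2,R'_f)$ is conservative; and interreduction either simplifies a rule to an equivalent one or returns it to the worklist as an equation, in both cases preserving the congruence. At termination $S = T = \emptyset$, so the invariant collapses to $ACCC(S_{R'_f}) = ACCC(S_f \cup S_C)$, which is soundness and adequacy together.

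Next I would argue termination of the while loop, which is the technical heart. Two effects can force further iterations: the birth of new $f$-rules and the birth of new constant rules. For the former, a rule is added only when its left side is irreducible by the current $R'_f$, so a divergent run would produce infinitely many $f$-monomial left sides that are pairwise incomparable in the Dickson ordering $\gg_D$, contradicting Dickson's lemma exactly as in Theorem~\ref{termination}; interreduction only ever replaces a rule by a $\gg_f$-smaller one and so cannot sustain an infinite descent. For the latter, every new constant rule in $NR_C$ refines the constant congruence, and since $C$ is finite there are only finitely many constant equalities available, hence only finitely many such refinements. The feature distinguishing ${\bf update_{AC}}$ from ${\bf update_U}$, namely that reducing a constant inside a left side can flip a rule's orientation, is absorbed by the same bound, since a reorientation shrinks a left side in $\gg_f$ while the set of distinct left sides stays a $\gg_D$-antichain, which Dickson's lemma forces to be finite. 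A clean packaging is a lexicographic measure whose outer component tracks the finitely refinable constant congruence and whose inner component is the multiset of current left sides, well-founded by Dickson's lemma.

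Finally I would conclude canonicity. On exit $T$ is empty, which means every critical pair between every pair of rules presently in $R'_f$, computed in the inner loop whenever a rule was added or changed, reduced to a joinable pair; otherwise a nontrivial equation would have been enqueued into $T$. By the AC local confluence criterion of Lemma~3.3, applied uniformly to $f$-rules and to constant rules read as singleton monomials (the case explicitly handled there), $R'_f$ is locally confluent, and since each rule strictly decreases $\gg_f$ the relation terminates, so $R'_f$ is confluent and hence canonical; interreduction in Step~16 moreover leaves it reduced. The main obstacle, as flagged above, is the termination proof rather than soundness: one must exclude a divergent interleaving in which constant-rule propagation keeps reopening AC-completion and reorienting rules. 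The lexicographic measure, the finiteness of $C$, and Dickson's lemma are exactly what close this gap, and they are the very same mechanism that made {\bf SingleACCompletion} terminate, so Theorems~\ref{correctness} and~\ref{unique} transfer to this setting as claimed.
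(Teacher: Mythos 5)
Your proof is correct and takes essentially the same approach as the paper's: the paper's (sketched) proof likewise reduces $\mathbf{update_{AC}}$ to rerunning the completion machinery of \textbf{SingleACCompletion} on the rules whose left sides change --- classical completion-style correctness, local confluence via the critical-pair lemma, and termination from Dickson's lemma together with the finiteness of the set of constants that can become equal. Your write-up simply makes explicit the congruence-preservation invariant and the lexicographic termination measure that the paper leaves implicit.
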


\begin{proof}{\it Sketch:}
If the left sides of rules in $R_f$ do not change due to $R_C$, then $R_f$ is still a canonical rewrite system leading to $R_f \cup R_C$ being a canonical rewrite system for the congruence closure $ACCCC(S_f \cup S_C)$. 

Otherwise, all rewrite rules in $R_f$ whose left sides change due to $R_C$ are processed as equations and normalized, and critical pairs among them as well as with the rest of the rules in $R_f$ are generated as in {\bf SingleACCompletion}. Completion is redone on all the modified rules, generating a new $R'_f$ which may generate new constant equalities, leading to further propagation; this is kept track using a flag $new.$ The process is repeated until $R'_f$ is canonical and no new constant rules are generated.
\end{proof}

\ignore{Many constant equalities can be added to update an AC canonical rewrite system all together instead of adding them one by one. In general, adding a constant equality to a canonical rewrite system can lead to recomputing its canonical rewrite system because of reorientation of equations, possibly new superpositions leading to new critical pairs and thus new rules, and new constant equalities. Let $R'_f$ be an updated AC canonical rewrite system, consisting of constant rules and AC nonconstant rewrite rules relating $f$-monomials as the result of ${\bf update}_{AC}$.}

\ignore{

\begin{thm}
Given a ground canonical rewrite systems $R_U$ using a total ordering $\grt$ on ground terms in its signature
such that constants are lower than nonconstant terms, extending $R_U$ by adding a finite set of new constant equalities $\{ a_i = b_i | 1 \le i \le j\}$ generates a new canonical rewrite system $R'_f = update(R_U, \{ a_i = b_i | 1 \le i \le j\})$ such that
$ACCC(R_f) \subseteq ACCC(R'_U) = ACCC(R_U \cup \{ a_i = b_i | 1 \le i \le j \}) $. 
\end{thm}

\begin{thm}
Given a ground canonical rewrite systems $R_f$ using a total ordering $\grt$ on ground terms in its signature
such that constants are lower than nonconstant terms, extending $R_f$ by adding a finite set of new constant equalities $\{ a_i = b_i | 1 \le i \le j\}$ generates a new canonical rewrite system $R'_f = update(R_f, \{ a_i = b_i | 1 \le i \le j\})$ such that
$ACCC(R_f) \subseteq ACCC(R'_f) = ACCC(R_f \cup \{ a_i = b_i | 1 \le i \le j \}) $. Further, $R'_f = R_f \cup \{ a_i \rew b_i | 1 \le i \le j \},$ if $a_i \grt  b_i$ and $a_i$'s occurs only in the right hand sides of the rules in $R_f.$
\end{thm}

\begin{proof}{\it Sketch.}

If $a_i$ does not occur in the left side of any rule in $R_U$, then the left sides of rules in $R_U$ do not change, thus not causing any additional superpositions; the canonicity of $R_U$ is preserved.  Occurrences of $a_i$'s in the right hand side of rules in $R_U$ if any, do not affect its confluence; however, $R'_U$ is no more reduced so its right hand sides may have to be further reduced. 

If $a_i$ occurs in the left side of some rule in $R_U$, then the left side of that rule is changed which may lead to its reorientation; furthermore new overlaps because of the modified left side and/or reorientation can result in additional critical pairs, affecting confluence. Consequently, completion must be redone on all the modified rules, generating a new $R'_f$ which may generate new constant equalities, leading to further propagation.\footnote{See an example in the next section illustrating such propagation of constant equalities.} The result thus follows from the results in the previous section.
\end{proof}

}

As the reader would notice, ${\bf update}$ operations are used in the combination algorithms in Sections 7 and 8.

\section{Computing Congruence Closure with Multiple AC symbols}

The extension of the algorithms for computing congruence
closure with a single AC symbol in Section 3 to multiple AC symbols is
straightforward when constants are shared.
Given a total ordering $\gg_C$ on constants, for each AC symbol $f$, define a total well-founded admissible ordering $\gg_f$ on $f$-monomials extending $\gg_C$. It is not necessary for nonconstant $f$-monomials to be comparable with nonconstant $g$-monomials for $f \neq g,$ thus providing considerable flexibility in choosing termination orderings for computing AC ground congruence closure.

We will abuse the terminology and continue to call the AC congruence closure of a finite set $S$ of ground equations with multiple AC symbols to be $ACCC(S)$. Whereas the general definition of semantic congruence closure of ground equations for uninterpreted and interpreted symbols is given in Section 2.3, $ACCC(S)$ can be obtained as follows: 
if $f(M_1) = f(M_2) $ and $f(N_1) = f(N_2)$ in $ACCC(S)$, where $M_1, M_2, N_1, N_2$ could be singleton constants, then clearly $f(M_1 \cup N_1) = f(M_2 \cup N_2)$ is in $ACCC(S)$. Further,
for every AC symbol $g \neq f$, $g(d_1, e_1) = g(d_2, e_2) \in ACCC(S)$, where $d_1, d_2, e_1, e_2$ are new constants introduced for $f(M_1), f(M_2), f(N_1),  f(N_2)$, respectively, assuming they are already not constants; this purifies mixed AC terms.

It is possible to generate a combined reduced canonical rewrite system for multiple AC symbols in many different ways. One possibility is to independently generate a reduced canonical rewrite system for each $f \in AC$ using the {\bf SingleACCompletion} algorithm discussed above
from a finite set $S_f$ of equations on $f$-monomials using a well-founded ordering $\gg_f$ on $f$-monomials; combine the resulting reduced canonical rewrite systems for each $f$ by propagating constant rewrite rules among them with ${\bf update_{AC}}$ until no new constant rewrite rules are generated.

Another possibility is to successively use 
{\bf SingleACCompletion} to generate a canonical rewrite system for each $f \in AC$; if any new constant rewrite rules are generated, propagate them first to every reduced canonical rewrite system generated so far, to obtain any new reduced canonical rewrite systems due to the propagation. This process is repeated until no new constant rewrite rules are generated, resulting in a combined reduced canonical rewrite system for all AC symbols considered so far. Only after that, $S_f$ for an AC symbol $f$ not considered so far is processed. 
Below, the first option is employed as it is simpler and easier to understand as well as prove correct.

The algorithm below is divided into two parts:
Steps 4--9 constitute the first part; using $\bf SingleACCompletion$, a reduced canonical rewrite system $R_f$ for each AC symbol is generated after $S_f$ has been normalized using constant rules in $R_C$. Any new constant rules generated are accumulated in the variable $NR_C$. The new rules in $NR_C$ are then used to update $R_C$.

If $R_C$ changes implying that new constant rules must be propagated to each $R_f$, then the second part of the algorithm (steps 12--21) is executed. The propagation of new constant rewrite rules to an AC rewrite system $R_f$ is computed using the ${\bf update}_{AC}$ operation presented in the previous section, possibly generating an updated canonical rewrite system $R^{'}_f$ with perhaps additional constant rules, which must be further propagated. Variable $NNR_C$ keeps track of the new constant rules during this phase. This process is repeated until $NNR_C$ becomes empty, implying no new constant rules get generated during updates.

\ignore{In any given iteration of the outer loop (step 4) of the algorithm, the invariant is that (i) there is a canonical rewrite system $R_C$ of constants, and (ii) for a subset of $f \in AC$ such that equations on $f$-monomials on canonical constants have already been oriented into rewrite rules using $\gg_f$, a canonical rewrite system $R_f$ has already been generated. For an AC symbol $f$ for which a canonical rewrite system still needs to be computed, 
such that $S_f$, the set of equations on $f$-monomials, an $S_f$ is picked for generating a canonical rewrite system from $S_f$ using {\bf SingleACCompletion}. If new constant rules get generated, they are propagated to the rewrite system $R_C$ of constants and then to each of the AC canonical rewrite systems already generated. The propagation of a constant rewrite rule to an AC rewrite system is computed using the ${\bf update}_{AC}$ operation presented in the previous section, possibly generating an updated canonical rewrite system with perhaps additional constant rules, which must be further propagated. The flag variable $new$ keeps track of whether any new constant rules are generated while updating previously generated AC canonical rewrite systems in $R^{'}_f.$
}

\ignore{
Let $R_f$ be a reduced canonical rewrite system generated from ground equations in $f$ using the algorithms in Section 3.\footnote{An AC operator $f$ could have additional semantic properties as well for which algorithms were discussed in Section 4.}Let $R_{AC} = \{ R_f | f \in F_{AC} \}$ be a finite set of such reduced canonical rewrite systems for different AC symbols, computed independently from each other. Since these canonical rewrite systems in general share constants, equalities on shared constants in different canonical rewrite systems must be propagated, leading to additional superpositions and critical pair computations, as discussed below.

Let $R_C$ be the set of constant rules in each $R_f, f \in F_{AC},$ and 
$R_{AC} = R_C \cup \{ R'_f | f \in AC\}$, where and each $R'_f$ has no constant rule, i.e., it consists only of nonconstant rules of the form $f(A) \rew f(B),$ where at least one $|B| > 1, |A| > 1.$

$R_C$ is first normalized to give a canonical rewrite system on constants using the union-find data structure; each $R'_f$ is then reduced to be on the canonical constants. Henceforth, without any loss of generality, it is assumed that $R_{AC} = R_C \cup \{ R'_f | f \in AC\}$ where $R_C$ is a reduced canonical rewrite system on constants and for each $f \in F_{AC}$, $R'_f$ is a (reduced) canonical rewrite system of nonconstant rules in $f$ on constants in canonical forms.

A subtle issue is when two distinct $R_f$ and $R_g$, $f \neq g$, have rewrite rules with the same constant on their left sides, i.e.,
there is a rule $c \rew f(M) \in R_f$ and $c \rew g(N) \in R_g, f \neq g$ with $|M| > 1, |N| > 1.$
This implies that a constant $c$ is congruent to both nonconstant $f$-monomial as well as $g$-monomial. The above can happen 
if in $\gg_f$ and $\gg_g$, $c \gg_f f(M)$ and $c \gg_g g(N)$, respectively. If an additional requirement on term orderings that nonconstant terms are bigger than constants is imposed, then this issue does not arise since there can never be a rule with a constant on its left side and a nonconstant term on its right hand side. This restriction will be imposed below  on orderings $\gg_g$ for simplicity. 
There are however orderings that do not satisfy the above condition, an example being a pure lexicographic ordering in which a constant is bigger than any nonconstant term without that constant, insofar as all constants in the term are smaller. In Section 8, it will be shown how this restriction can be relaxed since it is not a limitation of the proposed approach.

To generate a combined reduced canonical rewrite system $R_S$ representing $ACCC(S)$ from $R_{AC}$, $R_f$s are combined using $R_C$ as follows. It is easy to see that
a $f$-rule in $R'_f$ and a $g$-rule in $R'_g, g \neq f$ do not interact since their outermost symbols are different.}

\ignore{
For every constant rule $c \rew d \in R_C$ such that $c$ appears in the left side of some rule, say $f(A) \rew f(B)$ in some $R_f$ with $c \in A$, the rule is replaced by $f(A-\{c\} \cup \{d\})$ which can result in the violation of the canonicity property of $R_f$ since the modified rule can superpose with other rules in $R_f$. Staring with ${R^0}_f = R_f, {R^0}_C = R_C,$ let ${S^i}_f= normalize({R^i}_f {R^i}_C)$ be the equations generated after rewriting rules  in ${R^i}_f$ using constant rules in ${R^i}_C$; if no left side of a rule in ${R^i}_f$ changes, then ${S^i}_f$ can be oriented into ${R^{i+1}}_f$ without affecting its canonicity. Otherwise, canonicity of rewrite rules corresponding to
${S^i}_f$ needs to be restored by considering superpositions and checking for joinability, possibly leading to adding new rules including derivation of new constant rules. Any additional constant rules generated in this iteration are removed from ${R^{i+1}}_f$ and added instead to ${R^i}_C$ to generate a canonical ${R^{i+1}}_C,$ which may rewrite the left sides of rules in ${R^{i+1}}_f$. When ${R^j}_C$ does not change and each ${R^j}_f$ is canonical after normalized using ${R^j}_C$, the propagation of constant equalities leading to possible violation and then restoration of canonicity of ${R^j}_f$ terminates, which is guaranteed since there are only finitely many constant and no new constants are being generated. }

\ignore{This can be done by also introducing a new constant, say $v$; wlog, if the canonical form
of $c$ is $f(M)$, then $c \rew v$ is included in $R_C$ with $c \rew g(N) \in R_g$ replaced by $g(N) \rew v$ and local confluence by considering superpositions between the new rule and other rules in $R_g.$}

\ignore{

The input to the {\bf Combination Algorithm} for generating congruence closure from ground equations with multiple AC symbols is a finite set $S_C$ of constant equations and for each AC symbol $f$, a finite set $S_f$ of equations on $f$-monomials in which at least one monomial is nonconstant, and a total ordering on $\gg_f$ on $f$-monomials extending a total ordering on $\grt$ constants in $C$.
}
\ignore{
\vspace{2mm}
\hspace*{-5mm}
{\bf Combination Algorithm ($S = S_C ~\cup ~\bigcup_{f \in F_{AC}} S_f, ~~\{ \gg_f | f \in f_{AC} \}$):}

\begin{enumerate}
 
 \item Generate a reduced canonical rewrite system $R_C$ from $S_C$ using the total ordering $\grt$ on constants in $C$; equivalently, as in step 1 of the algorithm for the single AC symbol, a union-find data structure, can be employed.
 
\item Normalize each $S_f$ using $R_c$, resulting in equations on $f$-monomials on canonical constants. Abusing the notation, continue to call the result to be $S_f$. 
  
 \item Run the AC congruence closure algorithm on each $S_f$  from Section 3 using the ordering $\gg_f$, generating a reduced canonical rewrite system $R_{f}$ for the congruence closure $ACCC(S_{f})$. Below rules on shared constants are propagated among different canonical rewrite systems, possibly generating additional superpositions and critical pairs, modifying canonical rewrite systems.
 
 \item Let $R^0_C := R_C,$ for each 
 $f \in F_AC$, $R^0_f:= R_f, f \in F_{AC}.$ 
 
 \item Let $NC := \bigcup_{f\in AC} \{ c_j \rew d_j \in R^{i}_f\}$; these are the new constant rules generated in various $R^i_f$s which need propagation to other canonical rewrite systems in which they occur.
 
 \item While $NC \neq \emptyset$
 
 \begin{itemize}
     \item $R^{i+1}_C := update(R^{i}_C, NC)$. This step updates the union-find data structure possibly generating new canonical forms for constants. 
     \item For each $f\in AC$, if any $R^i_f$ has a rule with the left side whose constant arguments changed canonical forms in $R^{i+1}_C,$ $R^{i+1}_f := update (R^{i}_f, R^{i+1}_C).$ Any AC canonical rewrite system with rules whose left sides changed due to new canonical forms of constants must be checked for confluence and joinability of any new critical pairs, if any.
     \item $i := i+1.$
     \item Let $NC = \bigcup_{f\in AC} \{ c_j \rew d_j \in R^{i}_f\}$. Any new constant equalities generated as a result of reestablishing canonicity must be propagated.
     
    Endwhile.
     
\end{itemize}

  \ignore{\item {\bf Shared constant with canonical forms in different $AC$ symbols:}
  
  If two different canonical rewrite subsystems $R_f, R_g, f \neq g$ have identical constants as the left sides,
  i.e. if there is a rule $c \rew f(M) \in R_f$ and $c \rew g(N) \in R_g, f \neq g$,  introduce a new constant $u$, make $c \grt u$ extending $\gg_f$ on $f$-monomials with $u$ making $f(M) \gg_f u$ and $g(N) \gg_g u$, 
and add a rule $c \rew u \in R_C$, replace
rules $c \rew f(M) \in R_f$ by $f(M) \rew u$ and $c \rew g(N) \in R_g$ by $g(N) \rew u$. Since $u$ is a new symbol, orderings on $f$-monomials and $g$-monomials are extended to satisfy these requirements.

Replacing $c \rew f(M)$ by  $f(M) \rew u$ can result in additional superpositions with other rules in $R_f$ and possibly new rules using $u$; this applies to  $R_g$ as well.
After ensuring local-confluence of all new superpositions and adding new rules if needed, reduced canonical rewrite systems are generated for each $R_f$.

\item If no new constant equalities are generated and the set of rewrite systems $R_f$'s do not satisfy the shared constant condition, the algorithm terminates.
  
   \item Output the combined rewrite system consisting of a reduced canonical $R_C$ on constants and a reduced canonical $R_f$ for each $f \in F_{AC}$. These canonical Rewrite systems do not share
  a constant symbol appearing on the left side of any rule. }

\end{enumerate}

}

\ignore{

  \begin{algorithm}[H]
  \DontPrintSemicolon
  \caption{Combination Algorithm for multiple AC Symbols}
  \label{combination_prof_kapur}
  \kwInput{$S = S_C \cup \bigcup_{f \in F_{AC}}{S_f}, \set{\gg_f \mid f \in
      F_{AC}}$} 
  \kwOutput{Canonical Rewriting System $\mathcal{R} = R_C \cup \bigcup_{f \in AC} R_f$}
  $\mathcal{T} \gets S$\;
  $\mathcal{R} \gets \emptyset$\;
  Let $R_C$ be the reduced canonical rewrite system generated from $S_C$ using $\gg_C$\;
  \While{$\mathcal{T} \neq \emptyset$}{
     $\mathcal{T} \gets \mathcal{T} - S_f$\;
    Replace constants in $S_f$ by their canonical forms generated by $R_C$\;
    $NR_C \gets \emptyset$\;
    $R_f \gets singleACCompletion(S_f, \gg_f)$\;
    $NR_C \gets \{ a \rew b \in R_f \mid a, b \in C$\;
    \While{$NR_C \neq \emptyset$}{
      $new \gets false$\;
      $\mathcal{R^{'}} \gets \mathcal{R}$\;
      $\mathcal{R} \gets \emptyset$\;
      \While{$\mathcal{R^{'}} \neq \emptyset$}{
        Remove $A$ from $\mathcal{R^{'}}$\;
        $(A^{'}, new^{'}) \gets updateAC(A, R_C)$\;
        $new \gets new \lor new^{'}$\;
        Let $R_C$ be the updated rewrite system of constant rules if $new'$\;
        $\mathcal{R} \gets \mathcal{R} \cup \set{A^{'}}$\;
      }
 $\mathcal{R} \gets \mathcal{R} \cup \set{R_f}$\;
    }
  }
  
  \Return $R_C \cup \bigcup_{R_f\in \mathcal{R}} R_f $\;
  
\end{algorithm}
}

  \begin{algorithm}[H]
  \DontPrintSemicolon
  \caption{{\bf CombinationMultAC}($S, \gg$)}
  \label{combination_MultAC}
\kwInput{$S = S_C \cup \bigcup_{f \in F_{AC}}{S_f}$, a finite set of constant and ground equations over multiple AC symbols;
$\gg = \set{ \gg_C, \gg_f \mid f \in
      F_{AC}}$, a family of monomial orderings extending a total ordering $\gg_C$ on constants}
      
\kwOutput{$R$, a canonical rewriting system representing AC congruence closure over multiple AC symbols} 
  
  Let $R_C$ be the reduced canonical rewrite system generated from $S_C$ using $\gg_C$\;
  $NR_C \gets \emptyset$\;
  $new \gets false$\;
  \For{each $f \in F_{AC}$}{
   Replace constants in $S_f$ by their canonical forms using $R_C$\;
   Abusing the notation, call the normalized $S_f$ also as $S_f$\;
   $R_f \gets {\bf SingleACCompletion}(S_f, ~\gg_f)$\; 
  $NR_C = NR_C \cup \{ l \rew r \in R_f \mid l, r \in C\}$\;
  }
  $R^{'}_C \gets {\bf update_C}(R_C, NR_C)$\;
  \lIf{$R^{'}_C \neq R_C$}{$new \gets true$}
  \While{$new$}{
     $NNR_C \gets \emptyset$\;
    \For{each $R_f$ with a rule $l \rew r$ such that $l$ is not in canonical form wrt $R_C$}{
        $(R^{'}_f, NR_C) \gets {\bf update_{AC}}(R_f, R_C, \gg_f)$\;
        $R_f \gets R^{'}_f$\;
        $NNR_C \gets NNR_C \cup NR_C$\;
    
    }
    \lIf{$NNR_C = \emptyset$} {$new \gets false$}
    \lElse{$NR_C \gets NNR_C$ ~~;
    $R^{'}_C \gets {\bf update_C}(R_C, NR_C)$}
    }
 
 \Return $R_S = R_C \cup \bigcup_{R_f\in \mathcal{R}} R_f $\;
  
\end{algorithm}

This repeated propagation of constant rules and subsequent recomputing of canonical AC rewrite systems eventually terminate because (i) recomputing of an updated canonical rewrite system terminates due to the Dickson's lemma, and (ii) there are only finitely many constants and no new constants are introduced during the combination; steps 11--21 in the algorithm compute this fixed point.

The termination and correctness of the combination algorithm follows from the termination and correctness of the algorithm for a single AC symbol, the fact there are finitely many new constant equalities that can be added, and ${\bf update_{AC}}$ on each $R_f$ terminates when new constant equalities are added.
The result of the combination algorithm is a finite canonical rewrite system $R_S =
R_{C} \cup \bigcup_{f \in F_{AC}} R_f$,
a disjoint union of sets of canonical rewrite rules on $f$-monomials for each AC symbol $f$, along with a canonical rewrite system $R_C$ consisting of constant rules such that the left sides of rules
are distinct. 

Given that {\bf SingleACCompletion} and ${\bf update}_{AC}$ generate reduced canonical rewrite systems, $R_S$ is 
unique for a given set of ground equations $S$ in the extended signature assuming a family of total admissible orderings
on $f$-monomials for every $f \in F_{AC}$, extending a total ordering on constants; this assumes a fixed choice of new constants standing for the same set of subterms during purification and flattening. 

\ignore{In the original signature, however,
$R_S$ is neither unique nor even locally confluent (or canonical)  if it includes a shared constant having multiple canonical forms in two different AC symbols as illustrated in the above example. It then becomes necessary to compare monomials in different $AC$ symbols. Assuming a union-find data structure is used to keep canonical forms of constants up to date, the right hand sides of AC rules in each $R_f$ are also reduced.The theorem below follows from the corresponding results for a canonical rewrite system for each AC symbol $f$ and the properties of $\bf{update}$ particularly, it generates from a canonical rewrite system, an updated canonical rewrite system representing the extended congruence closure that includes
the implied constant equations generated from the updates.}

\begin{thm}
\ignore{Given a set of well-founded admissible orderings $\gg_f$ on $f$-monomials for each $f \in F_{AC}$ which extend a total ordering on constants in $C$, let $R_{S} = R_C \cup \bigcup_{f \in F_{AC}} R_{f}$ be a
finite rewrite system, where 
$R_C$ is a unique reduced canonical rewrite system of constant rules, and
(ii) disjoint unique reduced canonical rewrite subsystems $R_f$ computed using

 $u \rew g(v'_1, \cdots, v'_{k'})$, 
 then $R_{S}$ is reduced and canonical.}
 
 $R_S$ as defined above is a  canonical rewrite system associated with $ACCC(S)$
 in the extended signature for
 a given family $\{ \gg_f | f \in f_{AC} \}$ of admissible orderings on $f$-monomials extending a total ordering $\gg_C$ on constants,
 such that $ACCC(R_S)$, with rewrite rules in $R_S$ viewed as equations, and restricted to the original signature is $ACCC(S)$.
\end{thm}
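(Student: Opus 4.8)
The plan is to prove two things about the output $R_S = R_C \cup \bigcup_{f \in F_{AC}} R_f$: that it is canonical (terminating and locally confluent), and that the congruence closure it induces, restricted to the original signature, coincides with $ACCC(S)$. First I would establish termination of $\rew_{R_S}$ by combining the individual well-founded orderings. Each constant rule in $R_C$ strictly decreases constants under $\gg_C$, and each rule in $R_f$ strictly decreases $f$-monomials under $\gg_f$; since every $\gg_f$ extends $\gg_C$, and rules belonging to distinct AC symbols have nonconstant left sides whose outermost symbols differ, no rewrite chain can interleave to produce an infinite descent. Hence $\rew_{R_S}$ terminates.

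Next I would establish local confluence by a case analysis on the pairs of rules that can generate a critical pair. (i) Two rules in the same $R_f$: their critical pair is joinable because {\bf SingleACCompletion}, together with the ${\bf update_{AC}}$ corrections, returns a locally confluent $R_f$, by Theorem~\ref{correctness}. (ii) Two constant rules in $R_C$: joinable, since $R_C$ is the canonical rewrite system produced by ${\bf update_C}$. (iii) A rule of $R_f$ and a rule of $R_g$ with $f \neq g$: their left sides are $f$- and $g$-monomials with distinct outermost symbols, so they cannot overlap and no critical pair arises. (iv) A constant rule $c \rew d \in R_C$ and an $f$-rule $f(A) \rew f(B) \in R_f$: an overlap could arise only if $c \in A$, i.e.\ if $f(A)$ is not in canonical form with respect to $R_C$. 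But the fixed-point condition of the outer loop of {\bf CombinationMultAC} (the flag $new$ becoming false, equivalently $NNR_C = \emptyset$) is exactly that every left side of every $f$-rule is already in canonical form with respect to $R_C$; hence no such overlap survives at termination. All critical pairs being joinable, and $\rew_{R_S}$ being terminating, $R_S$ is canonical, and the interreduction steps inside {\bf SingleACCompletion} and ${\bf update_{AC}}$ make it reduced.

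For the congruence-closure equivalence, I would argue by tracking an invariant through the algorithm. Flattening and purification (Section~2.4) preserve the congruence closure once the extended signature is restricted back to the original one, as established in \cite{BK20}. Each call to {\bf SingleACCompletion} preserves the equational theory by Theorem~\ref{correctness} (namely $ACCC(S_f) = ACCC(S_{R_f})$), and every ${\bf update_{AC}}$ and ${\bf update_C}$ step preserves it by their respective correctness results. Since the combination phase only ever propagates constant equalities that are already implied by some component, the relation $ACCC(R_S)$ over the extended signature equals $ACCC(S)$ over the extended signature; restricting to the original signature then yields $ACCC(S)$, as required.

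The hard part will be case (iv) of the confluence analysis: verifying that the fixed-point condition of {\bf CombinationMultAC} genuinely eliminates every interaction between constant rules and $f$-rules. This in turn requires that each application of ${\bf update_{AC}}$ both restores the canonicity of the affected $R_f$ and faithfully reports in $NR_C$ every newly implied constant equality, so that propagation does not silently leave a reducible left side behind. Closing this argument rests on the termination of the propagation loop, which I would justify by Dickson's lemma---bounding the number of pairwise $\gg_D$-noncomparable $f$-monomial left sides---together with the finiteness of the constant set, since no new constants are introduced during combination and hence only finitely many constant equalities can ever be generated.
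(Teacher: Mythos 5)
Your proposal is correct and follows essentially the same route as the paper: the paper's own (much terser) proof appeals to exactly the three facts you elaborate — each $R_f$ is canonical under $\gg_f$, left sides across different symbols cannot overlap, and the fixed point of constant propagation leaves every left side in $R_C$-normal form — with termination of the propagation justified, as you do, by Dickson's lemma and the finiteness of the constant set. Your case analysis (i)--(iv) and the invariant argument for $ACCC(R_S) = ACCC(S)$ are simply a fuller write-up of what the paper leaves as a sketch.
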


The proof of the theorem follows from the fact that (i) each $R_f$
is canonical using $\gg_f$, (ii) the left sides
of all rules are distinct, (iii) these rewrite systems are normalized using $R_C$. A stronger result follows if {\bf SingleACCompletion} generates a reduced unique canonical rewrite system for an AC rewrite system using $\gg_f$ and ${\bf update}_{AC}$ keeps the rules in the rewrite system in reduced form, thus maintaining a unique reduced AC canonical rewrite system after updates by new constant rules.

{\bf Example 7:}
To illustrate, consider the following set of ground equations after purification, involving $+$ and $*$: $\{~1. ~a*a*b*b = a,~ 2. ~a*b*b*b = b,~3.~ a*a*a*b = a, ~4.~ a + b = b, ~5. ~b + b = a \}.$ A subsystem of $*$-equation is: $\{ 1, 2, 3 \},$ and a subsystem of $+$-equations is $\{4, 5\}.$ Using a total ordering $ b > a$ on constants, and degree-ordering on $+$ and $*$ monomials, the equations are oriented from left to right.

A canonical rewrite system for $+$-equations is generated using the algorithm {\bf SingleACCompletion}: There is a critical pair between rules 4 and 5 with the superposition $ a + b + b$ producing the critical pair:
$(a + a, b + b)$, giving a new rule: $6. ~a+a\rew a$; The set $\{4, 5, 6\}$ is a reduced canonical rewrite system on $+$-rules. 

Similarly, {\bf SingleACCompletion} is run on the rules corresponding to $1, 2, 3$;  a new rule $7.~a * b \rew a* a$ is generated from $1, 3$ (the superposition is $a*a*a*b*b$ leading to the critical pair $(a*a, a*b)$), reducing all other rules, giving a canonical
rewrite system  on $*$-terms:$\{ 8. ~a*a*a*a \rew a, ~9.~ b\rew a\}.$ 

The two canonical rewrite system are combined using the constant rewrite rule $b \rew a$. It reduces the canonical rewrite system for
$+$, collapsing all $+$ rules to $10.~a + a \rew a.$ The combined
canonical rewrite system for the two subsystems is: $\{ a*a*a*a \rew a,~ a + a \rew a, ~b \rew a\}$. 

Below, variations of Example 7 are considered to illustrate other subtle aspects of the combination algorithm. If the set of ground equations in Example 7 include additional ground equations in $+$-subsystem: $\{ a + c = d, b + d = c\},$ with $d > c > b > a$, further propagation is caused. Given that rule 9 reduces $b$ to $a$ from the above steps, the additional equations when oriented from left to right, would lead to a critical pair between $a + c \rew d, a+d\rew c,$ generating new rule $d + d \rew c + c.$ This illustrates that the propagation of new constant rules can lead to additional superpositions among rules which were previously canonical.

{\bf Example 8:} Consider another variation on Example 7 in which the $+$-subsystem is different: $\{c + c = b + b, c + b = c + c\}.$ Using the lexicographic ordering induced by $b > c > a$ on $+$ monomials, the canonical rewrite system is: $\{b + b \rew c + c, ~ c + b \rew c  + c.\}$. The canonical rewrite system on $*$ monomials is still $\{8. ~a* a * a * a \rew a,~9.~ b \rew a\}.$ With the propagation of $b \rew a$ generated from the canonical rewrite system for $*$-monomials to the canonical rewrite system for $+$-monomials, the orientation of the rewrite rules in it changes to $c + c \rew a + a, c + c \rew c + a$, giving a new rule: $ c + a \rew a + a.$ The resulting combined reduced rewrite system is: $\{a * a * a * a \rew a, c + c \rew a + a, c + a \rew a + a, b \rew a\}.$

\subsection{A Nondeterministic Version}

The above combination algorithm can also be presented as an inference system; inference rules can be non-deterministically applied, dovetailing  various steps in generating AC rules in different $R_f$'s without generating a full canonical rewrite system for any $R_f$ one at a time. Constant rules can be eagerly propagated and applied everywhere as soon as they are generated, making the combination algorithm more efficient in practice, since the complexity of an AC congruence closure algorithm as well as the size of the associated canonical rewrite systems depend upon the number of constants used in the algorithm: fewer the number of constant, fewer the steps. A possible disadvantage of this view is that it becomes necessary to give correctness and termination arguments using complex proof and termination orderings (i.e., union of all $\gg_f$ in this case) as a single object; consequently, proofs cannot be factored out across different AC symbols, sacrificing modularity.

\ignore{Let us revisit the case when a shared constant appears as the left side of rules in two different reduced canonical AC subsystems. Consider a simple example: $\{ f(a, c) = b, f(a, b) = u, g(a, b) = u, g(a, c) = b \}.$ Using a lexicographic ordering on $f$-terms and $g$-terms induced by a total ordering $u > c > b > a$, $f$-equations are oriented as:
$\{ f(a, c) \rew b, u \rew f(a, b) \} $; similarly, $g$-equations are oriented as:
$\{ g(a, c) \rew b, u \rew g(a, b) \} $. Both of these subsystems are also reduced canonical and they share $u$. Using the method discussed above, a new symbol $v$ is introduced with $ u > v$; if we make $c > b > a > v$, we get:
$\{ u \rew v \}, \{f(a, c) \rew b, f(a, b) \rew v \}$, and
$\{g(a, c) \rew b, g(a, b) \rew v \}.$ $R_f$ and $R_g$ are no longer canonical as there are additional superpositions because of reorientation of rules corresponding to $f(a, b) = u, g(a, b) = u. $ The new canonical rewrite system for $f$-rules is:
$\{f(a, c) \rew b, f(a, b) \rew v, f(c, v) \rew f(b, b) \};$ constant $u$ does not appear in this subsystem; similarly for $g$-rules, the canonical rewrite system is:
$\{g(a, c) \rew b, g(a, b) \rew v, g(c, v) \rew g(b, b) \}.$ The combined canonical rewrite system representing the congruence closure is:
$\{f(a, c) \rew b, f(a, b) \rew v, f(c, v) \rew f(b, b), 
g(a, c) \rew b, g(a, b) \rew v, g(c, v) \rew g(b, b),  u \rew v \}$. Both $f(a, b)$ and $g(a, b)$ have the same canonical form, $v$; however $f(a, b, c)$ and $g(a, b, c)$ do not have the same canonical form and hence are not in the congruence closure. }

\section{Congruence Closure with Multiple AC and Uninterpreted symbols}

The algorithm presented in the previous section to compute AC congruence closure with multiple AC symbols is combined with the congruence closure algorithm for 
uninterpreted symbols in Section 2.2. Similar to combining canonical rewrite systems for various AC congruence closures in the previous section,
the key operation is to update various canonical rewrite systems when equalities on shared constants are generated and propagated. Since the algorithm for generating a rewrite system for congruence closure over uninterpreted symbols is of the least computational complexity in contrast to the complexity of AC congruence closure algorithms, it is always preferred to be run before running other AC congruence closure algorithms. The combination algorithm is thus centered around the congruence closure algorithm for the uninterpreted symbols.

\ignore{

To  simplify the presentation, the general algorithm is divided into two stages. In the first stage,

\begin{enumerate}
    
\item Run the union-find constant closure algorithm on $S_C.$ The output is a forest of directed trees with constants as their roots, serving as canonical forms of the equivalence classes of constants in each tree.

\item Run Kapur's congruence closure algorithm on flat rules with the output being for each uninterpreted symbol $h \in S_U$, $R_U = \{ h(c^i_1, \cdots, c^i_k) \rew d_i | 1 \le i \le j\}$ with distinct left sides on roots. In case of two nonconstant flat rules with identical left sides on canonical constants, an equality on constants on their right hand sides is generated which is propagated into updating the Union-find data structure and hence the $R_U$ possibly leading to further propagation of constant equalities.
\end{enumerate}

The result of the above is a union-find data structure giving a canonical constant for each equivalence class of constants and for  each $h \in F_U,$ flat rules with distinct left sides.
}

\ignore{In the combination algorithm below, a canonical rewrite system $R_C$ on constant equations is first generated using $Canonical_C$ from a finite set $S_C$ on constant equations using a total ordering $\gg_C$ on constants. A canonical rewrite system  $R_U$ of flat rules on uninterpreted symbols is then generated using $Canonical_U$ from a finite set of flat equations $S_U$ using $R_C$ and $\gg_U$ in which every nonconstant uninterpreted symbol is greater than every constant.}

In the combination algorithm below, $CC$ stands for the congruence closure algorithm on constant equations $S_C$ and uninterpreted ground equations $S_U$, using a total ordering $\gg_C$ on constants and $\gg_U$, an ordering in which uninterpreted function symbols are bigger than constants in $C$. $R_U$, the output of $CC$, is a reduced ground canonical rewrite system representing the congruence closure of $S_C \cup S_U$. $R_C$ stands for the constant rules in $R_U$.

For every $f \in AC$, a finite set $S_f$ of equations on $f$-monomials is successively picked; constants in the equations are replaced by their canonical forms using $R_C$ and the rewrite rules are made from the resulting equations using a total admissible ordering $\gg_f$ on $f$-monomials.
{\bf SingleACCompletion} algorithm is used to generate a canonical AC rewrite system $R_f$ from $S_f$ using a monomial ordering $\gg_f$. Variable $NR_C$ accumulates the constant rules generated for each $f$ in its respective reduced canonical rewrite system $R_f$ (steps 5--10).

The function ${\bf update_C}$
is used to update the canonical constant rewrite system $R_C$, using any new constant rules in $NR_C$.
If the update results in a new canonical constant rewrite system, then ${\bf update_U}$ updates the ground canonical rewrite system for the congruence closure of uninterpreted equations, possibly leading to new constant equalities and associated rewrite rules. A flag variable $new$ keeps track of whether any new constant rules have been generated which must be propagated.

Steps 14--26 perform the propagation of new constant rewrite rules to a reduced canonical rewrite system $R_f$ for each AC symbol $f$ by invoking ${\bf update_{AC}}$.
If new constant rewrite rules are generated, they are first propagated to $R_U$ (as in step 13), checking for any new constant rewrite rules, which again must be propagated back to canonical AC rewrite systems in which those constants appear.
This propagation among various AC canonical rewrite systems and a rewrite system for congruence closure over uninterpreted symbols, continues until no additional constant rewrite rules are generated.

\ignore{A flag variable $new$ keeps track of whether new constant rewrite rules are generated in which case, these rules are first propagated to $R_C$ and then to $R_U$ which may further generate new constant rewrite rules; once the propagation stabilizes, resulting in $R_C \cup R_U$ which is reduced and canonical (steps 11), the new constant rules then must be successively propagated to each reduced canonical AC rewrite system already generated (step 14). Whenever any additional constant rules are generated by ${\bf update}_{AC}$ as tracked by $new$, they are immediately propagated to $R_C$ and $R_U$, updating them (steps 14 and 17). New constant rewrite rules are always eagerly applied to $R_C, R_U$ and once no more constant rules are generated there, they are used to update already generated canonical rewrite systems $R_f$'s for AC symbols in $F_{AC}$.}

\ignore{ Assuming $R_C, R_U$, as constructed above, one way to combine congruence closures over  multiple AC symbols with uninterpreted symbols is to do it incrementally by generating a canonical rewrite system for each $R_f$ separately and propagating any constant equalities thus generated into $R_C, R_U$, updating the union-find data structure on constants and flattened rules in uninterpreted symbols. Then, a canonical rewrite system for each of the remaining AC symbols are considered one by one, possibly generating equalities on shared constants which are propagated in the second stage.}

\ignore{An alternative is to compute canonical rewrite systems for AC congruence closure for each $f \in F_{AC}$ as in Section 6, collect the constant rules from each $R_f, f\in F_{AC}$ and update $R_C, R_U$ using them, with propagating new constant equalities back to each $R_f, f \in F_{AC}$ as is done in the algorithm in Section 6. This algorithm is presented below.}

\ignore{
pick an AC symbol, say $f \in F_{AC}$, compute a canonical rewrite  system $R_f$ for ground equations over $f$ using the algorithm presented in Section 4, (ii) use constant rules in $R_f$, if any, to update $R_C \cup R_U$ which may generate additional constant equalities which are used to update $R_f$. Do this for every $R_f, f \in F_{AC}$. Upon termination, a canonical rewrite system representing a combination of congruence closure of all $R_f$'s considered so far with the uninterpreted ground equations is generated, which is then combined with the next $R_f$

for  repeatedly using the {\bf update} function introduced in Section 5 for combining canonical rewrite systems using the following loop.

\begin{enumerate}
    
\item For each $f \in F_{AC},$
generate a canonical rewrite system  $R_f$ from $S_f$ after it has been normalized using $R_C \cup R_U$.

\item For all constant rules $C_f := \{ c_i \rew d_i | 1 \le i \le j_f \} in R_f$, $(R^{i+1}_C, R^{i+1}_U) := update(R^i_C \cup R^i_U, C_f)$.

\item For all new constant rules in $R^{i+1}_C$ generated in the previous step, $R^{i+1}_f := 
update(R^i_f, R^{i+1}_C - R^{i}_C).$

\end{enumerate}

Given a total ordering $\grt$ on $C$, let $\gg_U = \gg \cup \{ h \gg_U c, h \in F_U, c \in C\}$. For each AC symbol $f$, define a total admissible ordering $\gg_f$ on $f$-monomials
extending $\grt$ on $C$.\\

\noindent
{\bf General Congruence Closure($S = S_C \cup S_U \cup \cup_{f \in F_{AC}} S_f,  {\gg_U} \cup \{ \gg_f | f \in F_{AC}\}$)}

The combination is again straightforward given that the output of the congruence closure algorithm on uninterpreted symbols is a unique reduced canonical rewrite system consisting of flat rules of the form $h(a_1, \cdots, a_k) \rew b$ and constant rules $a \rew b.$
There is no interaction between flat rules and other rules generated from AC monomials. When new constant equalities are generated, they can reduce flat rules, making the left sides of some flat rules equal,  resulting in additional equalities which are handled in the same way as constant equalities generated during completion on equations on  $f$-monomials. All other steps are the same as in the case of the congruence closure algorithm in the previous section for multiple AC symbols. The output of this general algorithm shares the properties of the output of the congruence closure over multiple AC symbols.

}

\ignore{Reduced canonical rewrite systems are separately generated from $S_C$, the set of constant equations, from $S_U,$ the set of flat equations with uninterpreted symbols, and from $S_f, f \in F_{AC}$, equations on $f$-monomials.

Without any loss of generality, assume reduced canonical rewrite systems $R^0_C, R^0_U$ and $R^0_f, f \in F_{AC}$ are already generated, where $R^0_C$ is the canonical set of constant rules with distinct left sides, $R^0_U$ is the canonical set of nonconstant flat rules in uninterpreted symbols with distinct left sides, and for each $f \in F_AC$, $R^0_f$ is the canonical set of nonconstant rules on $f$-monomials.

$R_C$ be a canonical rewrite system of all the constant rules from $R_U$ and each $R_f$; let $R_h$ be a canonical rewrite rules consisting of flat rules $h(a_1, \cdots, a_k) \rew b$ for each uninterpreted symbol $h$; for each $f \in F_{AC}$, let $R'_f$ be the nonconstant $f$ rules after moving constant rules from $R_f$ to $R_C$; abusing the notation and without any loss of generality, $R_f$ will be used instead of $R'_f$. 
}

\ignore{It is preferable to generate $R_C$ first and then generate $R_U$ to check if any implied constant equalities are generated. The result is
a reduced canonical rewrite system $R_C \cup R_U$ for the congruence closure of $S_C \cup S_U$ over uninterpreted symbols. 
$R_C, R_U$ can be computed very fast in $O(n*log(n))$ steps, whereas
computing $R_f$ from a set of $f$-monomial equations is very expensive, so it always pays off to deduce constant equalities from $R_C$ and $R_U$. During the computation of generating canonical rewrite systems for $R_f$ from $S_f, f \in F_{AC},$ if a new constant equality is implied and generated, it is eagerly used to update $R_C \cup R_U$ to generate any new implied equalities, and used to update monomial equations and monomial rewrite systems constructed so far.}

\ignore{In the following steps, reduced canonical rewrite systems $R_f$'s are generated from each $S_f$ which is already reduced by $R_C$. This is the algorithm for computing the congruence closure for multiple AC symbols.
If during the generation of a reduced canonical rewrite system of any $R_f$, an implied constant equality gets generated, it is propagated by adding to $R_C$,  changing left sides of the rules, if any, in which those constants appear, leading to generation of additional superpositions and critical pairs.

The algorithm from the previous subsection for computing reduced canonical rewrite systems from each $S_f$ is applied, looking for new constant equalities generated and checking shared constant condition. As discussed in the previous subsection, in both cases, local confluence of $R_f$'s may have to restored for checking additional superpositions, leading to possibly new rules.}

\begin{algorithm}[H]
  \DontPrintSemicolon
 \caption{{\bf GeneralCongruenceClosure}($S, ~\gg$)}
\label{GeneralCC}

 \kwInput{$S = S_C \cup S_U \cup \bigcup_{f \in F_{AC}}{S_f}$, a finite set of equations on constant, terms with uninterpreted symbols, and $AC$ monomials, $\gg =  \set{ \gg_C, \gg_U }
    \cup \set{\gg_f \mid f \in F_{AC}}$
 a family of orderings on constants, uninterpreted symbols,and $f$ monomials extending a total ordering on constants with uninterpreted symbols bigger than constants} 
    
  \kwOutput{A canonical rewriting system $R_S = R_C \cup R_U \cup \bigcup_{f \in AC} R_f$, representing general congruence closure}

  $R_U \gets CC(S_C \cup S_U, \gg_U \cup \gg_C)$\;
  $R_C \gets \{ l \rew r \in R_U \mid l, r\in C\}$\;
 $NR_C \gets \emptyset$\;
  $new \gets false$\;
  \For{each $f \in F_AC$}{
   Replace constants in $S_f$ by their canonical forms using $R_C$\;
   Abusing the notation, call the normalized $S_f$ also as $S_f$\;
   $R_f \gets {\bf SingleACCompletion}(S_f, ~\gg_f)$\; 
  $NR_C = NR_C \cup \{ l \rew r \in R_f \mid l, r \in C\}$\;
  }
$R^{'}_C \gets {\bf update_C}(R_C, NR_C)$\;
  \lIf{$R^{'}_C \neq R_C$}{$new \gets true$\;
  $R_U \gets {\bf update_U}(R_U, NR_C)$}\
 \While{$new$}{
     $NNR_C \gets \emptyset$\;
     $new \gets false$\;
    \For{each $R_f$ with a rule $l \rew r$ such that $l$ is not in canonical form wrt $R_C$}{
        $(R^{'}_f, NR_C) \gets {\bf update_{AC}}(R_f, R_C, ~\gg_f)$\;
        $R_f \gets R^{'}_f$\;
        $NNR_C \gets NNR_C \cup NR_C$\;
    
    }
    \uIf{$NNR_C \neq \emptyset$}{$NR_C \gets NNR_C$\;
    $R_U\gets {\bf update_U}(R_U, R_C \cup NR_C)$\;
    $new \gets true$\;
    }
 }
  
\Return $R_U \cup \bigcup_{R_f\in R} R_f $\;
\end{algorithm}

Let $\gg_U = \grt \cup \{ h \gg c, h \in F_U, c \in C\}$. For each AC symbol $f$, define a total admissible ordering $\gg_f$ on $f$-monomials
extending $\grt$ on $C$. The input to the {\bf General Congruence Closure} algorithm is a union of finite set $S_C$ of constant equations, finite set $S_U$ of nonconstant flattened equations in uninterpreted symbols in $U$, and for each AC symbol $f$, a finite set $S_f$ of $f$-monomial equations in which at least one $f$-monomial is not a constant, and total orderings
$\gg_U$ and $\gg_f$ on $f$-monomials extending a total ordering $\gg_C$ on constants in $C$.

\ignore{

\vspace{2mm}
\hspace*{-5mm}
{\bf General Congruence Closure($S = S_C \cup S_U \cup \cup_{f \in F_{AC}} S_f,  {\gg_U} \cup \{ \gg_f | f \in F_{AC}\}$)}

{\bf Stage I:}

\begin{enumerate}
 
 \item Generate a reduced canonical rewrite system $R_C$ from $S_C$ using the total ordering $\grt$ on $C$; equivalently, as in step 1 of the algorithm for the single AC symbol, a union-find data structure  can be employed.
 
 \item Flattened equations in $S_U$ are oriented into rewrite rules. For every uninterpreted function symbol $h \in F_U$, there are rewrite rules in $R_h = \{ h(c^i_1, \cdots, c^i_k) \rew d^i | 1 \le i \le j \}$ on roots (canonical forms of constants) of the Union-Find data structure. Any two rules with identical left sides generate an implied constant equality equating their respective right hand sides, which is included in $R_C$, thus updating union-find data structure and hence, each $R_h$. At the end of this step, the result is a union-find data structure on canonical constants and $R_h$ for each uninterpreted symbol $h \in S_U$ with rules having distinct left sides.
 
 \ignore{By a canonical rewrite system for congruence closure over uninterpreted symbols, $R_C$ is assumed to be the Union-find data structure maintaining canonical forms for equivalence classes over constants with flat rules for each uninterpreted symbol $h \in F_{U}.$}
As various canonical rewrite systems are being combined below, a canonical rewrite system of congruence closure over uninterpreted symbols as the union-find data structure on constants, and distinct flat rules for each uninterpreted symbol $h \in F_{U}$ is maintained as an invariant.

\item Normalize each $S_f$ using $R_C$, resulting in equations on $f$-monomials on canonical constants; after normalization, abusing the notation, the resulting equations on $f$-monomials are continued to be called $S_f.$
  
 \item Run the AC congruence closure algorithm on each $S_f$  from Section 3, using the ordering $\gg_f$, generating a reduced canonical rewrite system $R_{f}$ for the congruence closure $ACCC(S_{f})$.
\end{enumerate}

At the conclusion of stage I, 
 let $R^0_C := R_C,$ for each $h \in F_U$, $R^0_h = R_h,$ for each 
 $f \in F_{AC}$, $R^0_f := R_f, f \in F_{AC}.$ In the second stage, equalities on shared constants are propagated using {\bf update} operation. As discussed above,  if rules generated from shared constant equalities only change the right side of other rules in canonical rewrite systems, then changes are minimal. Otherwise, if the left sides of rules in canonical rewrite systems change, then modified rules may need to be reoriented and additional superpositions may be generated, leading to additional new rules including new constant equalities.

 \begin{enumerate}
 
 \item $NC_U = \emptyset.$ $NC_{AC} := \bigcup_{f\in AC} \{ c_j \rew d_j \in R^{i}_f \}$. $NC_U$ includes the new constant equalities generated in the uninterpreted case; $NC_{AC}$ includes the new constant equalities generated in the canonical AC rewrite systems.
 
 \item While $NC_{AC} \neq \emptyset$
 
 \begin{enumerate}
  \item  While $NC_U \neq \emptyset$
     \item $R^{i+1}_C := update(R^{i}_C, NC_{AC} \cup NC_U)$.
     \item For each $h \in F_U,$ update $R_h$ to reflect changes in the union-find data structure for $R^{i+1}_C.$ Let $NC_U = \bigcup_{h \in F_U} 
     \{ c_j \rew d_j \in R^{i+1}_h\}$. 
     
    Endwhile. 
    
    This ensures a canonical rewrite system for congruence closure on uninterpreted equations.
    \end{enumerate}
     
     \item for each $f\in F_{AC}$, if any $R_f$ has a rule with the left side whose constant arguments changed canonical forms in $R^{i+1}_C,$ $R^{i+1}_f := update (R^{i}_f, R^{i+1}_C).$ Changes in the left sides of rules in $R_f$ may lead to reorientation and additional superpositions; as a result, the confluence of the modified rewrite system must be restored.
     \item $i := i+1.$
     \item Let $NC_{AC} = \bigcup_{f\in F_{AC}} \{ c_j \rew d_j \in R^{i}_f\}$. Any new constant equalities generated while restoring confluence of AC rewrite systems must be propagated.
    
   Endwhile.

\end{enumerate}
}
\ignore{
\begin{algorithm}[H]
  \DontPrintSemicolon
  \caption{General Combination Algorithm}
  \label{combination_prof_kapur_section_8}
  \kwInput{$S = S_C \cup S_U \cup \bigcup_{f \in F_{AC}}{S_f}, \set{ \gg_U }
    \cup \set{\gg_f \mid f \in F_{AC}}$} 
  \kwOutput{A Canonical Rewriting System $R_S = R_C \cup R_U \cup \bigcup_{f \in AC} R_f$}
  
  $R_C \gets Canonical_C(S_C, \gg_C)$\;
  $R_U \gets Canonical_U(S_U, R_C, \gg_U)$\;
  $\mathcal{T} \gets \bigcup_{f \in F_{AC}}{S_f}$\;
  $\mathcal{R} \gets \emptyset$\;
  \While{$\mathcal{T} \neq \emptyset$}{
    Pick and remove $S_f$ from $\mathcal{T}$\;
    Replace constants in $S_f$ by their canonical forms generated by $R_C$\;
    $(R_f, new^{'}) \gets singleACCompletion(S_f, \gg_f)$\;
    Let $R_C$ be the updated rewrite system of constant rules if $new'$\;
    $\mathcal{R} \gets \mathcal{R} \cup \set{R_f}$\;
    \uIf{$new^{'} = true$}{
      $(R_U, new) \gets {\bf update}_U(R_U, R_C)$\;
      Let $R_C$ be the updated rewrite system of constant rules if $new$\;
      \While{$new^{'} = true$ or $new = true$}{
        \For{each $R_{f}$ in $R$ and $new^{'} = false$}{
          $(R_f, new^{'}) \gets {\bf update}_{AC}(R_f, R_C)$\;
          Let $R_C$ be the updated rewrite system of constant rules if $new'$\;
        }
        \uIf{$new^{'} = true$} {$(R_U, new) \gets {\bf update}_U(R_U, R_C)$\;
        Let $R_C$ be the updated rewrite system of constant rules if $new'$\;
        
        }
      }

    }
  }
   \Return $R_C \cup R_U \bigcup_{R_f\in R} R_f $\;
\end{algorithm}
}

The termination and correctness proofs of the above algorithm are factored: they follow
from
the termination and correctness of the algorithm for a single AC symbol, ${\bf update_U}$.
${\bf update_{AC}}$ and the fact there are finitely many new constant equalities that can be added. Further the outputs in each case are reduced canonical rewrite systems.

The result is 
a modular combination, whose termination and correctness is established in terms of the termination and correctness of its various  components: (i) the termination and correctness of algorithms for generating canonical rewrite systems from ground equations in a single AC symbol, for each AC symbol in $F_{AC}$, and their combination together with each other, and (ii) the termination and correctness of congruence closure over uninterpreted symbols and its combination with the AC congruence closure for multiple AC symbols, and (iii) there are only finitely many constants being shared and finitely many possibly equalities on the shared constants.

The result of the above algorithm is a finite reduced canonical rewrite system from $S$, $R_S=
R_{C} \cup R_{U} \cup \bigcup_{f \in F_{AC}} R_{f}$,
a disjoint union of sets of reduced canonical rewrite rules on $f$-monomials for each AC symbol $f$, along with a canonical rewrite system $R_C$ consisting of constant rules and $R_U$ consisting of flat rules on uninterpreted symbols, such that the left sides of rules
are distinct and the right sides are reduced. $R_S$ is 
unique in the extended signature assuming a family of total admissible orderings
on $f$-monomials for every $f \in F_{AC}$ and making uninterpreted symbols $h \in U$ extending a total ordering on constant if $R_C, R_U$ are reduced, {\bf SingleACCompletion} generates a reduced AC canonical rewrite system and ${\bf update}_U$ as well as ${\bf update}_{AC}$ ensure that the canonical rewrite system generated by them are also reduced.

\ignore{
\begin{enumerate}

    \item From $S_C \cup S_U,$ generate a reduced canonical rewrite system $R_C \cup R_U$ representing the congruence closure over uninterpreted symbols such that $R_C$ is the canonical reduced rewrite system only constants generated using $\grt$ on $C$ and $R_U = \cup_{h \in U} R_h,$ where $R_h$ consists of nonconstant rules of the form $h(c_1, \cdots, c_k) \rew c$ with distinct left sides obtained after normalizing using $R_C$.
    
    \item Normalize $S_f, f \in F_{AC}$ using $R_C$; abusing the notation, call the normalized result also to be $S_f$.
    
     \item Run the AC congruence closure for multiple AC symbols from the previous subsection, on the output from the previous step, using $\gg_f$ for each $f \in F_{AC}$.
     
     Let $R_f$ be a reduced canonical rewrite system generated from $S_f$ using $\gg_f, f \in F_{AC}.$ Remove all constant rules from each $R_f$ after including them in $R_C$. Interreduce $R_C$ to generate a canonical rewrite system on constants; abusing the notation, this is also called $R_C$.

    \item Let ${R_C}^0 = R_C, {R_h}^0 = R_h, h \in F_U, {R_f}^0 = R_f, f \in F_{AC}.$ 
  
 \item For any ${R_h}^i$ such that the left side of a rule in it is normalized using ${R_C}^i,$ generate new equalities on constants if two left sides in the normalized ${R_h}^i$ become identical, i.e., ${S_h}^{i} = update({R_h}^i, {R_C}^i)$ has two equations with identical nonconstant flat $h$ terms; in that case, generate ${R_h}^{i+1}$ from ${S_h}^{i} = update({R_h}^i, {R_C}^i)$ and
 include new constant rule into ${R_C}^{i+1}$.
 
 For any $f \in F_{AC}$ such that the left side of a rule in ${R_f}^i$ is normalized by a rule in ${R_C}^i$, generate ${R_f}^{i+1}$ from ${S_f}^{i} = update({R_f}^i, {R_C}^i),$ where if for some rule $f(A) \rew f(B) \in {R_f}^i$, if a constant $c \in A$ is rewritten using ${R_C}^i$, then ${S_f}^i = \{ normalform(f(A), {R_C}^i) = normalform(f(B), {R_C}^i) | f(A) \rew f(B) \in {R_f}^i\}.$
 
 \item Orient equations in ${S_f}^i$ to rewrite rules and restore their canonicity generating ${R_f}^{i+1}.$ If ${R_f}^{i+1}$ includes a constant rule $c \rew d,$ then remove it from ${R_f}^{i+1}$ and add it to ${R_C}^i$ after orienting it.  Generate ${R_C}^{i+1}$ from ${R_C}^i.$
 
 \item If no new constant rule is generated in the previous step and the canonicity of each ${R_f}^{i+1}$ is restored, terminate, generating a canonical rewrite system $R = {R_C}^{i+1} \cup {R_f}^{i+1},$ representing $ACCC(S)$.
  
   \item Output the combined rewrite system consisting of a reduced canonical $R_C$ on constants, a reduced canonical system of nonconstant $h$-rules for every uninterpreted symbol $h \in U,$ and a reduced canonical $R_f$ consisting of $f$-rules for each $f \in F_{AC}$. These canonical Rewrite systems do not share a constant symbol appearing on the left side of any rule.

\end{enumerate}

The termination and correctness of the algorithm follows from
the termination and correctness of each of the component subalgorithms for constants, uninterpreted symbol, and a single AC symbol, the  and the fact there are finitely many new constant equalities that can be added during the propagation. 
}

\ignore{As a result, the termination of the general algorithm follows  from the termination of the algorithms for each of the components--$S_U$ and $S_f$, for every AC symbol $f$. The correctness proof of the general algorithm is also structured
in a modular fashion using the correctness proofs of the components $S_U$ and $S_f$, $f \in F_{AC}$.}

\begin{thm}
Given  a set $S$ of ground equations, the above algorithm generates a reduced canonical rewrite system $R_S$ on the extended signature such that $R_S = R_C \cup R_U \cup \bigcup_{f\in F_{AC}} R_f$, 
where each of $R_C, R_U, R_f$ is a reduced canonical rewrite system using a set of total admissible monomial orderings $\gg_f$ on $f$-monomials, which extend orderings $\gg_U$ on uninterpreted symbols and $\grt$ on constants as defined above, and $ACCC(R_S),$ with rules in $R_S$ viewed as equations, is $ACCC(S)$ when restricted on the original signature.
Further, for this given set of orderings, $\gg_U$ and $\{ \gg_f ~|~f \in F_{AC}\}$, $R_S$ is unique for $S$ in the extended signature. 
\end{thm}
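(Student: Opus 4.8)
The plan is to establish the four assertions of the theorem—termination, canonicity, correctness modulo restriction to the original signature, and uniqueness—modularly, reducing each to the property already proved for the components \textbf{SingleACCompletion}, \textbf{update}$_U$, and \textbf{update}$_{AC}$, together with the finiteness of the constant pool. For termination, the initial call to $CC$ terminates by the cited result of \cite{KapurRTA97}, each call to \textbf{SingleACCompletion} terminates by Theorem~\ref{termination} (Dickson's lemma), and every invocation of \textbf{update}$_U$ and \textbf{update}$_{AC}$ terminates by the termination arguments accompanying those operations. The remaining obligation is to bound the outer \textbf{while} loop, and here the key observation is that purification and flattening introduce only finitely many new constants and that no fresh constant is ever created during the combination phase; hence $C$ is fixed and finite. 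Every non-halting iteration contributes at least one genuinely new constant rule, which strictly merges two distinct congruence classes of constants; since the number of such classes is bounded below and decreases with each merge, the loop must terminate.

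For canonicity I would argue that at termination $R_S = R_C \cup R_U \cup \bigcup_f R_f$ is reduced and canonical by combining two facts. First, each subsystem is itself reduced and canonical for its own ordering: $R_C \cup R_U$ by \cite{KapurRTA97} and the correctness theorem for \textbf{update}$_U$, and each $R_f$ by Theorems~\ref{correctness} and~\ref{unique} as maintained by the correctness theorem for \textbf{update}$_{AC}$; moreover the stopping condition guarantees every left-hand side is in canonical form with respect to $R_C$. Second, the outermost symbols of the three kinds of rules are pairwise disjoint—constant left sides, uninterpreted flat left sides $h(c_1, \cdots, c_k)$, and $f$-monomial left sides for each $f$—so the only possible interaction between rules of different subsystems is through shared constants, which the propagation loop has already saturated. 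Local confluence of the union therefore reduces to the local confluence of each subsystem; termination of the combined rewrite relation follows from the convention assumed in this setting that nonconstant terms exceed constants, together with the subterm property of each admissible $\gg_f$ and of $\gg_U$, which assemble into a single well-founded reduction order on the mixed signature.

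For correctness, each primitive operation preserves the relevant closure: orienting equations, \textbf{SingleACCompletion} (Theorem~\ref{correctness}), \textbf{update}$_U$, and \textbf{update}$_{AC}$ each return a system with the same congruence closure as their input, augmented only by propagated constant equalities that are themselves consequences in $ACCC(S)$. Composing these yields $ACCC(R_S) = ACCC$ of the purified, flattened input on the extended signature, and restricting to the original signature while invoking the soundness of purification and flattening \cite{BK20} gives $ACCC(R_S) = ACCC(S)$ there. The substance here—and what I expect to be the main obstacle—is the \emph{combination lemma}: that exchanging only equalities on shared constants suffices to recover the full mixed-signature closure. I would prove it by induction on a derivation of an arbitrary congruence $u = v \in ACCC(S)$, showing that each inference combining terms of different outermost symbols factors through the new constants abstracting the relevant subterms (this is exactly the purification clause stated at the start of Section~7), so the derivation decomposes into single-symbol subderivations linked by constant equalities, all of which the propagation fixpoint has produced in $R_C$.

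Finally, for uniqueness, I would observe that the final constant congruence—the restriction of $ACCC(S)$ to equations on constants—is determined by $S$ alone and is independent of the nondeterministic scheduling, since the propagation computes a confluent fixpoint (every generated constant equality is a consequence, and, by the combination lemma, all consequences are eventually generated). Given this canonical constant congruence, each $S_f$ and the uninterpreted part are fixed, so $R_f$ is the unique reduced canonical system for $\gg_f$ by Theorem~\ref{unique}, and $R_C \cup R_U$ is the unique reduced canonical system for $\gg_U$ by the uniqueness result for the uninterpreted case \cite{KapurRTA97}. Their disjoint union $R_S$ is therefore unique for $S$ in the extended signature, subject to the fixed choice of new constants for subterms.
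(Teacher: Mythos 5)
Your proof is correct and follows essentially the same modular route as the paper's: the paper establishes this theorem by reducing termination, canonicity, correctness, and uniqueness to the already-proved properties of $CC$, \textbf{SingleACCompletion}, ${\bf update_U}$, and ${\bf update_{AC}}$, together with the finiteness of the shared constant pool, and declares the rest routine. Your elaboration—the bound on the outer loop via merging of constant congruence classes, the disjointness of outermost symbols on left sides so that subsystems interact only through shared constants, and the combination lemma showing that propagating constant equalities alone recovers the full mixed-signature closure—supplies exactly the details the paper leaves implicit, without departing from its strategy.
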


The proof follows from the respective proofs of each of the component algorithm and is routine.

As stated in the previous section on the combination algorithm for multiple AC symbols, a nondeterministic version can be formulated in which various inference steps are  interleaved along with
any new constant equalities generated to reduce constants appearing in other $R_f$ eagerly, instead of having to complete an AC rewrite system for each AC symbol. Correctness and termination proofs are however more complicated requiring complex orderings.

\ignore{Even though there are new constants introduced in the generation of a reduced canonical rewrite system if two different $R_f, R_g$
have the same constant appearing on the left sides of rules with $f$-monomials and $g$-monomials on their right side, the number of choices for canonical forms is finite given that there are only finite many AC symbols and finitely many constants.}

If any AC symbol $f$ has additional properties such as idempotency, nilpotency, identity, cancelativity, or being a group operation, and various combinations (see discussion in Sections 4 and 5), the corresponding reduced canonical $R_f$ generated from the specialized extensions of {\bf SingleACCompletion} algorithms discussed in Sections 4 and 5 can be used and the above results extend.

\section{Pure Lexicographic Orderings: Constants bigger than Nonconstant Terms}

So far, it is required that constant symbols are smaller in monomial orderings than nonconstant terms for any function symbol, be it AC or nonAC. This restriction can be relaxed. In case of uninterpreted (non-AC) symbols, this is achieved by introducing a new constant symbol as already done in Kapur's congruence closure \cite{KapurRTA97}. A similar approach works in case of nonconstant terms with multiple AC symbols.

Using a monomial ordering in which a constant is bigger than a nonconstant monomial, the above combination algorithm for generating a canonical rewrite system in the presence of multiple AC symbols could produce at least two canonical rewrite subsystems for two different AC symbols $f$ and $g$ such that $R_f$ has a rule
$c \rew f(A)$ and $R_g$ has another rule
$c \rew g(B)$, with $A$ and $B$ of size $> 1$ (implying that the right sides are nonconstants). Then, $c$ can possibly have either a canonical form with $f$ as its outermost symbol or $g$ as its outermost symbol.
Call this case to be that of {\em a shared constant possibly having multiple distinct normal forms in different $AC$ symbols}. This case must also be considered along with the propagation of constant rewrite rules.

One way to address the above case is by introducing a new constant $u$, making $c \grt u$; $\gg_f$ and $\gg_g$ are extended to include monomials in which $u$ appears with the constraint that $f(A) \gg_f u$
and $g(B) \gg_g u$.
Add a new rule $c \rew u \in R_C$, replace the
rule $c \rew f(A) \in R_f$ by $f(A) \rew u$ and $c \rew g(B) \in R_g$ by $g(B) \rew u$; this way, $c, f(A), g(B)$ all have a normal form $u$.
The replacements of $c \rew f(A)$ to $f(A) \rew u$ in $R_f$ and similarly of $c \rew g(B)$  to $g(B) \rew u$ in $R_g$ may violate the local confluence of $R_f$ and $R_g$ since reorientation of these equations may result in superpositions with the left sides of other rules.
New superpositions are generated in $R_f$ as well as $R_g$, possibly leading to new rules including new constant rules. After local confluence is restored, the result is new $R'_f$ and $R'_g$ with
$u$ being the canonical form of $c$.

To illustrate, consider $S = \{ c = a + b, c = a*b \}$ with AC $+, *$. For an ordering $c \grt b \grt a$ with both $\gg_+$ and $\gg_*$ being pure lexicographic, $R_+ = \{c \rew a + b\}, R_* = \{ c \rew a * b\}.$  These reduced canonical rewrite systems have a shared constant $c$ with two different normal forms implying that $R_+ \cup R_*$ is not canonical.
Introduce a new constant $u$ with $c \grt b \grt a \grt u$ (other orderings including $c \grt u \grt b \grt a$ or $c \grt u \grt a \grt b$ are also possible); make $R_S = \{ a + b \rew u, a * b \rew u, c \rew u\}$, which is reduced canonical in the extended signature; however, it is not even locally confluent when expressed in the original signature. A choice about whether the canonical form of $c$ is an $f$-monomial or a $g$-monomial is not made as a part of this algorithm since nonconstant $f$-monomials and $g$-monomials are noncomparable.

Other cases, for instance, the one in which $c \rew f(A)$ in $R_f$ but $g(B) \rew c$ in $R_g$, can be considered in a similar way by introducing a new constant $u$, to preserve modularity by not including mixed rules on $f$ monomials and $g$ monomials.

Only finitely many new constants need to be introduced in this construction; their number depends upon the number of canonical forms a constant can have with different outermost AC symbols. \ignore{Any new constant, say $u'$, in the next iteration is introduced if there is another constant $c' \rew f(A'), c' \rew g(B')$, where $A', B'$ may or may not have the new constant $u$ and further,  $u'$ is made smaller than $u$ when introducing $c' \rew u', f(A') \rew u', f(B') \rew u'$. Irrespective $c'$ is congruent to an $f$-term as well as $g$-term in the original signature obtained after replacing $u$, if needed, by $f(A), g(B),$ respectively. This means that only new constants are introduced for the constants in the original signature to express relationship between $f$ and $g$ terms in the original signature and these constants. And, there are only finitely many constants in the original signature.}

\vspace*{-5mm}
\section{Examples}
The proposed algorithms are illustrated using several examples which are mostly picked
from the above cited literature with the goal of not only to show how the proposed
algorithms work, but also contrast them with the algorithms reported in the literature.

{\bf Example 9:} Consider an example from \cite{BTVJAR03}: $S = \{ f(a, c) =
a, ~ f(c, g(f(b, c))) = b, ~ g(f(b, c)) =f(b, c) \}$ with a single AC symbol $f$ and one uninterpreted symbol $g$. 

Mixed term $f(c,g(f(b, c)))$ is purified by introducing
new symbols $u_1$ for $f(b, c)$ and $u_2$ for $g(u_1)$; thus
$\{ f(a, c) = a, ~  f(b, c) = u_1,~g(u_1) = u_2,~ f(c, u_2) = b, 
u_2 = u_1 \}$. (i)  $S_C = \{ u_2 = u_1 \}$, (ii) $S_U = \{ g(u_1) = u_2 \}$, and (iii) $S_f = \{ f(a, c) = a, f(b, c) = u_1, f(c, u_2) = b\}$. 

Different total orderings on constants are used to illustrate how
different canonical forms can be generated.
Consider a total ordering $f \gg g \gg a \gg b \gg u_2 \gg u_1$. $R_C = \{1.~  u_2 \rew u_1 \}$
normalizes the uninterpreted equation and it is oriented as: $R_U = \{
2.~ g(u_1) \rew u_1\}$. 

To generate a reduced canonical rewrite system for AC $f$-terms,
an admissible ordering on $f$-terms must be chosen. The degree-lexicographic ordering on monomials will be used for simplicity: $f(M_1) \gg_f f(M_2)$ iff 
(i) $|M_1| > |M_2|$ or (ii) $|M_1|= |M_2| \land M_1-M_2$ includes a constant bigger than every 
constant in $M_2-M_1$.
$f$-equations are normalized using rules 1 and 2, and are oriented as: $\{3.~  f(a, c)
\rew a,~ 4.~  f(c, u_1) \rew b, ~5. ~f(b, c) \rew u_1 \}$. 

Applying the {\bf SingleACCompletion} algorithm,
the superposition between rules $3, 5$ is $f(a, b, c)$ with the
critical pair: $(f(a, b), f(a, u_1))$, leading to a rewrite rule
$6.~ f(a, b) \rew f(a, u_1)$; the superposition between the rules $3, 4$ is
$f(a, c, u_1)$ with the critical pair: $(f(a, u_1), f(a, b))$
which is trivial by rule 6. The superposition between the rules $4, 5$ is $f(b, c,
u_1)$ with the critical pair: $(f(b, b), f(u_1, u_1))$ giving: $7. ~f(b, b) \rew f(u_1, u_1)$.
The rewrite system $R_f = \{3, 4, 5, 6, 7\}$ is a reduced canonical rewrite
system for $S_f$. $R_S = \{1, 2 \} \cup R_f$ is a reduced canonical rewrite system
associated with the AC congruence closure of the input and serves as its
decision procedure.

In the original signature, the above rewrite system $R_S$ is:
$\{ g(f(b, c))) \rew f(b, c), ~f(a, c) \rew a, ~ f(b, c, c) \rew b, ~ f(a, b) \rew f(a, b, c), 
~ f(b, b) \rew f(b, b, c, c)\}$ with 5 being trivial. The
reader would observe this rewrite system is locally confluent but not terminating.

\ignore{To try the above example with a different ordering,
reverse the ordering on constants $u_1, u_2$ to make $a \grt b \grt u_1
\grt u_2$. This leads to the canonical rewrite system on uninterpreted symbols: $\{ u_1
\rew u_2, g(u_2) \rew u_2 \}$. After running completion on the AC rewrite system 
$\{f(a, c) \rew a, f(c, u_2) \rew b, f(b, c) \rew u_2 \}$ gives
$\{f(a, c) \rew a, f(c, u_2) \rew b, f(b, c) \rew u_2, f(a, b)
\rew f(a, u_2), f(b, b)
\rew f(u_2, u_2) \}$. In the original signature, the
locally-confluent rewrite system is:
$\{ f(b, c) \rew g(f(b, c)),  f(a, c) \rew a,  f(c,
g(f(b, c))) \rew b,  f(a, b) \rew f(a, g(f(b, c))), f(b, b) \rew
f(g(f(b, c)), g(f(b, c)))\}$. }

Consider a different ordering:
$f \gg g \gg a \gg b \gg u_1 \gg u_2$.

This gives rise to a related rewrite system in which $u_1$ is rewritten to $u_2$:
$\{ u_1 \rew u_2, g(u_2) \rew u_2, f(a, c) \rew a, f(c, u_2) \rew b, f(b, c) \rew u_2, f(a, b) \rew f(a, u_2), f(b, b) \rew f(u_2, u_2)\}.$ In the original signature, the rewrite system is:
$R'_S = \{ f(b, c) \rew g(f(b, c)), g(g(f(b, c))) \rew g(f(b, c)), f(a, c) \rew a, f(c, g(f(b, c))) \rew b, f(a, b) \rew f(a, g(f(b, c))), f(b, b) \rew f(g(f(b, c)), g(f(b, c)))\}$. The reader should compare $R'_S$ which seems more complex using complicated terms, with the above system $R_S$ with different orientation between $u_1, u_2.$

\ignore{
In the original signature, the system is:

Rule 4'
reduces to $4''.~ f(b, c, c) \rew b$, and $5'.~ u_2 \rew f(b, c)$.
The superposition between the rules $3, 4''$ is $f(a, b, c, c)$ with the
critical pair:$(f(a, b, c), f(a, b))$, which is trivial by rule 3. The
AC rewrite system $\{3, 4'', 5'\}$ along with $\{1, 2\}$ constitutes  
another reduced canonical rewrite system for the AC congruence
closure of the input equations.

In the original signature, the
above rewrite system is: $\{g(f(b, c))) \rew f(b, c), f(a, c)
\rew a, f(b, c, c) \rew b\}$. 
As the reader would notice this reduced canonical rewrite
system is much simpler than the other, generating simpler canonical forms.
The above example illustrates that sometimes it is better to keep the
original signature as much as possible by judiciously choosing
orderings relating new constants with original terms.}

\ignore{Putting them altogether gives a canonical rewrite system:
$ \{1. ~f(a, c) \rew a, ~2.~ f(b, c, c) \rew b, 3. g(u_1) \rew u_1$ with $f(b, c)$ serving as the 
definition of $u_1$. It can be used to generate canonical
signature of the above set of ground equations expressing using
$AC$ symbols.

On the original signature, the resulting system is:
$ \{1. ~f(a, c) \rew a, ~2.~ f(b, c, c) \rew b, 3. g(f(b, c)))
\rew f(b, c)$ which is confluent and terminating.}

\ignore{On the original signature,   it is:  $\{1.~ f(a, c) \rew a, ~2~ f(c, f(b, c)) \rew
  b, ~ 2.2~g( g(f(b, c))) \rew g(f(b, c)), ~3.~f(b, c) \rew
  g(f(b, c)), 4. f(a, b) \rew f(a, a, b)\}$ which is nonterminating
  but confluent, a situation similar to the uninterpreted case
  \cite{KapurRTA97}.}

{\bf Example 10:}
This example illustrates interaction
between congruence closures over uninterpreted symbols and AC symbols.
Let $S = \{ g(b) = a, g(d) = c,  a * c = c, b * c = b, a * b =
d\}$ where  $g$ is uninterpreted and $*$ is AC. Let $* \gg g \gg a \gg b \gg c \gg d$. Applying the steps of the algorithm, $R_U$, the congruence closure over uninterpreted symbols, is  $\{g(b) \rew  a, g(d) \rew c\}$,  Completion
on the $*$-equations using
degree lexicographic ordering, oriented as $\{ a * c \rew  c, b * c \rew b, a * b \rew d \}$, generates an implied constant equality
$b = d$ from the critical pair of $a * c \rew c, b * c
\rew b $. Using the rewrite rule $b \rew d$, the AC rewrite system reduces to: $R_* = \{ a * c \rew c, c * d \rew 
d, a * d \rew d\}$, which is canonical.

The implied constant rule $b \rew d$ is added to $R_C: \{ b \rew d \}$ and is also propagated to $R_U$, which
makes the left sides of $g(b) \rew a$ and $g(d) \rew c$ equal, 
generating 
another implied constant rule $a \rew c$ which is added to $R_C$: $\{a \rew c, ~b \rew d\}$. $R_U$ becomes $\{ g(d) \rew c \}.$ The implied constant rule is propagated to the AC rewrite
system on $*$. $R_C$ normalizes 
$R_*$ to $\{ c * c \rew c,~ c * d \rew d\}.$

The output of the algorithm is a canonical rewrite system: $\{ g(d) \rew c, ~b \rew d, ~a \rew c,~ c * c \rew
c,~ c * d \rew d \}$. In general, the propagation of equalities can result in the left sides of the rules in AC subsystems change, generating new superpositions.

{\bf Example 11:} Consider an example with multiple AC symbols and a single uninterpreted symbol from \cite{NR91}: $S = \{a + b = a * b, a *
c = g(d), d = d' \}$ with AC symbols $+, *$ and an uninterpreted symbol $g$. This example also illustrate flexibility in choosing orderings in the proposed framework.

Purification leads to introduction of new constants : $\{ 1.~ a + b = u_0, ~2. ~a * b = u_1,~3.~ a * c = u_2, 
4. ~g(d) = u_2, 5.~ d = d', 6.~ u_0 = u_1\}$. Depending upon a total ordering on constants
$a, b, c, d, d'$, there are thus many possibilities depending upon the desired
canonical forms. 

Recall that $a + b$ cannot be compared with $a*b$, however $u_0$ and $u_1$ can be compared.
If canonical forms are desired so that the canonical form of $a + b$ is $a
* b$, the ordering should include $u_0 ~\grt~ u_1$. 
Consider a total ordering $a \gg b \gg c \gg d \gg d' \gg u_2 \gg u_0 \gg u_1$.
Degree-lexicographic ordering is used on $+$ and $*$ monomials.
This gives rise to:
$R_C = \{6. ~u_0 \rew u_1, 5.~ d \rew d'\}, ~R_U = \{ 4'.~ g(d') \rew u_2 \}$ along with $R_+ = \{1.~ a + b \rew u_1\}$ and $R_* = \{2.~ a * b \rew u_1, 3.~ a*c \rew u_2 \}$.
The canonical rewrite system for $*$ is: $\{2.~ a * b \rew u_1, 3.~ a*c \rew u_2, 7. ~b * u_2 \rew c * u_1\}.$

The rewrite system $R_S = \{1.~ a + b \rew u_1, ~2. ~a * b \rew u_1,  ~3.~ a * c \rew u_2, 
~4'.~ g(d') \rew u_2, ~ 5.~ d \rew d', ~6.~ u_0 \rew u_1, ~7.~ b *
u_2 \rew c * u_1 \}$ is canonical. Both $a + b$ and $a * b$ have the same
normal form $u_1$ standing for $a * b$ in the original signature.

With $u_1 \gg u_0$, another canonical rewrite system $R_S = \{1.~ a + b \rew u_0, ~2. ~a * b \rew u_0,
~ 6'.~ u_1 \rew u_0,  ~3.~ a * c \rew u_2, 
~4.~ g(d') \rew u_2, ~ 5.~ d \rew d', ~7.~ b *
u_2 \rew c * u_0 \}$ in which $a + b$ and $a * b$ have the 
normal form $u_0$ standing for $a + b$, different from the one above. 

\ignore{Similarly, if $u_3 \grt u_2$ leading
to the orientation of 4.1 from right to left, with $g(f), a * c$
have the normal form $u_2 = a * c$. 
Each of these rewrite systems  generate different
canonical forms for interpreted, uninterpreted as well as mixed terms.

With $g \gg *$, on the original signature, the system is:
$\{ e \rew f, g(f) \rew a * c, a * b \rew a + b \}$, which is
terminating as well as confluent, but with $* \grt g$, the system
is: $\{ e \rew f, a * c \rew g(f), a * b \rew a + b, b * g(f)
\rew a * b * c \}$, which is locally confluent but not
terminating (consider for instance $a * b * c$).}

\section{A \Groebner Basis Algorithm as an AC Congruence Closure}

Buchberger's \Groebner basis algorithm when extended to 
polynomial ideals over integers \cite{KapurGBERJSC88} can be interestingly viewed as a special congruence closure algorithm, extending the congruence closure algorithm over multiple AC symbols $+$ and $*$ which in addition, satisfy the properties of a commutative ring with unit, especially the distributivity property. $+$ satisfies the properties of an Abelian group with
$0$ as its identity;
$*$ is AC with identity 1 and additionally with the distributivity property: $\forall x, y, z, ~
x * (y + z) = (x * y) + (x * z)$; $x * 0 = 0$ as well. To incorporate the distributivity property in the combination, additional superpositions and associated critical pairs must be considered.
Relationship between \Groebner basis algorithm and the Knuth-Bendix completion procedure has been investigated in \cite{KRK,Winkler,KRKW,MarcheNormalized}, but the proposed insight is novel. 
Below, we illustrate this new insight using an example from \cite{KapurGBERJSC88}. 

\ignore{The ring structure of polynomials gives rise to additional interaction when a canonical rewrite system for the congruence closure of $+$-monomials is combined with a canonical rewrite system for the congruence closure of $*$-monomials.
Along with $+$ being an Abelian group with the identity $0$ and Abelian monoid for $*$ with the identity 1 (i.e., AC $*$ with the identity 1),
there is the universal distributivity axiom relating  $*$ and $+$ along with $x * 0 = 0:$ $x * (y + z) = (x * y) + (x * z)$.

The axioms of an Abelian $+$ along with an Abelian monoid $*$
with $x * 0 = 0$ and the distributivity rule: $x * (y + z) = (x * y) + (x * z)$ can be used to generate a canonical rewrite system for
a commutative ring with unit and
are used for normalizing terms to polynomials.}

Abusing the notation, $x + -(y)$ will be written as $x - y$.
An additive monomial $c ~t$, where $c \neq 0$, is an abbreviation of repeating the monomial $t$, $c$
times; if $c$ is positive, say $3$, then $3~ t$ is an abbreviation
for $t + t + t$; if $c$ is negative, say $-2$, then $-2~
t$ is an abbreviation for $-t -t$. 
A $*$-monomial with the coefficient 1 is a pure term expressed in $*$, but
 a monomial $3~ y * y * y$ is a mixed term $y * y * y + y * y * y + y * y * y$
with $+$ as the outermost symbol which has $*$ monomials as subterms. 

It can be easily proved that
a ground term in $+, *$ and constants can be brought into a polynomial form using the rules
of a commutative ring with unity and by collecting identical $*$ monomials; for example,
$ y * ( y + y) + y * y$ is equivalent to its canonical form $3~ y * y * y$.
Similarly, a ground equation built using these function symbols can be transformed to a polynomial equation.
\ignore{In fact, that is
how the algorithms proposed in \cite{KapurMACSYMA,KapurGBERJSC88}, were motivated from the research reported in \cite{KapRTA85}.} 

\ignore{
The universal distributivity equation relating $*$ to $+$ leads to new superpositions and critical pair
generation among the rewrite rules of $*$ and $+$ subsystems which
now interact even though their outermost symbols are disjoint.
To illustrate, consider ground AC rules $ a * b \rew b, a + a \rew 0$; they superpose with the
universal 
distributivity axiom $u * (v + w) = (u * v) + (u * w)$ to
generate mixed terms as superposition $(a * b) * (a + a)$ leading to the critical pair in mixed terms
$(a * b * 0, b * (a + a))$;
after simplification using the distributivity axiom, this leads to a new rule on $+$: $b + b \rew
0$. New rules are generated this way for all possible
pairs of a rule with $*$ with a rule with a $+$ term. Critical pairs
generated due to such interactions can have mixed terms in two different AC symbols.
Such mixed ground equations must be purified by extending the signature in general. Again, no extension rules in contrast to other approaches are used.}
Associated with a finite set $S$ of ground equations in $+, *$ and constants, is a basis $B$ of polynomial equations over the constants. It can be proved that the congruence closure $CR1CC(S)$ of $S$ using the properties of a commutative ring with 1 corresponds to the ideal $I(B)$ generated by the polynomials in
the basis $B$; the membership test in $CR1CC(S)$ is precisely
the membership test in the associated ideal $I(B)$.

Because of the distributivity property, a mixed term of the form $m * (t_1 + t_2)$ can possibly be rewritten by the rules in a rewrite system of $*$-monomials as well as by the rules in a rewrite system of $+$ monomials; further it can be expanded using the distributivity property as well to be
$m * t_1 + m * t_2$ which can be further rewritten. This is captured below using new superpositions and the associated critical pairs due to the distributivity property. This is first illustrated using an example.

{\bf Example 12:} Consider an example \cite{KapurGBERJSC88}.
The input basis is: $ B = \{ 7~ x*x* y = 3~ x, 4 ~x * y* y = x
* y, 3 ~ y*y*y = 0\}$ of a polynomial ideal over the integers \cite{KapurGBERJSC88}. Ground terms in $B$ can be written in many different ways, resulting in different sets of ground equations on terms built using $+, 0, -, *, 1$; the above basis $B$ is one such instance. 

None of these equations is pure in $+$ or $*$; in other words, they are mixed terms both in $*$ and $+$.
Purification of the above equations using new constants leads to: $\{ 1.~ 7 ~u_1 = 3~ x, 2.~ 4~ u_2 = u_3, 3.~ 3 ~u_4 = 0$ with $4.~ x * x * y
= u_1, 5.~ x *
y * y = u_2,  6.~ x * y = u_3, 7.~ y * y * y = u_4\}.$ These equations with new constant symbols are now pure with $1, 2, 3$ purely in $+$ and $4, 5, 6, 7$ purely in $*$.
Let a total ordering on all constants be: $u_1
\gg u_2 \gg u_4 \gg u_3 \gg y \gg  x$. Extend it using the degree-lexicographic ordering
on $+$-monomials as well as $*$-monomials, 
Orienting $*$ equations: $R_* = \{ 4.~ x * x * y
\rew u_1, 5. ~x * y * y \rew u_2, 6. ~ x * y \rew u_3, 7.~ y*y*y \rew
u_4 \}$. Orienting $+$ equations, $R_+ = \{1.~ 7 ~u_1 \rew 3 ~x, 2.~ 4~ u_2 \rew
u_3, 3.~ 3 ~u_4 \rew 0\}$. 

{\bf SingleACCompletion} on the ground equations on $*$ terms generates a
reduced canonical rewrite system: $ R_{*} = \{ 6.~ x * y \rew u_3, 7.~ y * y * y \rew u_4, 8. ~ u_3 * y \rew
u_2, 9.~ u_3 * x \rew u_1, 10.~ u_2 * x \rew u_3*u_3, 11.~ u_1 * y \rew u_3 *
u_3, 12.~ u_1*u_4 \rew u_2 * u_2,  13.~ u_4 * x * x \rew
u_2 * u_3, 14.~ u_3*u_3*u_3 \rew u_1 * u_2 \}$. This system captures
relationships among all product monomials appearing in the input.
This computation can be
utilized in other examples specified using the same monomial set
and thus does not have to be repeated. 
\ignore{In the original signature, the rewrite system is: $\{ 1. 7~ x*x* y 
\rew 3 ~x, 2. 4~ x * y * y \rew x * y, 3. 3~ y * y * y \rew 0\}$, a subset
of the original input with all other rules corresponding to trivial equalities.} 


{\bf SingleACCompletion} generating the congruence closure of the ground equations on $+$ terms produces $R_+$ as a reduced canonical rewrite system:
$\{1.~ 7 ~u_1 \rew 3 ~x, 2.~ 4~ u_2 \rew u_3, 3. ~3 ~u_4 \rew 0\}$.

There are no constant rewrite rules to be propagated from one rewrite system to another. 

Due to the distributivity property relating $*$ and $+$, however, there are interactions between a rule on $*$ monomials and a rule on $+$ monomials. To illustrate, a $+$ rule 3, $ 3 ~u_4 \rew 0$ and a $*$ rule 12 interact $ ~u_1 * u_4 \rew u_2 * u_2$
because of the common constant $u_4$; Using the distributivity property,
$u_1 * (3~ u_4) = u_1 * (u_4 + u_4 + u_4) = u_1 * u_4 + u_1 * u_4 + u_1 * u_4$. Rule 3 applies
on the pure $+$ monomial in the mixed term $u_1 * (3 * u_4)$ giving another mixed term $u_1 * 0$ which rewrites to 0 using a property of CR1; rule 12 repeatedly applies on the pure $*$ monomials in the mixed term $u_1 * u_4 + u_1 * u_4 + u_1 * u_4$, producing $u_2 * u_2 + u_2 * u_2 + u_2 * u_2$. Consequently, a new equality on mixed terms $u_2 * u_2 + u_2 * u_2 + u_2 * u_2 = 0$ is derived.
It is purified resulting in new relations on $*$-monomials as well as $+$-monomials. Using a new constant, say $v_1$, for $u_2 * u_2$,a new $+$ rule
$ 3 v_1 \rew 0$ is generated and added to $R_+$; $u_2 * u_2 \rew v_1$ is added to $R_*$. 

The pair $(u_1 * 0, u_2 * u_2 + u_2 * u_2 + u_2 * u_2)$ is a new kind of critical pair, generated from the superposition 
$u_1 * (3~ u_4) = u_1 * (u_4 + u_4 + u_4) = u_1 * u_4 + u_1 * u_4 + u_1 * u_4$ due to the distributivity property. 

In general, given a $+$ rule $c~ u \rew r \in R_+$ and a $*$ rule  $u * m \rew r' \in R_{*}$ with a common constant $u$, where $c \in \mathbb{Z}-\{0\}$, $r$ is a $+$-monomial, $m$ is $*$-monomial, the superposition is $(c~ u) * m $ generating the critical pair $(r * m, c ~r')$ on mixed terms.
$r * m $ and $c~ r'$ are normalized using distributivity and other rules. The resulting mixed terms are purified by introducing new constants, leading to new equalities on $*$ monomials as well as $+$ monomials. They are, respectively, added to $R_*$ and $R_+$
after normalizing and orienting them using the monomial ordering.

It can be proved that only 
this superposition needs to be considered to check local confluence of $R_+ \cup R_*$ \cite{KapurWeirdGB21}.

For more examples of superpositions and critical pairs between rules in $R_+$ and $R_*$, consider
rules $3$ and $13$, which give a trivial critical pair.

Considering all such superpositions, the method terminates, leading to the
canonical rewrite system  $\{ 3~ x \rew
0, u_1 \rew 0, u_2 \rew u_3, 3 ~u_4 \rew 0 \}$ among other rules. When converted into the original signature, this is
precisely the \Groebner basis reported as generated using Kandri-Rody and Kapur's algorithm: $\{ 3~ x \rew 0, x*x * y
\rew 0, x * y * y \rew x * y, 3 ~y * y * y \rew 0 \}$ as reported
in \cite{KapurGBERJSC88}.

A very simple example taken from \cite{KapurGBERJSC88} consisting of 2 ground equations
$ 2~ x*x * y = y, ~ 3 ~ x * y * y = x$ is also discussed in \cite{MarcheNormalized} illustrating 
how normalized rewriting and completion can be used to simulate \Groebner basis computations on polynomial equations over the integers. Due to lack of space, a detailed comparison could not be included; an interested reader is invited to contrast the
proposed approach with the one there. 
Comparing with \cite{MarcheNormalized}, it is reported there that a completion procedure 
using normalized rewriting generated 90 critical pairs in contrast to AC completion procedure \cite{PS81} which computed 1990 critical pairs. In the proposed approach, much fewer critical pairs are generated in contrast,
without needing to use any AC unification algorithm or extension rules. 

The above illustrates the power and elegance of the proposed combination framework. The proposed approach can also be used to compute \Groebner basis of polynomial ideals with coefficients over finite fields such
as $\mathbb{Z}_p$ for a prime number $p$, as well as domain with zero divisor such as $\mathbb{Z}_4$ \cite{KapurZero}. For 
example, in case of $\mathbb{Z}_5$, another rule is added: $1 + 1 + 1 + 1 + 1 \rew 0$ added to the input basis.

\section{Conclusion}

A modular algorithm for computing the congruence closure of ground equations expressed using multiple AC function symbols and uninterpreted symbols is presented; an AC symbol could also have additional semantic properties including idempotency, nilpotency, identity, cancelativity as well as being an Abelian group.

The algorithms are derived by generalizing the
 framework first presented in
\cite{KapurRTA97} for generating the congruence closure of ground equations over uninterpreted symbols. The key insight from \cite{KapurRTA97}--flattening of subterms by introducing new constants and
congruence closure on constants, is generalized by flattening mixed AC terms and purifying them by introducing new constants to stand for pure AC terms in every (single) AC symbol. The result of this transformation on a set of equations on mixed ground terms is a set of constant equations, a set of flat equations relating a nonconstant term in an uninterpreted symbol to a constant, and a set of equations on pure AC terms in a single AC symbol possibly with additional semantic properties. Such decomposition and factoring enable using congruence closure algorithms for each of the pure subproblems on a single AC symbol and/or uninterpreted symbols independently, which propagate equalities on shared constants. Once the propagation of constant
equalities stabilizes (reaches a fixed point), the result is (i) a unique reduced canonical rewrite system for each subproblem and  finally, (ii) a unique reduced canonical rewrite system for the congruence closure of a finite set of ground equations over multiple AC symbols and uninterpreted symbols. The  algorithms extend easily when AC symbols have additional properties such as idempotency, identity, nilpotency, cancelativity as well as their combination, as well as an AC symbol satisfying the properties of a group.

Unlike previous proposals based on the Knuth-Bendix completion procedure and its generalizations to AC theories and other equational theories, the proposed algorithms do not need to use extensions rules, AC/E unification or AC/E compatible termination ordering on arbitrary terms. Instead, the proposed framework provides immense flexibility in using different termination orderings for terms in different AC symbols and uninterpreted symbols.

The modularity of the algorithms leads to easier and simpler  correctness and termination proofs  in contrast to those in \cite{BTVJAR03,MarcheNormalized}. The complexity of the procedure is governed by the complexity of generating a canonical rewrite system for AC ground equations on constants.

The proposed algorithm is a direct generalization of Kapur's algorithm for the uninterpreted case, which has been shown to be efficiently integrated into SMT solvers including BarcelogicTools \cite{NOEncode}. We believe that the AC congruence closure can also be effectively integrated into SMT solvers. Unlike other proposals, the proposed algorithm neither uses specialized AC compatible orderings on nonground terms nor extension rules often needed in AC/E completion algorithms and AC/E-unification, thus avoiding explosion of possible critical pairs for consideration. 

By-products of this research are new insights into combination algorithms for ground canonical rewrite systems-- (i) making orderings on ground terms more flexible and general in which constants are allowed to be greater than nonconstant terms as in orderings for \Groebner basis algorithms,
as well as (ii) a new kind of superpositions to incorporate semantic properties such as cancelativity, which can be viewed as a superposition between the left side of a rule with the right side of another rule.
An instantiation of the proposed combination framework to an AC symbol with identity and another AC symbol enriched with the properties of an Abelian group, leads to a new way to view a \Groebner basis algorithm for polynomial ideals over integers, as a combination congruence closure algorithm. Canonical bases of polynomial ideals over the integers (and more general rings with zero divisors) on extended signature can be computed for which rewriting (simplification) of a polynomial by another polynomial, when viewed on the original signature, need not terminate, opening possible new research directions.

Integration of additional semantic properties of AC symbols is done on a case by case basis by modifying critical pair constructions but avoiding extension rules or complex AC termination orderings compatible with these semantic properties. 
Using the proposed framework, other types of combinations of congruence closure algorithms in the presence of other semantic properties relating different AC symbols may be possible as well.
A general approach based on the properties of the semantic properties that avoids extension rules and sophisticated machinery due to the AC properties, needs further investigation.

\vspace*{1mm}
\noindent
{\bf Acknowledgments:} Thanks to the FSCD conference and LMCS referees for their reports which substantially improved the presentation. Thanks to Christophe Ringeissen for detailed discussions which clarified the contributions of the proposed approach in contrast to other approaches, particularly \cite{BaaderT97,ArmandoBRS09}. Thanks also to Jose Abel Castellanos Joo for help in formulating some of the algorithms. 


\bibliographystyle{alphaurl}
\bibliography{FSCD}

\end{document}